\newtheorem{theo}{Theorem}
\newtheorem{lemma}{Lemma}
\newtheorem*{remark}{Remark}
\theoremstyle{definition}
\tikzset{
    myboxcircle/.style={circle,draw=black,align=center},
}
\tikzset{
    myboxrounded/.style={rounded rectangle,draw=black,align=center},
}
\tikzset{
    myboxrectangle/.style={rectangle,draw=black,align=center},
}
\begin{document}

\title{
\textbf{Demand Analysis under Price Rigidity and Endogenous Assortment: An Application to China's Tobacco Industry}\thanks{We thank Victor Aguirregabiria, Loren Brandt, JoonHwan Cho, Xavier D’Haultfoeuille and seminar participants for their helpful comments. We owe special thanks to Geoffrey T. Fong, Anne C.K. Quah, Steve Xu, Gang Meng, and Ce Shang for providing access to International Tobacco Control survey data and insight into the tobacco industry. All errors that remain are our own.}
}

\author{
    Hui Liu
    \footnote{Ma Yinchu School of Economics, Tianjin University, Tianjin, China, \href{mailto: liuhui33@tju.edu.cn}{liuhui33@tju.edu.cn}.} 
    \\ \emph{Tianjin University}
    \and
    Yao Luo\footnote{Department of Economics, University of Toronto. 150 St. George Street, Toronto, ON, M5S 3G7, Canada, \href{mailto: yao.luo@utoronto.ca}{yao.luo@utoronto.ca}.} 
    \\ \emph{University of Toronto}
    }

\date{January 27, 2025 
}

\maketitle

\thispagestyle{empty} 

\begin{abstract}

We observe nominal price rigidity in tobacco markets across China. The monopolistic seller responds by adjusting product assortments, which remain unobserved by the analyst. We develop and estimate a logit demand model that incorporates assortment discrimination and nominal price rigidity. We find that consumers are significantly more responsive to price changes than conventional models predict. Simulated tax increases reveal that neglecting the role of endogenous assortments results in underestimations of the decline in higher-tier product sales, incorrect directional predictions of lower-tier product sales, and overestimation of tax revenue by more than 50\%. Finally, we extend our methodology to settings with competition and random coefficient models. 

\vspace{0.4cm}
\noindent
\textbf{Keywords:} Demand Estimation, Choice Set Heterogeneity, Profit Maximization Condition, Fordable Menu, Tobacco Industry

\vspace{0.4cm}
\noindent
\textbf{JEL codes:} C57, L66
\end{abstract}


\newpage
\setcounter{page}{1}

\setstretch{1.5}


\section{Introduction}

Demand estimation plays a critical role in guiding firms' optimal decision-making and allows researchers to evaluate the impact of public policy on social welfare across different market settings. Many studies implicitly assume that all consumers in a market face the same set of available choices. While this assumption simplifies the analysis, an expanding body of research (\cite{hickman2016demand}, \cite{crawford2021survey}) underscores its limitations in accurately reflecting retail environments shaped by factors like assortment variations, stockouts, and consumers’ limited consideration sets. Overlooking heterogeneity in choice sets can result in biased demand estimates and flawed conclusions regarding public policy outcomes. Addressing this issue is often complicated by the unavailability of detailed data on individual choice sets, making it challenging to account for unobserved choice set variation.


We propose a novel approach to demand estimation that accounts for unobserved choice set heterogeneity while relying solely on aggregate sales data. Unlike prior studies, our method infers unobserved choice sets for different consumers by utilizing the optimal assortments derived from the firm's profit maximization conditions. Many existing studies, such as \citet{crawford2016demand}, \citet{aguiar2021identification}, \citet{chib2023scalable}, and \citet{kashaev2024peer}, rely on observed individual choices, rendering their methods unsuitable for aggregate sales data, which is the focus of our research. We construct and estimate a model in which the seller manages product assortments but cannot adjust prices due to external factors like government regulations, resulting in price rigidity. Price rigidity is common in heavily regulated industries such as tobacco, alcohol, and pharmaceuticals, as documented by \citet{dubois2020effect} in Norway and \citet{iizuka2012physician} in Japan. It also frequently manifests as uniform pricing, as highlighted by studies such as \citet{dellavigna2019uniform}, \citet{adams2019zone}, \citet{NBERw26306}, and \citet{ozhegova2023assortment}. In our framework, the firm instead determines product assortments endogenously to maximize expected profits.

On the theoretical side, we demonstrate that a monopoly firm can streamline its assortment choices using a "foldable menu" approach. Specifically, we show that the firm can simplify its decision-making by selecting from a predetermined set of assortments: \{1\}, \{1,2\},..., up to \{1,2,...,J\}, when products are ranked in descending order of unit margin. For high-income consumers, offering only product 1 is the most profitable strategy, while for low-income consumers, a broader range of options should be made available.\footnote{We refer to this as the "foldable menu" model because it mirrors a seller tailoring a product menu to different consumer types: metaphorically "folding" away all but the most profitable option for high-income consumers and "unfolding" a wider array of choices for lower-income consumers.} This finding is particularly valuable, as the exhaustive search for optimal assortments becomes computationally burdensome with a large product set, given the exponential growth in the number of possible assortments as the product range increases.\footnote{For instance, a firm with 20 products would face $2^{20}$ (i.e., over 1 million) possible assortments.} Estimating the foldable menu model requires just one additional step beyond the standard demand framework: identifying the optimal assortment for each consumer. Finally, we expand our methodology to include settings with competition and random coefficients. We establish that the foldable menu result remains valid in these contexts, proving the existence of at least one equilibrium in assortment competition. This equilibrium can be efficiently identified through an iterative process. These findings emphasize broad relevance of our approach to various empirical applications.

On the empirical side, we apply the proposed model and estimation method to data from China’s tobacco industry. In line with standard industry practice, we categorize cigarettes into five tiers (Tiers I to V) based on price, from highest to lowest. Our empirical results show that consumers are significantly more responsive to price changes under the foldable menu model compared to models that ignore choice set heterogeneity. Additionally, endogenous assortment plays a key role in driving the sales of high-priced cigarettes in wealthier provinces, where low-priced options are less available compared to low-income provinces. Ignoring endogenous assortment would lead to an overestimation of the income effect, resulting in an understatement of consumers’ price sensitivity, as income and price sensitivity are negatively correlated. This application highlights how overlooking endogenous assortment can lead to biased demand estimates and, in turn, misinformed policy conclusions.

To further illustrate the impact of ignoring endogenous assortments, we calculate the percentage change in sales when the price of one tier increases by 1\%, holding other prices constant. We find that own-price elasticities are significantly higher than those suggested by the standard model. For instance, only Tier V is inelastic under our model, whereas the standard model would indicate that Tiers III through V are all inelastic. Moreover, lower tiers exhibit larger cross-price elasticities when the price of a higher tier increases. For example, a 1\% price increase in Tier I has the greatest effect on Tiers IV and V. In contrast, the standard model would predict a more uniform effect across competing tiers. We also calculate the percentage change in sales in response to a 1\% price increase across all tiers. While the sales of the top tiers decline, Tiers IV and V experience an increase due to the firm's endogenous adjustment of assortments, which increases the availability of lower-tier products. By contrast, the standard model would predict a decrease in sales across all tiers.

We use the estimated model to assess the impact of a tax increase (ranging from 5\% to 20\%) on cigarette sales and tax revenue. Under the foldable menu model, we observe a substantial decline in sales for Tiers I to III, while sales for Tiers IV and V increase. In contrast, the standard model predicts a decrease in sales across all tiers. Since higher tiers carry significantly higher unit taxes, the standard model overestimates the effect of the tax increase on total tax revenue by 58\% to 74\%. Additionally, we quantify the changes in sales for each cigarette tier when all tiers are made available. The results indicate that, when all tiers are accessible, sales in Tiers I to III decline, while Tiers IV and V see a significant boost. Overall, total cigarette sales are projected to increase by 37.81\%, though wholesale profits are expected to decline by approximately 13.71\% in this scenario.

\section*{Related Literature}

Many existing studies rely on the observation of individual choices and panel data for demand estimation with unobserved choice sets. For instance, \cite{chiang1999markov} proposes an Bayesian approach based on an MCMC sampling procedure of unobserved consideration set and applies it to scanner panel data. See also \citet{chib2023scalable}. \cite{goeree2008limited} simulates each consumer's unobserved consideration set based on their model-derived awareness of products through advertising. In recent developments, \cite{crawford2016demand} constructs "sufficient sets" of products to "difference out" unobserved consideration sets by exploiting information in repeated purchase decisions.  \cite{aguiar2021identification} utilizes panel data to jointly identify unobserved choice sets and preferences. \cite{lu2021estimating} develops a moment inequality estimator for partially identify unobserved choice sets based on the bounds on each consumer's actual choice set and the monotonicity of the choice function with respect to the choice set. \cite{kashaev2024peer} incorporates peer effects to model the formation of agents' unobserved consideration sets.\footnote{\cite{tenn2008biases} and \cite{bruno2008research} rely on market-level "All Commodity Value" (ACV) data to simulate unobserved choice sets. See also the theoretical literature about limited consumer attention, such as \cite{eliaz2011consideration}, \cite{masatlioglu2012revealed}, and \cite{manzini2014stochastic}.}

Our paper also relates to a body of research examining heterogeneous choice sets driven by firms' strategic assortment decisions. Like our work, studies such as \cite{draganska2009beyond}, \cite{iaria2014consumer}, \cite{musalem2015demand}, \cite{shah2015diagnosing}, and \cite{aguirregabiria2023identification} endogenize assortments in their models. However, these studies assume identical assortments within a market and primarily focus on product entry, whereas we allow for heterogeneous assortments within the same market. Additionally, their empirical analyses require observable assortment variation, while our approach does not. Another related study is \cite{dubois2020effect}, which also assumes price rigidity and arrives at a similar conclusion: in a competitive setting with random coefficients, each pharmacy chain selects assortments from two drugs, and no chain exclusively offers the drug with the lower margin. Our Lemma \ref{lemma 6} generalizes this result to accommodate any number of products. More recently, \cite{ozhegova2023assortment} illustrates how retailers strategically adjust product assortments to respond to local market conditions, such as average willingness to pay and competition, when prices are set at the national level. This study uses store-level sales data and treats the products with positive sales as the available assortment. In contrast, our paper utilizes sales data aggregated at the provincial level and endogenizes assortments across different areas within each market.

In addition to limited consideration and assortment variation, choice set heterogeneity can also arise from stockouts. Since stockouts occur at the point of sale, studies addressing demand estimation in the presence of stockouts typically rely on store-level sales and inventory data, whereas our study relies solely on aggregate sales data. For demand estimation with stockouts, see, for example, \cite{anupindi1998estimation}, \cite{aguirregabiria1999dynamics}, \cite{musalem2010structural}, and \cite{conlon2013demand}. Our paper also relates to recent research on uniform pricing, which describes the practice of charging the same prices across markets with varying demographics, preferences, and competition levels. For further discussion, see \cite{dellavigna2019uniform}, \cite{adams2019zone}, and \cite{NBERw26306}.

Our paper contributes to a growing body of literature that utilizes optimization conditions to recover unobserved model primitives. For instance, \cite{luo2018structural} uncovers the distribution of consumer types by leveraging the first-order conditions of both consumers and the firm, while \cite{d2019automobile} retrieves unobserved auto sales prices through dealers' first-order conditions. Similar approaches can be found in \citet{luo2023bundling} and \citet{aryal2024identification}. Additionally, \citet{dubois2018identifying} employ firms' optimality conditions to infer price ceilings for drug prices, which are unknown to the analyst. In our study, we apply the firm's profit maximization conditions to recover unobserved assortments.

Lastly, our paper relates to several studies on China's tobacco industry. For example, \cite{gao2012can} examines China's cigarette pricing mechanism, while \cite{goodchild2018early} and \cite{zheng2018tobacco} evaluate the impact of tax increases on the industry. Additionally, \cite{lance2004cigarette}, \cite{bishop2007chinese}, \cite{liu2015smokers}, and \cite{li2016heterogeneous} estimate demand elasticity in China's tobacco market.

The remainder of the paper is organized as follows: Section 2 provides a brief overview of the institutional background of China's tobacco industry. Section 3 introduces our theoretical model and outlines its key properties. Section 4 summarizes the data from China's tobacco industry and discusses the functional form used in the application. Section 5 applies the model to estimate cigarette demand in China and conducts post-estimation analyses. Section 6 discusses extensions to incorporate competition and random coefficients. Section 7 concludes.

\section{Institutional Background} 

The China National Tobacco Company (CNTC) and the State Tobacco Monopoly Administration (STMA) were established in 1983, with the mandate to oversee tobacco leaf production, cigarette manufacturing, and nationwide marketing in China. STMA, as a government agency, sets broad policies and delegates full authority to CNTC for decisions related to tobacco production quotas, leaf procurement, transportation, storage, and the manufacturing and sales of cigarettes \citep{hu2008tobacco}.

CNTC operates as a state-owned enterprise with a monopoly over the cigarette market, accounting for 98\% of domestic sales. For instance, in 2015, CNTC reported a gross profit of 303 billion RMB (approximately 48 billion USD), making it the most profitable company in the country. The tobacco industry, led by CNTC, also made a significant contribution to the national economy, generating 840.4 billion RMB (about 122 billion USD) in tax revenue from tobacco in 2015. Overall, CNTC contributes 7\%–10\% of the Chinese central government's annual revenue through tobacco taxes and profit-sharing, not including revenues shared with local governments \citep{xu2019impact}.

China uses a five-tier pricing structure for cigarettes, determined primarily by the quality of the tobacco leaves used. STMA formulates cigarette "allocation plans," setting the allocation price in the process. This price is what tobacco producers charge wholesalers. Since June 2001, the allocation price has replaced the producer price as the tax base for the cigarette ad valorem excise tax. The tax authority uses the allocation price for tax collection, while STMA employs it to categorize cigarettes into five tiers (Tier I to Tier V), with Tier I representing the highest-priced cigarettes \citep{gao2012can}.

STMA regulates cigarette retail prices by directly setting and controlling the retail prices of all cigarette brands through its licensing system, as only licensed retailers are legally permitted to sell cigarettes in China \citep{zheng2018tobacco}. By using the allocation price as a baseline, STMA can effectively control both nominal wholesale and retail prices by adjusting the allocation-wholesale margin or the wholesale-retail margin \citep{gao2012can}. Cigarettes are uniformly priced within each provincial market using the "Retail Cigarette Price Tag." Price adjustments are applied consistently across the entire province.\footnote{See \hyperlink{http://www.etmoc.com/Firms/Prodlist?Id=44170}{TobaccoMarket (2021)} for an article on price adjustments.} The printing and distribution of these tags are managed by municipal-level tobacco monopoly authorities, with the method and content of price displays overseen by municipal price regulatory agencies. Unauthorized printing or distribution of these tags is strictly prohibited. Vendors who manipulate prices may face severe penalties, including license suspension. Additionally, price regulatory departments are authorized to impose sanctions in accordance with the relevant provisions of the "People's Republic of China Price Law."\footnote{See \hyperlink{http://www.etmoc.com/look/Lawlist?Id=1804}{TobaccoMarket (2007)} for an article on price regulations.}

Despite the availability of all cigarette tiers at the provincial level, there is widespread acknowledgment of the scarcity of low-priced cigarettes (priced under ¥10) in many local markets.\footnote{See \hyperlink{http://www.etmoc.com/market/Newslist?Id=20422}{TobaccoMarket (2009)} for an article on scarcity of low-priced cigarettes.} This scarcity coincides with CNTC's Premiumization Strategy, introduced in 2009, which aims to encourage smokers to upgrade to more expensive brands that significantly contribute to tax revenue and industry profits \citep{xu2019impact}. While demand for low-priced cigarettes persists, limiting their availability is an effective approach to profit maximization under price rigidity. In rural areas, where lower-income smokers predominate, the focus remains on providing lower-tier cigarettes to boost overall sales, while in urban markets, the emphasis is on selling higher-tier products to enhance profitability. This distribution strategy explains the disparity between the availability of low-priced cigarettes at the provincial level and their scarcity in local communities. 

Industry insiders from the International Tobacco Control confirm these observations, noting that low-priced cigarettes are intentionally made more accessible primarily in low-income communities. This aligns with the government's supervision of China's tobacco industry, where the CNTC, which oversees distribution and sales, holds the authority to determine the assortment of cigarettes supplied to licensed retailers. The strategy of limiting low-priced cigarettes to low-income communities is consistent with the theoretical predictions of our model, where CNTC endogenously selects assortments of cigarettes to maximize profits.

\section{Theoretical Model}

\subsection{The Unobserved Assortment Discrimination Model} 

Building on the previously discussed institutional context, we develop a logit model that incorporates both price rigidity and endogenous product assortment. First, firms are required to maintain uniform pricing within each market. Given the strict regulation of cigarette prices in China, the assumption of nominal price rigidity is likely to hold, though we allow for price variation across different markets. Second, firms have access to local demographic information, enabling them to make endogenous decisions about product assortments that maximize expected profits from each consumer. Consequently, our model features product assortments that vary within a market, which remain unobserved by researchers. This contrasts with traditional demand models, where the same set of products is assumed to be universally available to all consumers within a market. 

We assume that consumers select products exclusively from locally available assortments without searching across locations---a behavior well-documented in China’s tobacco industry. With online cigarette sales prohibited in China, consumers are restricted to purchasing from local retailers, such as convenience stores, gas stations, and smoke shops. The combination of relatively low and rigid cigarette prices, along with the costs associated with searching and transportation, further discourages consumers from seeking products outside their immediate vicinity. Additionally, the endogenous assortment of low-priced cigarettes, which are more common in low-income and rural areas, limits their availability for urban consumers. The absence of such products locally often extends to nearby communities as well. These factors collectively support the assumption that Chinese smokers predominantly choose from locally available assortments, affirming the relevance and applicability of the foldable menu model in this context.

Hereafter, a "market" refers to a province-year, a "consumer" represents a typical (potential) smoker in a local community, and a "product" corresponds to a cigarette tier. Each "firm" is one of the provincial CNTC branches responsible for distributing various assortments of cigarettes across areas within the respective provincial market. For clarity, we will focus on a single market and temporarily omit market-specific subscripts.

A single firm serves the market by offering a range of product assortments, drawn from a pool of $J$ distinct products, to $I$ consumers. While the firm does not control prices, it has the flexibility to assign different assortments to each consumer. As a result, each consumer's choices are limited to the products included in their assigned assortment. The utility that consumer $i$ derives from selecting product $j$ follows the standard form:
\begin{align}
u_{ij} = \delta_{ij} + \epsilon_{ij}, \; \forall j\in \mathcal{J}_i \cup \{0\}, 
\end{align}
where $\delta_{ij}$ denotes the deterministic component of utility, $\epsilon_{ij}$ represents the idiosyncratic error term, and $\mathcal{J}_i$ is the product assortment available to consumer $i$. We assume that the firm can estimate $\delta_{ij}$ based on observable product characteristics and consumer demographics. However, the firm only has knowledge of the distribution of the idiosyncratic error terms $\epsilon_{ij}$, which, as is standard in the literature, follows an extreme value type-I distribution. Alternative $0$, i.e. the outside alternative, represents the option to not purchase any of the available products, and we normalize the utility to $u_{i0} = \epsilon_{i0}$. 

Consumer $i$ selects product $j$ if and only if
\begin{equation*}
u_{ij} \geq u_{ij'}, \; \forall j' \neq j,
\end{equation*}


The firm's objective is to provide consumer $i$ with the assortment that maximizes expected profit, denoted as the optimal assortment $\mathcal{J}^*_i$. Let $\vec{J} = \{1,2,...,J\}$ be the complete assortment and $\pi_j$ be the per unit profit margin of product $j$. For any subset $\mathcal{J} \subseteq \vec{J}$, the expected profit from consumer $i$ is:
\begin{equation}
E\Pi_{i,\mathcal{J}} = \sum_{j \in \mathcal{J}} \underbrace{\pi_j}_{\text{profit margin}} \underbrace{\frac{\exp(\delta_{ij})}{1 + \sum_{j \in \mathcal{J}}\exp(\delta_{ij})}}_\text{choice probability}.
\end{equation}
where the choice probability follows the standard logit properties. 

Searching for the optimal assortment $\mathcal{J}^*_i$ among all possible assortments can be computationally cumbersome, due to the aforementioned exponential nature of assortments. However, the following section demonstrates that the firm only needs to choose from a foldable menu, wherein the number of assortments equals the number of products.

\subsection{Foldable Menu}

We first prove that the optimal assortments can only occur in the foldable menu containing complete sets $\{\vec{1},\vec{2},\cdots,\vec{J}\}$.\footnote{A similar result is shown in \cite{talluri2004revenue} for revenue management. Here, we prove the result from a profit maximization perspective.} Without loss of generality, we can arrange the products in descending order as $\pi_1 \geq \pi_2 \geq ... \geq \pi_J \geq \pi_0 = 0$.

\begin{lemma}
    An optimal assortment must contain product 1.
\end{lemma}

\begin{proof}
    See Appendix \ref{Lemma 1}.
\end{proof}

\begin{lemma}
   If $E \Pi_{i,\mathcal{J}\cup\{m\}} \leq E \Pi_{i,\mathcal{J}}$, then $E \Pi_{i, \mathcal{J}\cup\{m,m^{\prime}\}}  \leq E \Pi_{i,\mathcal{J}\cup\{m\}}$ for all $m^{\prime}>m$. If $E \Pi_{i,\mathcal{J}\cup\{m,m^{\prime}\}} \geq E \Pi_{i,\mathcal{J}\cup\{m\}}$, then $E \Pi_{i, \mathcal{J}\cup\{m\}}  \geq E \Pi_{i,\mathcal{J}}$ for all $m < m^{\prime}$.
\end{lemma}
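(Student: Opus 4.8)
The plan is to work directly with the logit expected-profit formula and express the change in expected profit when a product is added to an assortment in a form that makes the monotonicity transparent. Fix a consumer $i$ and an assortment $\mathcal{J}$, and write $S = 1 + \sum_{k \in \mathcal{J}} \exp(\delta_{ik})$ for the logit denominator over $\mathcal{J}$, $w_j = \exp(\delta_{ij})$ for the exponentiated mean utility of product $j$, and $\bar\pi_{\mathcal{J}} = E\Pi_{i,\mathcal{J}}$ for the baseline expected profit. The key computation is: adding product $m$ (with $m \notin \mathcal{J}$) changes the denominator to $S + w_m$ and contributes a new term $\pi_m w_m/(S+w_m)$, so
\begin{equation*}
E\Pi_{i,\mathcal{J}\cup\{m\}} - E\Pi_{i,\mathcal{J}} = \frac{\pi_m w_m}{S + w_m} - \frac{w_m}{S(S+w_m)}\sum_{k\in\mathcal{J}}\pi_k w_k = \frac{w_m}{S+w_m}\bigl(\pi_m - \bar\pi_{\mathcal{J}}\bigr).
\end{equation*}
So the sign of the gain from adding $m$ is exactly the sign of $\pi_m - \bar\pi_{\mathcal{J}}$, where $\bar\pi_{\mathcal{J}}$ is the expected profit (per potential consumer) over the current assortment. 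This single identity is the engine for both halves of the lemma.

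**First half.** Suppose $E\Pi_{i,\mathcal{J}\cup\{m\}} \le E\Pi_{i,\mathcal{J}}$; by the identity this means $\pi_m \le \bar\pi_{\mathcal{J}}$. I want to show that for $m' > m$ (hence $\pi_{m'} \le \pi_m$ by the descending ordering), adding $m'$ to $\mathcal{J}\cup\{m\}$ is also non-improving, i.e. $\pi_{m'} \le \bar\pi_{\mathcal{J}\cup\{m\}}$. The crucial sub-claim is that $\bar\pi_{\mathcal{J}\cup\{m\}} \ge \pi_m$: intuitively, if adding $m$ did not raise total expected profit, then the new average margin weighted by choice probabilities cannot be below $\pi_m$ — in fact $\bar\pi_{\mathcal{J}\cup\{m\}}$ lies between $\pi_m$ and $\bar\pi_{\mathcal{J}}$. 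Concretely, $\bar\pi_{\mathcal{J}\cup\{m\}} = \bar\pi_{\mathcal{J}} \cdot \frac{S}{S+w_m} + \pi_m \cdot \frac{w_m}{S+w_m} - (\text{the correction already absorbed})$; more cleanly, from the identity $E\Pi_{i,\mathcal{J}\cup\{m\}} = E\Pi_{i,\mathcal{J}} + \frac{w_m}{S+w_m}(\pi_m - \bar\pi_{\mathcal{J}})$ and the fact that the new denominator is $S+w_m$, one gets $\bar\pi_{\mathcal{J}\cup\{m\}} = \frac{S\,\bar\pi_{\mathcal{J}} + w_m \pi_m}{S + w_m}$, a genuine weighted average of $\bar\pi_{\mathcal{J}}$ and $\pi_m$. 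Since $\pi_m \le \bar\pi_{\mathcal{J}}$, this average is $\ge \pi_m \ge \pi_{m'}$, so adding $m'$ is non-improving.

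**Second half and the obstacle.** The second statement is the contrapositive-flavored converse: if $E\Pi_{i,\mathcal{J}\cup\{m,m'\}} \ge E\Pi_{i,\mathcal{J}\cup\{m\}}$ then $E\Pi_{i,\mathcal{J}\cup\{m\}} \ge E\Pi_{i,\mathcal{J}}$ for $m < m'$. By the first half (in contrapositive form): if we had $E\Pi_{i,\mathcal{J}\cup\{m\}} < E\Pi_{i,\mathcal{J}}$, then since $m' > m$ the first half would force $E\Pi_{i,\mathcal{J}\cup\{m,m'\}} < E\Pi_{i,\mathcal{J}\cup\{m\}}$, contradicting the hypothesis. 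So the second half is immediate once the first is established. The one place to be careful — and the main obstacle — is the bookkeeping in the weighted-average step: I must verify that $\bar\pi_{\mathcal{J}\cup\{m\}} = (S\bar\pi_{\mathcal{J}} + w_m\pi_m)/(S+w_m)$ holds as an exact identity (it does, since numerator of $\bar\pi$ over $\mathcal{J}\cup\{m\}$ is $\sum_{k\in\mathcal{J}}\pi_k w_k + \pi_m w_m = S\bar\pi_{\mathcal{J}} + \pi_m w_m$ and the denominator is $S + w_m$), and then track the direction of all inequalities through the descending order $\pi_1 \ge \cdots \ge \pi_J \ge 0$. There is no analytic difficulty beyond this; the content is entirely in the observation that the marginal value of adding a product is governed by comparing its margin to the choice-probability-weighted average margin of the incumbent set, which is itself monotone under set expansion in the right direction.
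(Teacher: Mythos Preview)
Your proof is correct and follows essentially the same route as the paper's. Both arguments rest on the identity $E\Pi_{i,\mathcal{J}\cup\{m\}} - E\Pi_{i,\mathcal{J}} = \frac{w_m}{S+w_m}(\pi_m - \bar\pi_{\mathcal{J}})$ (derived in the paper's proof of Lemma~1) and on the weighted-average representation $\bar\pi_{\mathcal{J}\cup\{m\}} = \frac{S\bar\pi_{\mathcal{J}} + w_m\pi_m}{S+w_m}$, from which $\pi_m \le \bar\pi_{\mathcal{J}}$ implies $\bar\pi_{\mathcal{J}\cup\{m\}} \ge \pi_m \ge \pi_{m'}$. The only cosmetic difference is that the paper proves the second half directly (starting from $\pi_{m'} \ge \bar\pi_{\mathcal{J}\cup\{m\}}$ and rearranging the same weighted-average expression to get $\pi_m \ge \bar\pi_{\mathcal{J}}$), whereas you obtain it as the contrapositive of the strict version of the first half; both are equivalent and equally rigorous.
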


\begin{proof}
    See Appendix \ref{Lemma 2}.
\end{proof}

Lemma 2 implies that among $\{\vec{1},\vec{2},\cdots,\vec{J}\}$, the optimal assortment $\vec{m}$ is defined as
\begin{equation}
    m = 
\arg \max_j \quad \{\pi_{j}-E\Pi_{i,\overrightarrow{j-1}}\geq 0, \pi_{j+1}-E\Pi_{i,\vec{j}}<0\}.
\end{equation}

\begin{lemma}
    When $\vec{m}$ is optimal among $\{\vec{1},\vec{2},\cdots,\vec{J}\}$, $\vec{m}$ dominates all non-consecutive assortments.
\end{lemma}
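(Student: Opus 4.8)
I would prove the stronger fact that $\vec{m}$ weakly dominates \emph{every} assortment $\mathcal{J}\subseteq\vec{J}$; the claim for non-consecutive $\mathcal{J}$ is then a special case. Everything rests on one elementary computation (the same cross-multiplication used in the proof of Lemma 2): for any assortment $\mathcal{J}$ and any $p\notin\mathcal{J}$,
\begin{equation*}
E\Pi_{i,\mathcal{J}\cup\{p\}}\geq E\Pi_{i,\mathcal{J}}\quad\Longleftrightarrow\quad \pi_p\geq E\Pi_{i,\mathcal{J}},
\end{equation*}
and, iterating, for disjoint sets $\mathcal{J}$ and $R$ the quantity $E\Pi_{i,\mathcal{J}\cup R}$ is a convex combination of $E\Pi_{i,\mathcal{J}}$ and the sales-weighted average of $\{\pi_j\}_{j\in R}$, so that $E\Pi_{i,\mathcal{J}\cup R}\leq\max\{E\Pi_{i,\mathcal{J}},\,\max_{j\in R}\pi_j\}$ whenever $R\neq\emptyset$. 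I treat these as the known ``marginal-value'' facts and do not re-derive them.

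\textbf{Reduction to subsets of $\vec{m}$.} Set $\rho:=E\Pi_{i,\vec{m}}$. From the definition of $m$ we have $\pi_j\geq E\Pi_{i,\overrightarrow{j-1}}$ for every $j\leq m$, and $\pi_j\leq\pi_{m+1}<\rho$ for every $j>m$; and from Lemma 2 the sequence $E\Pi_{i,\vec{1}},E\Pi_{i,\vec{2}},\dots$ is single-peaked at $\vec{m}$, so $E\Pi_{i,\vec{k}}\leq\rho$ for all $k$. Given an arbitrary $\mathcal{J}$, write $\mathcal{J}=T\cup U$ with $T=\mathcal{J}\cap\{1,\dots,m\}$ and $U=\mathcal{J}\cap\{m+1,\dots,J\}$. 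Since every margin appearing in $U$ is strictly below $\rho$, the convex-combination bound gives $E\Pi_{i,\mathcal{J}}\leq\max\{E\Pi_{i,T},\rho\}$. Hence it suffices to show $E\Pi_{i,T}\leq\rho$ for every $T\subseteq\{1,\dots,m\}$.

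\textbf{Filling the gaps.} Fix $T\subseteq\{1,\dots,m\}$ and list the missing indices $\{1,\dots,m\}\setminus T=\{p_1<p_2<\dots<p_s\}$. Insert them one at a time in this increasing order, writing $T_k=T\cup\{p_1,\dots,p_k\}$, so $T_0=T$ and $T_s=\vec{m}$. The crucial structural observation is that the set just before the $k$-th insertion splits as $T_{k-1}=\{1,\dots,p_k-1\}\sqcup R_k$ with $R_k:=T\cap\{p_k+1,\dots,m\}$: every index below $p_k$ that $T_{k-1}$ could contain is forced to be present (it lies in $T$ or is one of $p_1,\dots,p_{k-1}$), while above $p_k$ only original members of $T$ survive. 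Because all margins in $R_k$ are $\leq\pi_{p_k}$ and $E\Pi_{i,\overrightarrow{p_k-1}}\leq\pi_{p_k}$ (here $p_k\leq m$ and the definition of $m$ enter), the convex-combination bound yields $E\Pi_{i,T_{k-1}}\leq\pi_{p_k}$, and then the marginal-value equivalence gives $E\Pi_{i,T_k}\geq E\Pi_{i,T_{k-1}}$. Chaining, $E\Pi_{i,T}\leq E\Pi_{i,T_s}=E\Pi_{i,\vec{m}}=\rho$, which is what remained, and combining with the reduction step completes the proof.

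The step I expect to be the genuine obstacle is this last one: Lemmas 1--2 only compare \emph{consecutive} assortments, so non-consecutive subsets of $\{1,\dots,m\}$ must be handled from scratch, and the monotonicity of the increasing-order gap-filling is not self-evident---it hinges on the decomposition $T_{k-1}=\{1,\dots,p_k-1\}\sqcup R_k$ together with the fact that the leftover high-index products in $R_k$ carry margins no larger than the threshold $\pi_{p_k}$ that the definition of $m$ guarantees dominates $E\Pi_{i,\overrightarrow{p_k-1}}$. Stripping the tail $U$ beyond index $m$ first is equally essential, since inserting those low-margin products would in general lower expected profit.
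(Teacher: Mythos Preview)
Your argument is correct and takes a genuinely different route from the paper. The paper proceeds by brute-force case analysis: it writes an arbitrary non-consecutive assortment as $\vec{n}\backslash\mathbf{k}$ (with $n=m$, $n>m$, or $n<m$), cross-multiplies $E\Pi_{i,\vec{m}}-E\Pi_{i,\vec{n}\backslash\mathbf{k}}$ over a common denominator, and then checks term-by-term that the numerator is nonnegative, invoking the inequalities $\pi_m\geq E\Pi_{i,\vec{k}}$ for $k<m$ and $E\Pi_{i,\vec{m}}>\pi_r$ for $r>m$ at the leaf level. Your proof instead isolates two structural facts---the marginal-value equivalence and the convex-combination bound $E\Pi_{i,\mathcal{J}\cup R}\leq\max\{E\Pi_{i,\mathcal{J}},\max_{j\in R}\pi_j\}$---and uses them twice: first to strip the tail $U=\mathcal{J}\cap\{m+1,\dots,J\}$, then to fill the gaps of $T=\mathcal{J}\cap\{1,\dots,m\}$ in increasing order, where the decomposition $T_{k-1}=\overrightarrow{p_k-1}\sqcup R_k$ is the key observation that makes each insertion monotone. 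What your approach buys is a shorter, conceptually transparent proof that in fact yields the stronger statement ($\vec{m}$ dominates \emph{all} assortments, not only non-consecutive ones) without additional work; what the paper's approach buys is that it never leaves the concrete algebra and so requires no auxiliary lemma like the convex-combination bound---everything is visible in the expanded numerators.
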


\begin{proof}
    See Appendix \ref{Lemma 3}.
\end{proof}

\begin{theo}
\label{theo1}
    The firm's optimal assortment for consumer $i$ belongs to $\{\vec{1},\vec{2},\cdots,\vec{J}\}$.
\end{theo}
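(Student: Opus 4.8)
The plan is to glue Lemmas 1--3 together so that the assortment $\vec{m}$ characterized in equation~(3) is shown to weakly dominate \emph{every} subset of $\vec{J}$; since $\vec{m}$ itself lies in the foldable menu $\{\vec{1},\dots,\vec{J}\}$, this is exactly the claim.

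First I would use Lemma~2 to identify the best assortment \emph{within} the foldable menu. Recall the standard logit fact that adjoining a product to an assortment weakly raises expected profit exactly when its margin is at least the current expected profit, so the condition in~(3) says that adding product $m$ to $\overrightarrow{m-1}$ does not lower profit while adding product $m+1$ to $\vec{m}$ strictly does. Taking $\mathcal{J}=\overrightarrow{j-1}$ and $m=j$ in the first implication of Lemma~2 and iterating shows that once the sequence $E\Pi_{i,\vec{1}}, E\Pi_{i,\vec{2}},\dots,E\Pi_{i,\vec{J}}$ starts to weakly decrease it never increases again; the second implication, applied symmetrically, shows it is weakly increasing up to that point. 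Hence $j\mapsto E\Pi_{i,\vec{j}}$ is unimodal, and the index $m$ from~(3) is a global maximizer over the menu: $E\Pi_{i,\vec{m}}\ge E\Pi_{i,\vec{j}}$ for all $j$.

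Next I would dispose of the remaining subsets. Any $\mathcal{J}$ not containing product $1$ is non-optimal by Lemma~1 (equivalently, $E\Pi_{i,\mathcal{J}}$ is a convex combination of $\pi_1,\dots,\pi_J,0$, hence $\le\pi_1$, so adjoining product $1$ weakly raises expected profit). Every subset that does contain product $1$ is either $\vec{j}$ for some $j$ --- already covered above --- or else skips at least one product, i.e.\ it is one of the non-consecutive assortments, and for those Lemma~3 gives exactly that $\vec{m}$ dominates them. Combining the three cases, $E\Pi_{i,\vec{m}}\ge E\Pi_{i,\mathcal{J}}$ for every $\mathcal{J}\subseteq\vec{J}$, so an optimal assortment can be taken to be $\vec{m}\in\{\vec{1},\dots,\vec{J}\}$.

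The step I expect to be the crux is the first: converting the single-crossing content of Lemma~2 into a genuinely global statement that $\vec{m}$ beats \emph{all} of $\vec{1},\dots,\vec{J}$, which requires running the induction that propagates the decrease (resp.\ increase) along the menu with the right choices of $\mathcal{J}$, $m$, and $m'$. Once unimodality of the menu is established, the rest is bookkeeping --- Lemma~1 removes assortments missing product $1$ and Lemma~3 removes the non-consecutive ones, leaving $\vec{m}$ on top. A minor point worth stating explicitly is whether Lemma~3's ``non-consecutive assortments'' is read to include subsets that omit product $1$; invoking Lemma~1 first makes the argument robust either way.
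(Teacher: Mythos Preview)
Your proposal is correct and follows essentially the same route as the paper, which presents Theorem~\ref{theo1} as an immediate consequence of Lemmas~1--3 without a separate written proof. You simply make explicit the unimodality step that the paper leaves implicit when it asserts ``Lemma~2 implies that among $\{\vec{1},\ldots,\vec{J}\}$ the optimal assortment is $\vec{m}$''; your observation about handling assortments missing product~1 via Lemma~1 before invoking Lemma~3 is also apt, since the paper's proof of Lemma~3 only treats sets of the form $\vec{n}\backslash\mathbf{k}$ with $1\notin\mathbf{k}$.
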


The above lemmas imply that the consumer-specific optimal assortment is $\mathcal{J}^*_i = \vec{j_i^*}$, where 
\begin{equation}
    j_i^* = 
\arg \max_j \quad \{\pi_{j}-E\Pi_{i,\overrightarrow{j-1}}\geq 0, \pi_{j+1}-E\Pi_{i,\vec{j}}<0\}.
\end{equation}

\subsection{A cutoff property}
\label{Sampling}

The firm can select optimal assortments solely from the foldable menu, as established by Theorem \ref{theo1}. When products exhibit distinct price-cost margins—specifically, when $\pi_1 > \pi_2 > ... > \pi_J > \pi_0 = 0$—the assortment $\vec{j_i^*}$ strictly dominates all other assortments. This dominance implies that the optimal assortment for each consumer is unique within the foldable menu, which in turn facilitates the recovery of unobserved choice set heterogeneity. This theoretical framework aligns with our approach in estimating cigarette demand in China's tobacco industry, where we categorize cigarettes into Tier I through V based on price, from high to low. Notably, high-priced cigarettes generally yield higher unit margins compared to lower-priced ones at both the wholesale and retail levels, reinforcing the industry-specific reality of non-identical price-cost margins across cigarette tiers.

We can rewrite the utility function as
\[
u_{ij} = \delta_{ij}  + \epsilon_{ij} = \gamma_{ij} - \alpha_i p_j + \epsilon_{ij}
\]
where $\delta_{ij}$ is decomposed into the utility from consumption $\gamma_{ij}$ and the disutility from paying the price, represented by $-\alpha_i p_j$. To determine the optimal assortment for consumer $i$, we calculate $J-1$ individual-specific cutoffs. Specifically, the optimal assortment is $\vec{j}$ if $\alpha_i \in (c^i_{j-1,j}, c^i_{j,j+1}]$, where $c^i_{0,1} = -\infty$ and $c^i_{J,J+1} = +\infty$.

In fact, the cutoff $c^i_{j-1,j}$ is the $\alpha_i$ such that    
\begin{equation}
\label{eq:cutoff}
    \pi_j = \sum_{j'=1}^{j-1} \pi_{j'} \frac{\exp(\gamma_{ij'} - \alpha_i p_{j'})}{1 + \sum_{j''=1}^{j-1}  \exp(\gamma_{ij''} - \alpha_i p_{j''})} = E\Pi_{i,\overrightarrow{j-1}}.
\end{equation}
This equation can be solved explicitly when there are only two products: $c^i_{1,2} = \frac{1}{p_1}[\gamma_{i1} - \log \frac{\pi_2}{\pi_1 - \pi_2}]$. Intuitively, consumer $i$ is more likely to face a simple assortment with just product 1 when (i) product 1 becomes more attractive or less expensive; (ii) product 1's profit margin is higher; (iii) product 2's profit margin is lower.  

\begin{lemma}
    $c^i_{j-1,j}$ is strictly increasing in $j$ if the products do not have identical price-cost margins.
\end{lemma}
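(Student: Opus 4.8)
The plan is to treat each expected-profit function as a function of the single price-sensitivity index $\alpha_i$ and to exploit an elementary ``convex combination'' recursion linking consecutive nested assortments. Write $e_{im}=\exp(\gamma_{im}-\alpha_i p_m)$ and $D_{ik}=1+\sum_{m=1}^{k}e_{im}$, so $E\Pi_{i,\vec{k}}(\alpha_i)=\bigl(\sum_{m=1}^{k}\pi_m e_{im}\bigr)/D_{ik}$. A one-line rearrangement gives
\[
E\Pi_{i,\vec{j}}(\alpha_i)=\frac{D_{i,j-1}}{D_{ij}}\,E\Pi_{i,\overrightarrow{j-1}}(\alpha_i)+\frac{e_{ij}}{D_{ij}}\,\pi_j ,
\]
i.e.\ $E\Pi_{i,\vec{j}}$ is a weighted average of $E\Pi_{i,\overrightarrow{j-1}}$ and $\pi_j$ with positive weights. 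In particular, at any $\alpha_i$ with $E\Pi_{i,\overrightarrow{j-1}}(\alpha_i)=\pi_j$ — i.e.\ at $\alpha_i=c^i_{j-1,j}$, by \eqref{eq:cutoff} — this identity forces $E\Pi_{i,\vec{j}}(c^i_{j-1,j})=\pi_j$. This single fact is what I would feed into the comparison of $c^i_{j-1,j}$ with $c^i_{j,j+1}$.

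The technical heart is to show that each map $\alpha_i\mapsto E\Pi_{i,\vec{k}}(\alpha_i)$ is continuous and strictly decreasing, with limit $\pi_1$ as $\alpha_i\to-\infty$ and $0$ as $\alpha_i\to+\infty$. Here I use the maintained ordering: margins are not all identical, hence $\pi_1>\dots>\pi_J>0$, and — as for cigarette tiers indexed from the most to the least expensive — the prices are correspondingly ordered, $p_1>\dots>p_J>0$. Strict monotonicity is where the real work lies: differentiating $N_k/D_{ik}$ with $N_k=\sum_{m=1}^{k}\pi_m e_{im}$ and clearing the positive denominator, the numerator of $\tfrac{d}{d\alpha_i}E\Pi_{i,\vec{k}}$ equals $-\sum_{m}\pi_m p_m e_{im}+\tfrac12\sum_{m,\ell}(p_\ell-p_m)(\pi_m-\pi_\ell)e_{im}e_{i\ell}$; the first sum is strictly negative, and since $p$ and $\pi$ are comonotone each summand of the second term is nonpositive, so the derivative is strictly negative everywhere. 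Continuity plus the two limits then makes each defining equation uniquely solvable: for $2\le j\le J$ the equation $E\Pi_{i,\overrightarrow{j-1}}(\alpha_i)=\pi_j$ has a unique (finite) root because $0<\pi_j<\pi_1$, so $c^i_{j-1,j}$ is well defined, and likewise $c^i_{j,j+1}$ is the unique root of $E\Pi_{i,\vec{j}}(\alpha_i)=\pi_{j+1}$.

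With these pieces the conclusion is immediate. For $2\le j\le J-1$, the identity above gives $E\Pi_{i,\vec{j}}(c^i_{j-1,j})=\pi_j$, while $E\Pi_{i,\vec{j}}(c^i_{j,j+1})=\pi_{j+1}$ by definition of $c^i_{j,j+1}$; since the margins are distinct, $\pi_j>\pi_{j+1}$, hence $E\Pi_{i,\vec{j}}(c^i_{j-1,j})>E\Pi_{i,\vec{j}}(c^i_{j,j+1})$, and because $\alpha_i\mapsto E\Pi_{i,\vec{j}}(\alpha_i)$ is strictly decreasing this forces $c^i_{j-1,j}<c^i_{j,j+1}$. The two boundary comparisons, $c^i_{0,1}<c^i_{1,2}$ and $c^i_{J-1,J}<c^i_{J,J+1}$, hold trivially from the conventions $c^i_{0,1}=-\infty$, $c^i_{J,J+1}=+\infty$ together with finiteness of the interior cutoffs. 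Hence $c^i_{j-1,j}$ is strictly increasing in $j$.

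The main obstacle is the global strict monotonicity of $\alpha_i\mapsto E\Pi_{i,\vec{k}}(\alpha_i)$: everything else is bookkeeping around the convex-combination identity, but this monotonicity genuinely uses the comonotonicity of prices and margins — drop it and the expected-profit curve can fail to be monotone, so the cutoffs need not even be unique. If one is unwilling to assume comonotone prices, the fallback is to establish only that $E\Pi_{i,\vec{j}}$ is decreasing on a neighborhood of each relevant crossing, which still yields the ordering of cutoffs but is less clean to state.
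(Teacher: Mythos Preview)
Your argument is internally correct, but it proves a weaker statement than the lemma: you need the extra hypothesis that prices and margins are comonotone ($p_1>\dots>p_J$), which you introduce explicitly. The paper's proof does not require this; it only uses $p_j>0$ and $\pi_1>\dots>\pi_J$. So while your route is valid in the cigarette application, it does not establish the lemma in the generality in which it is stated.

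The paper's device is to clear the denominator in \eqref{eq:cutoff} before studying monotonicity. Writing $e_{j'}=\exp(\gamma_{ij'}-\alpha_i p_{j'})$, the defining equation $E\Pi_{i,\overrightarrow{j-1}}=\pi_j$ is equivalent to
\[
\pi_j \;=\; \sum_{j'=1}^{j-1}(\pi_{j'}-\pi_j)\,e_{j'} \;=:\; G_j(\alpha_i).
\]
Each summand has a strictly positive coefficient (since margins are distinct) and is strictly decreasing in $\alpha_i$ (since $p_{j'}>0$), so $G_j$ is strictly decreasing regardless of how prices are ordered relative to margins; uniqueness of $c^i_{j-1,j}$ follows at once. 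Moreover $G_{j+1}(\alpha)>G_j(\alpha)$ for every $\alpha$ (both the coefficients $\pi_{j'}-\pi_{j+1}>\pi_{j'}-\pi_j$ and the index set grow), so at $\alpha=c^i_{j-1,j}$ one has $G_{j+1}(c^i_{j-1,j})>G_j(c^i_{j-1,j})=\pi_j>\pi_{j+1}$, and strict monotonicity of $G_{j+1}$ forces $c^i_{j,j+1}>c^i_{j-1,j}$.

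The contrast with your approach is instructive. You work directly with the ratio $E\Pi_{i,\vec{k}}(\alpha_i)=N_k/D_{ik}$, whose derivative acquires the cross term $\tfrac12\sum_{m,\ell}(p_\ell-p_m)(\pi_m-\pi_\ell)e_{im}e_{i\ell}$; signing this is exactly where comonotonicity enters, and without it $E\Pi_{i,\vec{k}}$ can genuinely be non-monotone (your caveat is well taken). The paper sidesteps the issue because clearing the denominator kills the cross term: $G_j$ is a sum, not a ratio, so its monotonicity needs nothing beyond positive prices. Your convex-combination identity $E\Pi_{i,\vec{j}}(c^i_{j-1,j})=\pi_j$ is a nice observation, but once you have the $G_j$ formulation it is no longer needed. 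In short: same destination, but the paper's rewriting buys you the full lemma without the price-ordering assumption you flagged as the main obstacle.
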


\begin{proof}
  Let us arrange the products in descending order as $\pi_1 > \pi_2 > \ldots > \pi_J > \pi_0 = 0$, so that $\pi_j$ is strictly decreasing in $j$. Note that Equation \eqref{eq:cutoff} is equivalent to $\pi_j = \sum_{j'=1}^{j-1} (\pi_{j'} - \pi_j)\exp(\gamma_{ij'} - \alpha_i p_{j'})$.  For a fixed $\alpha_i$, $\sum_{j'=1}^{j-1} (\pi_{j'} - \pi_j)\exp(\gamma_{ij'} - \alpha_i p_{j'})$ is strictly increasing in $j$. For a fixed $j$, $\sum_{j'=1}^{j-1} (\pi_{j'} - \pi_j)\exp(\gamma_{ij'} - \alpha_i p_{j'})$ is clearly strictly decreasing in $\alpha_i$. Therefore, to establish $\pi_{j+1} = E\Pi_{i,\vec{j}}$, which is equivalent to $\pi_{j+1} = \sum_{j'=1}^{j} (\pi_{j'} - \pi_{j+1})\exp(\gamma_{ij'} - \alpha_i p_{j'})$, it is necessary that $c^i_{j,j+1} > c^i_{j-1,j}$.
\end{proof}

\begin{theo}
\label{theo:2}
    The firm's optimal assortment for consumer $i$ with $\alpha_i \in (c^i_{j-1,j},c^i_{j,j+1}]$ is $\vec{j}$. 
\end{theo}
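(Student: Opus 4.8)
The plan is to assemble Theorem \ref{theo1}, the characterization of $j_i^*$ in terms of the profit comparisons $\pi_j - E\Pi_{i,\overrightarrow{j-1}}$, and the cutoff definition in Equation \eqref{eq:cutoff}, together with the monotonicity established in Lemma 4. First I would invoke Theorem \ref{theo1} to restrict attention to the foldable menu $\{\vec{1},\ldots,\vec{J}\}$, so that the firm's optimal assortment for consumer $i$ is exactly $\vec{j_i^*}$ with
\[
j_i^* = \arg\max_j \ \{\pi_j - E\Pi_{i,\overrightarrow{j-1}} \geq 0,\ \pi_{j+1} - E\Pi_{i,\vec{j}} < 0\}.
\]
By Lemma 4, the cutoffs $c^i_{0,1} = -\infty < c^i_{1,2} < c^i_{2,3} < \cdots < c^i_{J,J+1} = +\infty$ are strictly increasing in $j$, so the intervals $(c^i_{j-1,j}, c^i_{j,j+1}]$ for $j = 1,\ldots,J$ partition $\mathbb{R}$, and every $\alpha_i$ lands in exactly one of them. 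It therefore suffices to show that $\alpha_i \in (c^i_{j-1,j}, c^i_{j,j+1}]$ forces $j_i^* = j$.

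The key step is to translate the two defining inequalities for $j_i^*$ into statements about where $\alpha_i$ falls relative to the cutoffs. From Equation \eqref{eq:cutoff} and the sign analysis already carried out in the proof of Lemma 4, the map $\alpha \mapsto \pi_k - E\Pi_{i,\overrightarrow{k-1}}$ — equivalently $\alpha \mapsto \pi_k - \sum_{j'=1}^{k-1}(\pi_{j'} - \pi_k)\exp(\gamma_{ij'} - \alpha p_{j'})$ — is strictly increasing in $\alpha$ (the subtracted sum is strictly decreasing in $\alpha$), vanishes exactly at $\alpha = c^i_{k-1,k}$, and hence is $\geq 0$ iff $\alpha \geq c^i_{k-1,k}$. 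Applying this with $k = j$ shows $\pi_j - E\Pi_{i,\overrightarrow{j-1}} \geq 0 \iff \alpha_i \geq c^i_{j-1,j}$, and applying it with $k = j+1$ shows $\pi_{j+1} - E\Pi_{i,\vec{j}} < 0 \iff \alpha_i < c^i_{j,j+1}$ — that is, $\alpha_i \leq c^i_{j,j+1}$ after absorbing the boundary via the half-open convention, since at $\alpha_i = c^i_{j,j+1}$ one has $\pi_{j+1} = E\Pi_{i,\vec{j}}$ and a tie-break in favor of the smaller index. Thus $\alpha_i \in (c^i_{j-1,j}, c^i_{j,j+1}]$ is precisely equivalent to the two conditions defining $j_i^* = j$, and by Lemma 4 there is a unique such $j$, so the argmax is well-defined and equals $j$.

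The main obstacle I anticipate is not any deep computation but the careful handling of boundary cases and the monotonicity bookkeeping: one must verify that the "increasing in $j$" direction of Lemma 2 guarantees that once $\pi_j - E\Pi_{i,\overrightarrow{j-1}} \geq 0$ fails it never recovers for smaller $j$, and once $\pi_{j+1} - E\Pi_{i,\vec{j}} < 0$ holds it continues to hold for larger $j$ — so that the argmax set is the single index $j$ rather than a spurious union of indices — and that the half-open interval convention $(c^i_{j-1,j}, c^i_{j,j+1}]$ is consistent with the strict-versus-weak inequalities in the definition of $j_i^*$ and with the tie-breaking rule used when margins are distinct. Since the hypothesis of Lemma 4 (non-identical price-cost margins) is exactly the regime under consideration in this subsection, the strict monotonicity of the cutoffs is available and the partition is genuine, which closes the argument.
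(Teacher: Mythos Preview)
Your proposal is correct and follows essentially the same route as the paper: both arguments use Theorem \ref{theo1} to restrict to the foldable menu, then read off from the cutoff definition \eqref{eq:cutoff} and the strict monotonicity in $\alpha$ (Lemma 4) that $\alpha_i \in (c^i_{j-1,j},c^i_{j,j+1}]$ forces $\pi_j \geq E\Pi_{i,\overrightarrow{j-1}}$ and $\pi_{j+1} \leq E\Pi_{i,\vec{j}}$, which via the Lemma 1 equivalence and Lemma 2 pins down $\vec{j}$ as optimal. The paper's proof is terser---it states only the two adjacent profit comparisons $E\Pi_{i,\overrightarrow{j-1}} < E\Pi_{i,\vec{j}}$ and $E\Pi_{i,\overrightarrow{j+1}} < E\Pi_{i,\vec{j}}$ and concludes---while you spell out the monotonicity, the argmax characterization, and the boundary tie-break more carefully; but the underlying argument is the same.
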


\begin{proof}
    For consumer $i$ with $\alpha_i \in (c^i_{j-1,j},c^i_{j,j+1})$, the definitions of $c^i_{j-1,j}$ and $c^i_{j,j+1}$ imply that $E\Pi_{i,\overrightarrow{j-1}}  < E\Pi_{i,\vec{j}}$ and $E\Pi_{i,\overrightarrow{j+1}}  < E\Pi_{i,\vec{j}}$, respectively. Therefore, the optimal assortment is $\vec{j}$.     
\end{proof}

\section{Empirical Application}

In this section, we apply the foldable menu model to analyze China's tobacco industry. First, we describe our data sources and outline the definitions and constructions of the main variables.

\subsection{Data and Variables}

\subsubsection{Cigarette sales and market shares}

We have collected provincial-level data on cigarette sales across the five tiers from the 2011-2016 China Tobacco Yearbooks. This dataset includes information for each of China's 31 provincial-level administrative divisions. In total, it encompasses sales data for the five cigarette tiers across 31 provinces over a six-year period, resulting in 930 data points. Appendix \ref{Summary Statistics of Cigarette Sales} provides a detailed summary of the statistics related to cigarette tier sales.\footnote{We impute Shanghai's cigarette sales for 2012 as the average of sales in 2011 and 2013 due to a clear error in the original 2012 data.}

To account for the possibility of individuals choosing not to smoke, it is essential to estimate China's potential cigarette market size. This metric represents the hypothetical number of cigarettes consumed if the entire population aged 15 and above were smokers. According to data from the Global Market Information Database, the prevalence of smoking among adults in China was approximately 28.1\% in 2011, 28\% in 2012 and 2013, and 27.9\% from 2014 to 2016. To construct a measure of market share relative to this potential market size, we multiply the percentages of cigarette tier sales by the smoking prevalence rates in the  corresponding years. These calculated market shares serve as the observed market shares in the application, and Table \ref{tab:1.1} provides a summary of these observed market shares. Notably, Tiers I and II exhibit clear upward trajectories in market share, while Tiers IV and V show distinct downward trends. Tier III's market share experiences a significant increase from 2011 to 2012 and stabilizes thereafter.

\begin{table}[ht]\centering
\caption{Summary Statistics of Observed Market Shares (\%)}
\resizebox{.8\columnwidth}{!}{%
\label{tab:1.1}
\begin{tabular}{c|cccccc} \hline \hline
\diagbox[width=5em]{Tier}{Year}                    & 2011   & 2012   & 2013   & 2014   & 2015   & 2016   \\ \hline
\multirow{2}{*}{I} & 3.68 & 4.19 & 4.77 & 5.41 & 5.73 & 5.62 \\
                   & (1.88) & (1.54) & (1.74) & (1.94) & (2.27) & (2.54) \\ \hline
\multirow{2}{*}{II} & 1.77 & 2.14 & 2.47 & 2.88 & 3.32 & 3.71 \\
                   & (1.04) & (1.12) & (1.26) & (1.43) & (1.47) & (1.56) \\ \hline
\multirow{2}{*}{III} & 10.69 & 12.56 & 12.85 & 12.45 & 12.2 & 12.05 \\
                   & (2.93) & (2.63) & (2.62) & (2.31) & (2.39) & (2.54) \\ \hline
\multirow{2}{*}{IV} & 8.22 & 6.37 & 5.87 & 5.44 & 5.08 & 5.02 \\
                   & (2.66) & (2.08) & (2.13) & (2.04) & (2.17) & (2.4) \\ \hline
\multirow{2}{*}{V} & 3.74 & 2.75 & 2.04 & 1.73 & 1.58 & 1.51 \\
                   & (1.47) & (0.94) & (0.78) & (0.77) & (0.78) & (0.78) \\ \hline
N & 31 & 31 & 31 & 31 & 31 & 31 \\ \hline \hline
\multicolumn{7}{p{0.65\textwidth}}{\footnotesize Notes: Observed market shares are constructed by multiplying the percentages of cigarette tier sales by the smoking prevalence rates in the corresponding years. Values are the mean across 31 provinces. Standard deviations are in parentheses.}\\
\end{tabular}
}
\end{table}

\subsubsection{Cigarette Tier Prices and Profits}

In the demand estimation of the foldable menu model, it is crucial to have access to cigarette retail prices and wholesale profits. \cite{goodchild2018early} identify five brands in each cigarette tier to serve as a representative sample, selected based their significant market share within their respective tier. The study provides the average wholesale and retail prices for these selected cigarette brands for the years 2014 to 2016, as detailed in Table \ref{tab:1.2}.

\begin{table}[ht]\centering
\caption{Average Nominal Prices and Margins}
\resizebox{\columnwidth}{!}{%
\label{tab:1.2}
\begin{tabular}{ccccccc}
\hline \hline
\multirow{2}{*}{Tier} & \multicolumn{2}{c}{Wholesale Price} & \multicolumn{2}{c}{Retail Price} & \multicolumn{2}{c}{Wholesale Margin} \\
                      & 2011-2014       & 2015-2016      & 2011-2014        & 2015-2016        & 2011-2014         & 2015-2016        \\ \hline
I                     & 21.8             & 23   & 24.5            & 26.5         & 3.75              & 3.65             \\
II                    & 11.6             & 12.3              & 13              & 14         & 1.59              & 1.49             \\
III               & 8.3              & 8.8     
& 9.5             & 10         & 1.14              & 1.04             \\
IV               & 4.5              & 4.8      & 5               & 5.5        & 0.48              & 0.38             \\
V                & 2.3              & 2.4      & 2.5             & 3         & 0.17              & 0.07             \\ \hline \hline
\multicolumn{7}{p{0.8\textwidth}}{\footnotesize Notes: Prices are the average prices of each tier's five representative cigarette brands, as selected by \cite{goodchild2018early}. Prices change due to a tax increase in 2015.}\\
\end{tabular}
}
\end{table}

Cigarette prices rose between 2014 and 2015 due to an exogenous tax increase in May 2015. STMA permitted wholesale cigarette prices to align with the tax hike in May 2015. In contrast, during another exogenous tax increase in May 2009, STMA prohibited passing on the tax increase to consumers through higher cigarette prices.\footnote{STMA issued a document, entitled Notice of Adjusting Cigarettes Allocation Price (2009, No. 180), also effective May 1, 2009, which declared that wholesale cigarette prices should remain the same nationwide as before the excise tax adjustment. Given that the excise tax is collected at the producer and wholesale levels but not at the retail level, retail cigarette prices remain the same as wholesale cigarette prices remain the same.}  Therefore, we treat the nominal prices as constant in 2011-2014. These tax increase events highlight the government regulation on China's tobacco prices. Details of the tax increases are summarized in Appendix \ref{Cigarette Tax Increases}.

In this application, considering nominal price rigidity, we assume that provincial CNTCs endogenously select cigarette assortments to maximize wholesale after-tax profits. We maintain fixed costs, such as advertising and management, as constant regardless of cigarette sales. Consequently, after-tax profits depend on the sales volume and the unit after-tax price-cost margin for each cigarette tier. To compute these unit after-tax cigarette wholesale margins, we refer to information on cigarette taxes, China's cigarette pricing mechanism, and cigarette wholesale margin rates from \cite{gao2012can} and \cite{zheng2018tobacco}. The detailed calculation process for wholesale after-tax price-cost margins is provided in Appendix \ref{Wholesale Profits}. The last two columns of Table \ref{tab:1.2} summarize these unit after-tax cigarette wholesale margins. As highlighted in \cite{goodchild2018early}, the increase in the wholesale price of cigarettes corresponds to the change in the ad valorem excise tax rate, which increased from 5\% to 11\% in 2015. This suggests that the new specific excise tax of \yen 0.10/pack was absorbed into wholesale margins rather than being passed on to consumers. Consequently, wholesale after-tax price-cost margins for 2015 and 2016 are calculated by subtracting the new specific excise tax of \yen 0.10/pack from the values for 2011-2014.

One challenge in this analysis is that cigarette prices and profits are presented at the national level, while cigarette sales and market shares are broken down at the provincial level. Since the composition of cigarette tier sales varies across provinces and years, the weighted average cigarette tier prices and profits also fluctuate across provinces and years. Unfortunately, detailed data on provincial cigarette sales compositions are unavailable, so we assume that nominal prices and profits for cigarette tiers are uniform across all provinces. However, it is worth noting that this model can be readily adapted to account for varying compositions of cigarette tiers when more detailed datasets are accessible.

\subsubsection{CPI and Disposable Income}

The nominal prices and profits for cigarette tiers, as presented in Table \ref{tab:1.2}, are uniform throughout the country. However, there is variation in consumer price indexes (CPIs) across provinces and years, which is crucial for estimating price sensitivity parameters. To capture this variation and obtain real cigarette tier prices and profits, we have collected provincial-level CPIs from the 2011-2016 China Statistics Yearbooks. These CPIs allow us to adjust nominal retail prices and wholesale profits, converting them into real prices and profits.

To account for consumers' disposable income, which are observable demographic factors influencing cigarette purchasing behavior, we have collected disposable income data from the 2011-2016 Provincial Statistical Yearbooks for each province. These yearbooks provide information on both the overall average disposable income and the average income for each income quintile within each province and year. Following \cite{berry1995automobile}, we assume the income distribution to be lognormal. Subsequently, we estimate the mean and standard deviation of the lognormal distribution based on the provincial quintile data, as outlined in \cite{xiao2017welfare}.\footnote{We initiate the process by generating 10,000 individuals following a lognormal distribution with a mean of $a$ and a standard deviation of $b$. Formally, the natural logarithm of the annual income of individual $i$ is expressed as $ln(inc_i) = a + b \mu$, where $\mu \sim N(0, 1)$. Subsequently, we compute the mean of each quintile within the simulated sample and align it with the observed quintile averages. The sum of squares of the differences between these two values is then calculated. The optimal values for $a$ and $b$ are determined by minimizing this summation for each province in each year.} 

Similar to cigarette prices, incomes reported in the Provincial Statistical Yearbooks are presented in nominal terms. We also use the CPIs for each province and year to adjust nominal incomes for inflation. A summary of the CPIs, nominal provincial mean disposable incomes, and real provincial mean disposable incomes for these years can be found in Table \ref{tab:1.3}.

\begin{table}[ht]\centering
\caption{Summary Statistics of CPI and Disposable Income}
\resizebox{.8\columnwidth}{!}{%
\label{tab:1.3}
\begin{tabular}{c|cccccc} \hline \hline
\diagbox[width=7em]{Variable}{Year}                    & 2011   & 2012   & 2013   & 2014   & 2015   & 2016   \\ \hline
\multirow{2}{*}{\begin{tabular}[c]{@{}c@{}}Overall\\  CPI\end{tabular}}        & 105.49 & 108.38 & 111.44 & 113.69 & 115.34 & 117.43 \\
                                                                                 & (0.38)   & (0.65)   & (1.03)   & (1.19)   & (1.39)   & (1.59)   \\ \hline 
\multirow{2}{*}{\begin{tabular}[c]{@{}c@{}}Nominal Mean \\ Income (\yen 10,000)\end{tabular}}        & 2.06 & 2.32 & 2.53 & 2.75 & 2.99 & 3.23 \\
                                                                                 & (5.36)  & (5.84)  & (6.45)  & (7.01)  & (7.56)  & (8.26)  \\ \hline 
\multirow{2}{*}{\begin{tabular}[c]{@{}c@{}}Real Mean \\ Income (\yen 10,000)\end{tabular}}           & 1.95 & 2.14 & 2.27 & 2.42 & 2.59 & 2.75 \\
                                                                                 & (5.11)  & (5.43)  & (5.84)  & (6.21)  & (6.54)  & (6.94) \\
\hline 
N & 31 & 31 & 31 & 31 & 31 & 31 \\ \hline \hline 
\multicolumn{7}{p{0.78\textwidth}}{\footnotesize Notes: Data are provided by the 2011-2016 China Statistics Yearbooks and Provincial Statistics Yearbooks. The base year for CPIs is 2010. Values are the mean across 31 provinces. Standard deviations are in parentheses.}\\     
\end{tabular}
}
\end{table}

Real disposable income plays a crucial role in our model due to its positive correlation with the market shares of Grade A cigarettes, which include Tiers I and II. To evaluate this correlation, we conducted an Ordinary Least Squares (OLS) regression, linking the observed market shares of Grade A cigarettes to the logarithm of real mean disposable income. The results of this regression are shown in Table \ref{tab:1.4}. The income coefficient in the regression is positive and statistically significant at the 1\% level. On average, a 1\% increase in real mean disposable income is associated with a 5.6 percentage point increase in the market share of Grade A cigarettes.

\begin{table}[ht]\centering
\def\sym#1{\ifmmode^{#1}\else\(^{#1}\)\fi}
\caption{OLS Regression Results}
\label{tab:1.4}
\begin{tabular}{l*{1}{c}}
\\ \hline\hline
            &\multicolumn{1}{c}{Observed market share of Grade A cigarettes}\\
\hline
Log mean disposable income      &       0.056\sym{***}\\
            &      (0.008)         \\
[1em]
Constant      &      -0.488\sym{***}\\
            &      (0.075)         \\
\hline
\(N\)       &         186         \\
\hline\hline
\multicolumn{2}{l}{\footnotesize Robust standard errors are in parentheses.}\\
\multicolumn{2}{l}{\footnotesize \sym{*} \(p<0.05\), \sym{**} \(p<0.01\), \sym{***} \(p<0.001\)}\\
\end{tabular}
\end{table}

However, it is important to note that this regression result may overstate the effect of income on the market shares of Grade A cigarettes. This overestimation arises because high-income consumers are more likely to be offered limited assortments featuring only high-priced cigarettes, consistent with CNTC's Premiumization Strategy. As a result, endogenous assortments have a larger positive impact on the market shares of Grade A cigarettes in high-income provinces compared to low-income provinces. To disentangle the effects of income from the influence of endogenous assortments, the proposed foldable menu model is necessary for demand estimation.

\subsection{Utility Function}
\label{utility}

In estimating the foldable menu model using the actual dataset, we define the utility function for consumer $i$, who selects a tier $j$ cigarette in province $l$ and year $t$, as follows:
\begin{equation}
\label{eq:utility}
u_{ijlt} = \delta_{ijlt} + \epsilon_{ijlt} = \gamma_{jlt} - \alpha_{i} p_{jlt} +\epsilon_{ijlt},
\end{equation}
where $\gamma_{jlt} = \xi_{jl} + \Delta \xi_{jlt}$ represents the fixed utility of tier $j$ cigarettes in province $l$ and year $t$; $\xi_{jl}$ captures the time-invariant fixed utility, and $\Delta \xi_{jlt}$ represents the time-variant random demand shocks. $- \alpha_{i} p_{jlt} +\epsilon_{ijlt}$ represents individual deviation from the fixed utilities with $\alpha_{i} = \theta/inc_{i}$.\footnote{This functional form for the interaction between income and price, also used in \cite{berry1999voluntary}, can be derived as a first-order Taylor series approximation to the Cobb–Douglas utility function originally used in \cite{berry1995automobile}.} Specifically, $inc_{i}$ represents the simulated real income in units of \yen 10,000.\footnote{The term $\alpha_{i} p_{jlt} = \theta/inc_{i} \times p_{jlt}$ remains unchanged whether we use real income and real price or nominal income and nominal price. In this application, we use real terms since we also use real profits in the estimation.} In essence, the varying market shares of cigarette tiers across different provinces and years are determined by several factors, including tier fixed utilities, real retail prices, real wholesale profits, and the distribution of real disposable incomes. Notably, consumers in wealthier provinces tend to purchase more high-priced cigarettes due to their lower price sensitivity and a reduced likelihood of encountering low-priced cigarettes in their local stores, a phenomenon attributed to the foldable menu model.

Note that in Equation \eqref{eq:utility}, the term $\gamma_{jlt}$ is not individual-specific, meaning we assume that all consumers share the same fixed utility from tier $j$ in province $l$ and year $t$. While this assumption may be strong, it significantly reduces the computational burden of determining the optimal assortments for all consumers in a province-year. Instead of calculating individual cutoffs for each consumer, we only need to compute one set of cutoffs for each province-year. Once the cutoffs for province $l$ and year $t$ are determined, Theorem \ref{theo:2} states that the firm's optimal assortment for all consumers with $\alpha_{i} \in (c^{lt}_{j-1,j}, c^{lt}_{j,j+1}]$ is $\vec{j}$.

This cutoff property facilitates estimation. While it is standard practice in the numerical search process to use the same set of random draws to calculate market shares so that they are smooth functions of the parameters, the same random draw may face different assortments when the parameters change, leading non-smooth market shares. To restock the smoothness, we adapt the standard simulation approach using the cutoff property. In particular, we use the same set of random draws in each interval. 

Suppose the density of $\alpha_i$, denoted as $\phi$, has full support over the interval $[a,b]$, yielding a conditional density of $\frac{\phi(\alpha)}{\Phi(b)-\Phi(a)}$. First, we take a random sample from a standard uniform distribution $u_{i}$. Second, to generate a random sample of consumers on this interval for each set of parameters, we use this conditional density's quantile function\footnote{The quantile function is defined by $u=\int_{a}^{\alpha}\frac{\phi(v)}{\Phi(b)-\Phi(a)}dv = \frac{\Phi(\alpha)-\Phi(a)}{\Phi(b)-\Phi(a)}$.}
\[
\alpha_{i}=\Phi^{-1}\Big(\Phi(a)+u_{i}[\Phi(b)-\Phi(a)]\Big).
\]
We can then calculate the overall choice probabilities of a group of consumers facing the same assortment. Specifically, the market share of product $j$ in province $l$ and year $t$ is 
\begin{equation}
\label{eq:predicted shares}
\sum_{j'=j}^J
\underbrace{[\Phi(c^{lt}_{j',j'+1})-\Phi(c^{lt}_{j'-1,j'})]}_{\text{share of assortment } \vec{j'}}
\left[
\underbrace{\frac{1}{N_{s}} \sum_{i=1}^{N_s} \frac{\exp(\gamma_{jlt} - \alpha_{i} p_{jlt})}{1+\sum_{j''=1}^{j'} \exp(\gamma_{j''lt} - \alpha_{i} p_{j''lt})}}_{\text{overall market share among consumers facing } \vec{j'}} 
\right],
\end{equation}
which is smooth in the parameters because $\alpha_i$ is.

In our study, nominal cigarette prices are treated as fixed and determined by the STMA at the introduction of cigarette brands. Fluctuations in real cigarette prices across different provinces and years are primarily influenced by changes in the CPIs. Importantly, these variations are assumed to be uncorrelated with random demand shocks. For instance, a random demand shock in a market may arise from an anti-high-consumption campaign, leading to a decrease in consumers' preferences for high-priced cigarettes. This type of demand shock is clearly unrelated to nominal cigarette prices and the CPIs. In the subsequent Monte Carlo experiment and empirical application, the identification assumption posits that these demand shocks are uncorrelated with real cigarette prices.

\subsection{Monte Carlo Experiment}
\label{Monte Carlo}

In this section, we conduct a Monte Carlo experiment to validate our estimation strategy.
\subsubsection{Setup} \label{secDGP}
We set the number of markets $T = 186$, aligning with the China tobacco dataset. The vector $\xi = \{2;1.5;1.2;1;0.8\}$ represents the constant tier fixed utilities. The nominal price vector is $P = \{3.6;2.4;1.6;1.2;1\}$ and the nominal profit vector is $\Pi = \{2.5;1.8;1.2;1;0.8\}$. 

To simulate the Consumer Price Index (CPI) in each market, we use the formula 
\[
CPI_t = 1 + 0.2 \times \epsilon_{1t}.
\] 
The inflation multiplier $I_t = 1/CPI_t$ is employed to deflate nominal prices and profits to obtain the real price vector $p_t$ and real profit vector $\pi_t$. Income is log-normal distributed. The mean and standard deviation of log income are generated using 
\[
\mu_t = 1 + 0.1 \times \epsilon_{2t},
\]
\[
\sigma_t = 0.5 + 0.1 \times \epsilon_{3t}.
\]
Here, $\epsilon_1, \epsilon_2, \epsilon_3 \sim U(0,1)$. Furthermore, unobservable demand shock $\Delta \xi_{jt}$ is simulated using $0.3 \times \rho_{jt}$, where $\rho \sim N(0,1)$.

Individual utility is
\[
u_{ijt} = \gamma_{jt} - \alpha_{i} p_{jt} + \varepsilon_{ijt} = \xi_{j} + \Delta \xi_{jt} - \alpha_{i} p_{jt} + \varepsilon_{ijt},
\]
where $\alpha_{i} = \theta/inc_{i}$ with $\theta = 2$, so that all consumers have positive $\alpha$. 

\subsubsection{Market Share Calculation}
For each market $t$, given $\gamma_t$, $p_t$, and $\pi_t$, we can calculate the cutoff $c^t_{j-1,j}$ using $\pi_{jt}  = E\Pi_{i,\overrightarrow{j-1},t}$ for all $j > 1$. Following the sampling method outlined in Section \ref{utility}, we generate random uniform draws $u_{i}$ with a sample size of 2000, and then use the conditional density's quantile function to generate a random sample of $\alpha_i$ on the interval $(c^t_{j-1,j},c^t_{j,j+1}]$ for each set of parameters. With these values, we can calculate the market share of product $j$ in market $t$ using
\[
s(\gamma_t, \theta) = \sum_{j'=j}^J
\underbrace{[\Phi(c^t_{j',j'+1})-\Phi(c^t_{j'-1,j'})]}_{\text{share of assortment } \vec{j'}}
\left[
\underbrace{\frac{1}{N_{s}} \sum_{i=1}^{N_s} \frac{\exp(\gamma_{jt} - \alpha_{i} p_{jt})}{1+\sum_{j''=1}^{j'} \exp(\gamma_{j''t} - \alpha_{i} p_{j''t})}}_{\text{overall market share among consumers facing } \vec{j'}} 
\right].
\]

When solving for $\pi_{jt} = E\Pi_{i,\overrightarrow{j-1},t}$, it is possible to encounter $c^t_{j-1,j} \leq 0$ for any $j>1$. However, our model settings dictate that $\alpha$ values must be positive. Therefore, when facing the conditions $c^t_{j-1,j} \leq 0$ and $c^t_{j,j+1} > 0$, the implicit assumption is that the shortest assortment is $\vec{j}$ rather than $\overrightarrow{j-1}$. As a result, consumers are grouped starting from $c^t_{j,j+1}$, and in the above market share expression, we have $\Phi(c^t_{j-1,j}) = 0$. Under the above model setting, the mean market share is 13.7\% and the standard deviation is 11\%. The minimum is 0.9\% and the maximum is 44.7\%.

\subsubsection{Estimation}

Since individual choices are unobservable, we cannot estimate the foldable menu model using maximum likelihood estimation (MLE). Instead, we estimate the model using GMM, similar to the approach in \cite{berry1995automobile}. The first step involves deriving fixed utilities $\gamma_{jt}$ through contraction mapping. For any given $\theta$, we have knowledge of the distribution of $\alpha_{i}$. We initialize $\gamma_{jt}$ as $log(s_{jt}^{obs}) - log(s_{0t}^{obs})$. Given $\gamma_t$ and observed $p_t$ and $\pi_t$, we can calculate the cutoffs for $\alpha_{i}$ and then determine predicted market shares $s(\gamma_t, \theta)$ using the same method as described in the preceding section. We iteratively update $\gamma$ using the formula $\gamma^{\tau + 1} = \gamma^{\tau} + \ln(s^{obs}) - \ln(s(\gamma^{\tau}, \theta))$ until the condition $||\gamma^{\tau + 1} - \gamma^{\tau}|| < 10^{-6}$ is satisfied.\footnote{The contraction mapping fails only when $\theta$ is too small, causing the likelihood to offer the least profitable product smaller than its market share in the actual data.}


After obtaining the converged $\gamma_{jt}(\theta)$, we perform an OLS regression of $\gamma_{jt}(\theta)$ on cigarette tier dummies to obtain estimates for $\xi_j(\theta)$. Subsequently, we derive $\Delta \xi_{jt}(\theta) = \gamma_{jt}(\theta) - \xi_j(\theta)$. Real prices, denoted as $p$, serve as the instrumental variable, as they are uncorrelated with the unobserved product heterogeneity $\Delta \xi(\theta)$. The single moment is then calculated as $\Delta \xi(\theta)' p$, and the objective function is defined as the square of this moment. The search for the optimal $\theta$ involves minimizing the objective function value. The following Figure \ref{fig:1.2} illustrates the estimation process.\footnote{Alternatively, we could adopt the MPEC approach outlined in \citet{dube2012improving}, which eliminates the need to repeatedly calculating the cutoffs.}



\begin{figure}[hbtp]
    \centering
\begin{tikzpicture}[auto, every node/.style={font=\small}]
    \node (initTheta) [rectangle, draw, minimum width=4cm, text centered, rounded corners] {Initiate $\theta$};
    \node (calcDelta) [rectangle, draw, below=1.5cm of initTheta, minimum width=4cm, text centered] {Calculate $\gamma(\theta)$};
    \node (estimateDelta) [rectangle, draw, below=1cm of calcDelta, minimum width=4cm, text centered] {Reg. $\gamma(\theta)$ on tier dummies, predict $\Delta \xi_{jt}(\theta)$};
    \node (evalObjective) [rectangle, draw, below=1cm of estimateDelta, minimum width=4cm, text centered, rounded corners] {Evaluate GMM Objective Function};
    
    \begin{scope}[xshift=6cm, node distance=1cm]
        \node (innerStart) [rectangle, draw, minimum width=4cm, text centered] {Initial $\gamma(\theta)$};
        \node (calcCutoffs) [rectangle, draw, below of=innerStart, minimum width=4cm, text centered] {Calculate cutoffs};
        \node (calcS) [rectangle, draw, below of=calcCutoffs, minimum width=4cm, text centered] {Calculate $s(\gamma, \theta)$};
        \node (updateDeltaM) [rectangle, draw, below of=calcS, minimum width=4cm, text centered] {Update $\gamma(\theta)$};
        
        \draw[->,dashed] (innerStart) -- (calcCutoffs);
        \draw[->,dashed] (calcCutoffs) -- (calcS);
        \draw[->,dashed] (calcS) -- (updateDeltaM);
        \draw[->,dashed] (updateDeltaM.east) -- ++(1,0) |- (innerStart.east);
    \end{scope}
    
    \draw[->] (initTheta) -- (calcDelta);
    \draw[->] (calcDelta) -- (estimateDelta);
    \draw[->] (estimateDelta) -- (evalObjective);
    
    \draw[decorate,decoration={brace,amplitude=10pt,raise=4pt},yshift=0pt]
    ([xshift=-0.1cm]updateDeltaM.west) -- ([xshift=-0.1cm]innerStart.west) node [black,midway,xshift=0.8cm] {};

    \draw[->] ([xshift=-.65cm,yshift=-0.435cm]calcCutoffs.west) -- (calcDelta.east);

\end{tikzpicture}

    \caption{Estimation Process}
    \label{fig:1.2}
\end{figure}
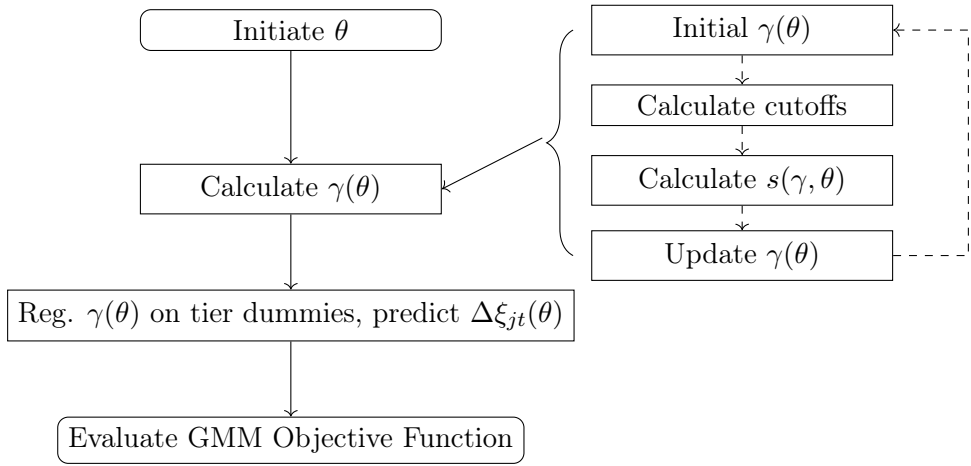

\begin{remark}
In the Monte Carlo experiment and the empirical application presented in the following section, we assume exogenous variations in CPI and that the real price $p$ is uncorrelated with $\Delta \xi$. The method can be easily extended to account for endogenous prices by applying GMM with appropriate instruments for the real price.
\end{remark}

To demonstrate the estimation bias that arises when ignoring assortment endogeneity, we also estimate the logit model using the samples generated in Section \ref{secDGP}, but under the assumption of consistent availability of all products (hereafter referred to as the standard logit model). Specifically, we simulate a random vector $v_i \sim N(0,1)$ with a size of 10,000 (5 times the size of $u_i$, since we use $u_i$ in each of the 5 intervals). Subsequently, we generate individual income in market $t$ as $inc_{i} = \exp(\mu_t + \sigma_t v_i)$ and calculate $\alpha_{i} = \theta/inc_{i}$. 



\begin{table}[ht]\centering
\caption{Summary Statistics}
\label{tab:1.13}
\resizebox{.8\columnwidth}{!}{%
\begin{tabular}{cccccccc}\hline\hline
 & & $\xi_1$      & $\xi_2$      & $\xi_3$       & $\xi_4$       & $\xi_5$        & $\theta$         \\ \hline
\multirow{2}{*}{Foldable Menu} & Mean  & 2.044  &  1.545   & 1.218  &  1.019   & 0.799 & 2.040   \\
& SD   &  0.195  &  0.197   & 0.113  &  0.043  &  0.077 &   0.192 \\ \hline
\multirow{2}{*}{Standard Logit} & Mean    & 0.122  & -0.072 &  -1.160  & -1.634 &  -2.446 & 0.239 \\
& SD   &  0.128  &  0.089  &  0.069   & 0.055  &  0.048 & 0.096 \\ \hline
& True      & 2    & 1.5    & 1.2    & 1 & 0.8   & 2 \\ \hline\hline
\end{tabular}
}
\end{table}

Table \ref{tab:1.13} summarizes the estimates from 20 simulated samples. Compared to the foldable menu model, the standard logit model tends to underestimate \( \theta \). This is because a lower \( \theta \) has a larger positive effect on the utility derived from higher-priced products than from lower-priced ones, which leads to an increase in the sales of high-tier cigarettes. Another factor that contributes to the increase in the sales of high-tier cigarettes is endogenous assortment. The standard logit model misattributes the effect of endogenous assortment to a lower \( \theta \), resulting in the underestimation of \( \theta \). Moreover, a lower \( \theta \) implies a smaller disutility from price. To align the predicted market shares with the observed ones, the standard logit model also underestimates \( \gamma \) and thus $\xi$ compared to the foldable menu model.

\section{Empirical Results}

The estimation process for the empirical application follows the same approach as in the Monte Carlo experiment. As detailed in Section \ref{Monte Carlo}, we successfully retrieve the pre-set model parameters using the estimation strategy. The empirical results are presented in Table \ref{tab:1.5}, showcasing estimation outcomes from both the foldable menu model and the standard logit model, the latter of which does not account for choice set heterogeneity.

\begin{table}[ht]\centering
\caption{Estimation Results}
\label{tab:1.5}
\setlength{\tabcolsep}{8mm}{
\begin{tabular}{ccccc}\\ \hline \hline
Parameters & \multicolumn{2}{c}{Foldable Menu} & \multicolumn{2}{c}{Standard Logit} \\ \cline{2-3}  \cline{4-5} 
           & Coef.            & SE/SD           & Coef.                  & SE/SD                 \\ \hline
$\theta$           &     2.012 &       0.451 & 1.025                       &      0.861 \\ \hline
$\xi_1$            &    0.660              &      0.446          &                       -0.694 &       0.372               \\
$\xi_2$            &    -1.265 &    0.540 &                       -2.187 &      0.492                \\
$\xi_3$            &      -0.054 &     0.312     &                      -0.862  &         0.230 \\
$\xi_4$            &      -1.027 &    0.431 &                       -2.055 &       0.452              \\
$\xi_5$            &     -1.173             &     0.507          &                       -3.368 &       0.490               \\ \hline \hline
\multicolumn{5}{p{.8\textwidth}}{\footnotesize Notes: We provide estimates for $\theta$ along with their bootstrap standard errors. Additionally, $\xi_j$ denotes the mean values of $\xi_{jl}$ across all 31 provinces in China, and we include the corresponding standard deviations.}\\ 
\end{tabular}
}
\end{table}

The parameter $\theta$ represents the point estimate using the true dataset, and we report the associated bootstrap standard error based on 20 bootstrap resamples. Upon obtaining the $\theta$ estimate, we invert the market share equation to derive $\gamma_{jlt}$ and subsequently conduct an OLS regression to obtain $\xi_{jl}$. We allow the time-invariant fixed utility to vary across different provinces to accommodate the varying preferences for cigarette tiers in different provinces. The resulting $\xi_j$ represents the mean values of $\xi_{jl}$ across all 31 provinces in China, and we include the corresponding standard deviations.

In the foldable menu model, all coefficients exhibit the expected signs and scales. Notably, fixed utility shows a monotonic relationship with cigarette tier prices, except for Tier II, which, on average, has a lower fixed utility than Tiers III to V. In contrast, the standard logit model underestimates the coefficients for price sensitivity and fixed utilities. These findings align closely with those from the Monte Carlo experiment. The underestimation of price sensitivity coefficients leads to lower estimates of price elasticity, as discussed in the subsequent section.

\subsection{Price Elasticities}

Calculating demand elasticities is a standard practice in the aftermath of demand estimation. We have compiled a summary of the own-price and cross-price demand elasticities for each of the five cigarette tiers. These elasticities measure the percent change in sales for a specific cigarette tier when its price increases by 1\%, while holding other tier prices constant. We report weighted average elasticities using the ratio of each market's total cigarette sales over the total sales across all provinces and years as weights.\footnote{Similarly, we report all subsequent elasticities and percentage changes in weighted average values.} Figure \ref{fig:compelas} provides a summary of these estimates using both the foldable menu model and the standard logit model.

\begin{figure}[ht]
    \centering
    \includegraphics[scale=0.66]{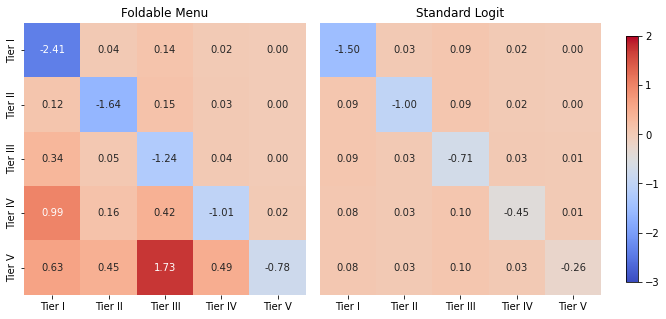}    
    \caption{Comparison of Price Elasticities. Notes: The columns for Tiers I to V represent cross (own) elasticities when the price of a specific tier increases by 1\% while holding other tier prices constant. The left panel includes the values estimated using the foldable menu model, while the right panel includes the values estimated using the standard logit model.}
    \label{fig:compelas}
\end{figure}

In comparison, the standard logit model—which overlooks choice set heterogeneity—tends to underestimate both own-price and cross-price elasticities for most cigarette tiers, except for the effect of a price increase in Tier V on the sales of Tier III. Another notable difference is that, under the foldable menu model, a price increase in higher-tier cigarettes prompts consumers to switch to lower-tier products, rather than distributing demand uniformly across tiers, as observed in the standard logit model. The differences in elasticities arise from two key factors: first, the parameter estimates differ between the foldable menu model and the standard logit model, with the latter underestimating elasticities by misattributing limited product substitutions—due to the constrained availability of low-priced cigarettes—as consumer insensitivity to price changes. Second, the foldable menu model allows for endogenous adjustments in product assortments in response to price changes. Specifically, when the price of Tier $j$ ($j < V$) rises, the cutoffs $c_{j,j+1}, \ldots, c_{IV,V}$ decrease, allowing more consumers access to lower-priced cigarettes, as outlined in Equation \eqref{eq:cutoff}. A detailed decomposition of these effects is provided in Appendix \ref{Decomposition of Elasticities}.

As a result, under the foldable menu model, when prices increase by 1\% across all tiers, Figure \ref{fig:allelas} shows a decline in the sales of Tiers I–III, while sales of Tiers IV and V rise. This contrasts sharply with the standard logit model, where a uniform 1\% price increase leads to declining sales across all tiers. Furthermore, the overall price elasticity—reflecting the percentage change in total cigarette demand when prices rise by 1\% across all tiers—is slightly larger in the standard logit model (-0.61) compared to the foldable menu model (-0.579).

\begin{figure}[ht]
    \centering
    \includegraphics[scale=0.3]{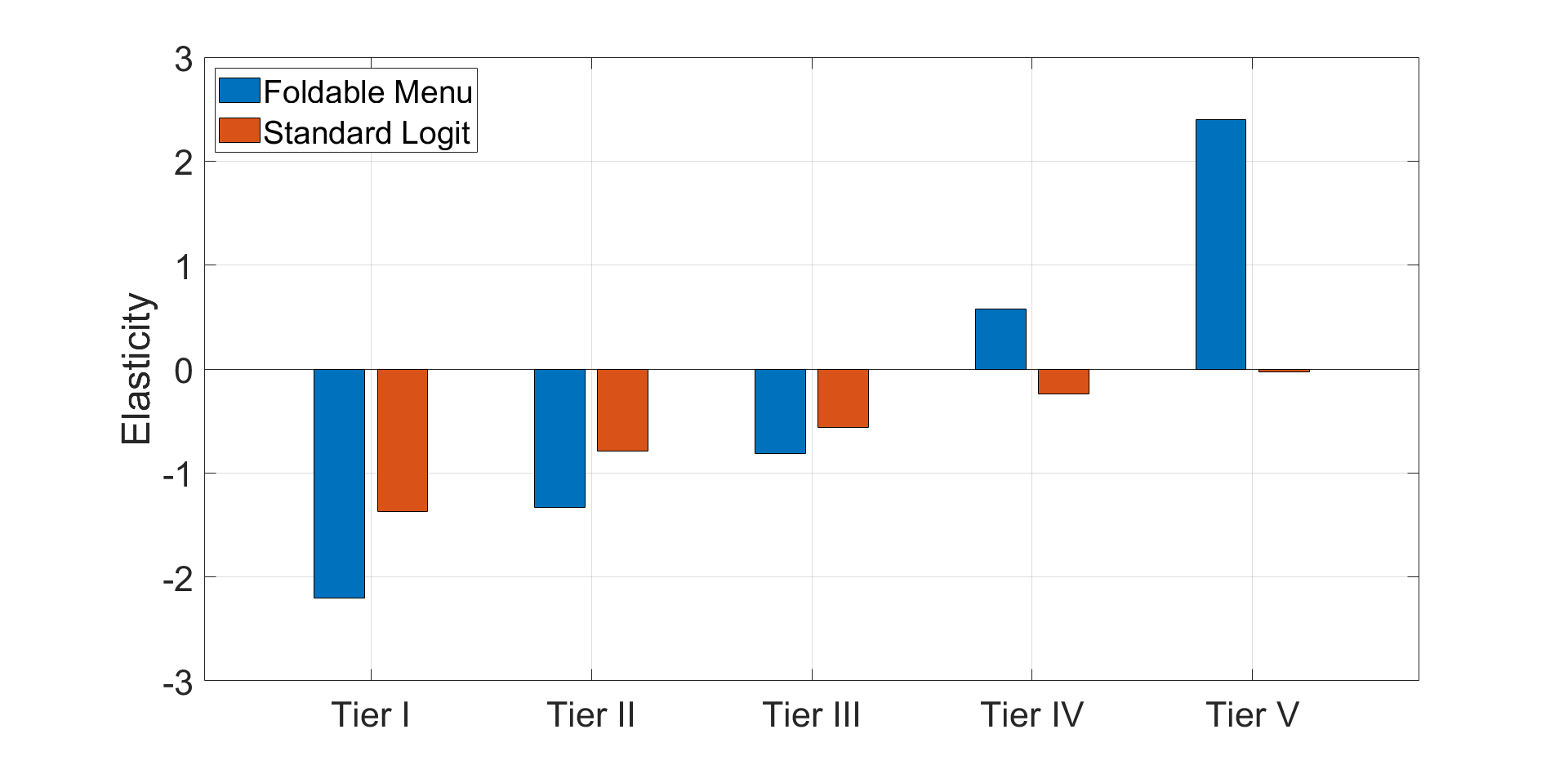}   
    \caption{Percentage Change in Sales for a 1\% Increase in All Prices}
    \label{fig:allelas}
\end{figure}

The price elasticities estimated using the foldable menu model appear relatively high compared to prior studies that assume cigarettes to be homogeneous, as seen in \cite{hu2002effects}, \cite{lance2004cigarette}, \cite{bishop2007chinese}, and \cite{chen2016quantity}. These studies tend to overlook substitutions that occur among different cigarette tiers. Furthermore, the estimated price elasticities using the foldable menu model also stand out as notably higher when compared to the price elasticities of individual cigarette brands, as indicated in \cite{liu2015smokers}. This disparity can be attributed to the fact that the price changes analyzed in this study occurred at the cigarette tier level, meaning that prices for all cigarette brands within the same tier increased by the same percentage. Consequently, these price changes have a more substantial impact on sales than when only the price of a single cigarette brand undergoes an increase.

A relevant study that estimated cigarette demand elasticities at the tier level is \cite{li2016heterogeneous}, which used individual survey data from China, spanning 2006-2009. In their analysis, the authors categorized cigarettes into four tiers by combining Tiers I and II. Their estimated own price elasticities are -2.506, -0.805, -0.39, and -0.28 for Tiers I\&II, III, IV, and V, respectively. These values (except for Tiers I\&II) closely align with the estimated elasticities obtained using the standard logit model. However, their analysis does not consider choice set heterogeneity, potentially leading to an underestimation of price elasticities, as discussed earlier.

Recovering essential unobservable variables from firms' profit maximization conditions offers a more accurate means of estimating price elasticities compared to models that overlook the challenge posed by unobservable variables. To illustrate this point, consider \cite{d2019automobile}, which recovers unobservable auto sale prices to investigate price discrimination. Their findings reveal that assuming all consumers pay the posted prices for autos (the uniform pricing model in their paper) resulted in an underestimation of price sensitivities and price elasticities. This discrepancy occurred because the assumption overlooks price negotiations and cash discounts, and thus overstates the actual prices paid by consumers in their study. Similarly, in this context, we have observed that assuming full availability of all cigarette tiers leads to an underestimation of price sensitivities and price elasticities. This is due to the erroneous attribution of limited product substitutions, arising from the foldable menu, as lower price elasticities. 

\subsection{Assortment Distribution}

Figure \ref{fig:1.1} illustrates the relationship between income and cigarette assortments. In each subfigure, the x-axis represents the mean of log real income in \yen 10,000, while the y-axis denotes the percentage of various assortments. It is evident that as income increases, the percentage of the full assortment decreases, while the percentage of assortment \{I,II,III\} increases, indicating that consumers in high-income provinces are more likely to encounter limited assortments compared to those in low-income provinces. The percentages of assortments \{I\}, \{I,II\}, and \{I,II,III,IV\} show no clear correlation with income. These assortment distributions help explain the counterfactual results based on the assumption of full availability of all cigarette tiers in the following section.

\begin{figure}[ht]
\centering
\begin{subfigure}{0.49\textwidth}
    \includegraphics[width=\textwidth]{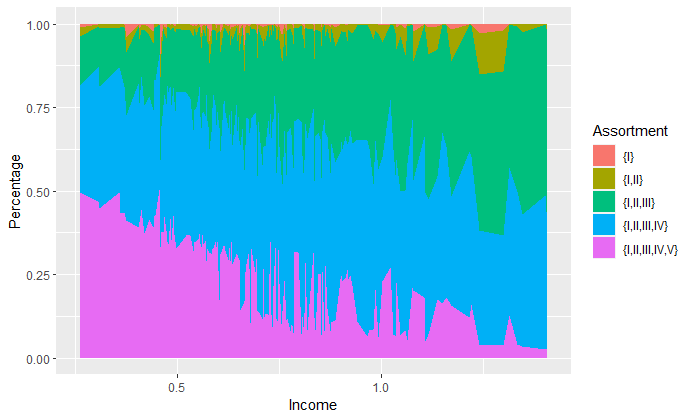}
    \caption{All Provinces and All Years}
    \label{fig:first}
\end{subfigure}
\hfill
\begin{subfigure}{0.49\textwidth}
    \includegraphics[width=\textwidth]{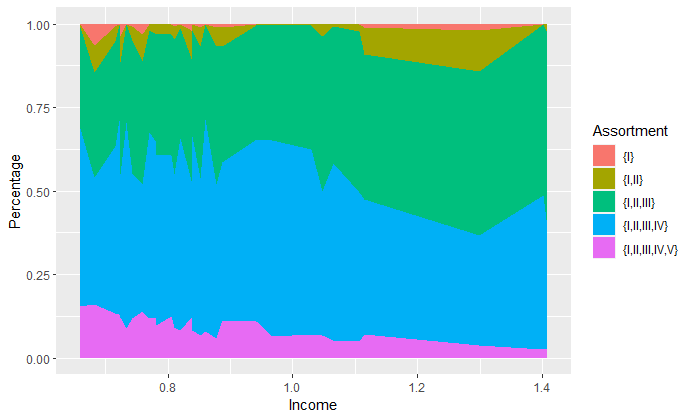}
    \caption{All Provinces and 2016}
    \label{fig:second}
\end{subfigure}
\hfill
\begin{subfigure}{0.49\textwidth}
    \includegraphics[width=\textwidth]{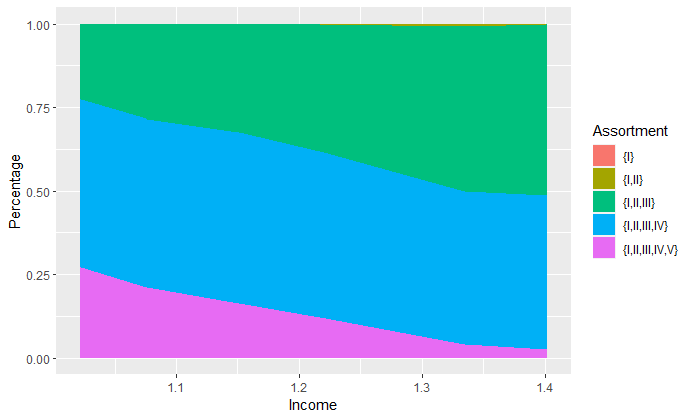}
    \caption{Beijing and All Years}
    \label{fig:third}
\end{subfigure}
\hfill
\begin{subfigure}{0.49\textwidth}
    \includegraphics[width=\textwidth]{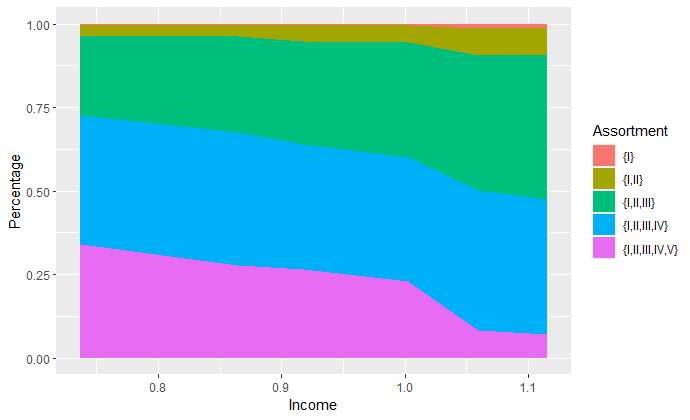}
    \caption{Jiangsu and All Years}
    \label{fig:fourth}
\end{subfigure}
        
\caption{Assortment Distribution}
\label{fig:1.1}
\end{figure}

Panel (a) depicts the assortment distribution over all provinces and all six years. As income increases, the percentage of the full assortment decreases from 49.5\% to 2.5\%, while the percentage of assortment \{I,II,III\} increases from 14.7\% to 58.6\%. Panel (b) illustrates the assortment distribution for all provinces in 2016. Similarly, as income increases, the percentage of the full assortment decreases from 15.5\% to 2.5\%, while the percentage of assortment \{I,II,III\} increases from 30.2\% to 58.6\%. Panel (c) showcases the assortment distribution in Beijing across all six years. Income strictly increases from 2011 to 2016. As income increases, the percentage of the full assortment decreases from 27.1\% to 2.7\%, while the percentage of assortment \{I,II,III\} increases from 22.4\% to 51.3\%. Lastly, Panel (d) details the assortment distribution in Jiangsu across all six years. Same as in Beijing, income strictly increases from 2011 to 2016. As income increases, the percentage of the full assortment decreases from 34.2\% to 7\%, while the percentage of assortment \{I,II,III\} increases from 23.4\% to 43.5\%.

\subsection{Counterfactual Experiments}

\subsubsection{Policy Implications}

Tobacco taxes are widely recognized as one of the most effective tools in tobacco control policies across various countries, as they reduce consumer demand for cigarettes through price increases while also generating tax revenue. Before implementing any tax increase policy, it is essential to thoroughly evaluate the potential implications for cigarette sales and tax revenue. Given that the misspecified standard logit model leads to biased estimates of parameters and price elasticities, it is crucial to incorporate endogenous assortment when assessing the implications of tax policy.

Following the May 2015 tax adjustment, the weighted tax share, represented as a percentage of the retail price, increased from 52\% in 2014 to 56\% in 2015 and 2016. Furthermore, the weighted excise tax share, also as a percentage of the retail price, rose from 31\% in 2014 to 34\% in 2015. However, both of these figures remain below the WHO recommended standards of 75\% and 70\%, respectively \citep{zheng2018tobacco}. To work towards achieving these WHO-recommended standards, a viable approach involves gradually increasing the ad valorem excise tax rate. This section simulates ad valorem excise tax increases of 5\%, 10\%, 15\%, and 20\%. For simplicity, we assume that the additional excise tax is imposed at the retail level, utilizing the current retail price as the tax base, and this does not impact existing after-tax unit wholesale margins.

Table \ref{tab:1.7} illustrates the impact of tax increases on sales for each cigarette tier under both models. These results align with earlier findings when prices for all tiers increase by 1\%. The foldable menu model observes a decrease in the sales of Tiers I to III, while the sales of Tiers IV and V increase. Conversely, the standard logit model shows a decrease in sales for all tiers, with a slight larger effect on total cigarette sales than under the foldable menu model.

\begin{table}[ht]\centering
\caption{Effect of Additional Tax on Cigarette Sales} 
\label{tab:1.7}
\resizebox{\columnwidth}{!}{%
\begin{tabular}{c|cccccc|cccccc} \hline \hline
       & \multicolumn{6}{c|}{Foldable Menu}          & \multicolumn{6}{c}{Standard Logit} \\ \hline 
       & Tier I & Tier II & Tier III & Tier IV & Tier V & All & Tier I  & Tier II  & Tier III & Tier IV & Tier V & All \\ \hline 
5\% & -10.35  & -6.40  & -4.01  & 2.69  & 12.09  & -2.83 & -6.57  & -3.85 & -2.78  & -1.21   & -0.16  & -2.99 \\
10\% & -19.43  & -12.37 & -7.99   & 4.81 & 24.32   & -5.52 & -12.62  & -7.54  & -5.49 & -2.43  & -0.36 & -5.86 \\
15\% & -27.44  & -17.91  & -11.90 & 6.39 & 36.58  & -8.07 & -18.20   & -11.07   & -8.13 & -3.68  & -0.61 & -8.61 \\
20\% & -34.51  & -23.07  & -15.71  & 7.50 & 48.85  & -10.48 & -23.36   & -14.45   & -10.70  & -4.93 & -0.91 & -11.25 
\\ \hline \hline
\multicolumn{13}{p{1.2\textwidth}}{\footnotesize Notes: Each row represents the effects of a tax increase on the percentage change in sales of each tier, as well as the percentage change in sales of all tiers. The left panel includes the values estimated using the foldable menu model, while the right panel includes the values estimated using the standard logit model.}\\ 
\end{tabular}%
}
\end{table}

Table \ref{tab:1.8} delves into the impact of tax increases on tax revenue under both models. As noted in Table \ref{tab:1.7}, the foldable menu model sees a more pronounced drop in sales for Tiers I to III, resulting in a smaller positive effect on tax revenue from these tiers. The effect on tax revenue from Tier I is even negative under the foldable menu model. Notably, due to the significantly higher unit tax on high-price cigarettes compared to low-price cigarettes, the standard logit model overstates the effect of additional excise tax on overall tax revenue by 58\% to 74\%.

\begin{table}[ht]\centering
\caption{Effect of Additional Tax on Tax Revenue} 
\label{tab:1.8}
\resizebox{\columnwidth}{!}{%
\begin{tabular}{c|cccccc|cccccc} \hline \hline
       & \multicolumn{6}{c|}{Foldable Menu}          & \multicolumn{6}{c}{Standard Logit} \\ \hline 
       & Tier I & Tier II & Tier III & Tier IV & Tier V & All & Tier I  & Tier II  & Tier III & Tier IV & Tier V & All \\ \hline 
5\% & -1.96  & 0.98  & 7.00  & 13.87  & 23.68  & 3.44 & 2.18  & 3.73 & 8.37  & 9.56   & 10.18  & 5.43 \\
10\% & -4.36  & 1.48 & 13.12   & 27.62 & 50.00   & 6.33 & 3.74  & 7.06  & 16.19 & 18.82  & 20.26 & 10.33 \\
15\% & -7.07  & 1.55  & 18.41 & 41.11 & 78.87  & 8.75 & 4.76   & 9.99   & 23.48 & 27.79  & 30.24 & 14.73 \\
20\% & -10.01  & 1.26  & 22.94  & 54.25 & 110.25  & 10.76 & 5.32   & 12.58   & 30.27  & 36.48 & 40.10 & 18.69 
\\ \hline \hline
\multicolumn{13}{p{1.2\textwidth}}{\footnotesize Notes: Each row represents the effects of a tax increase on the percentage change in tax revenue from each tier, as well as the percentage change in tax revenue from all tiers. The left panel includes the values estimated using the foldable menu model, while the right panel includes the values estimated using the standard logit model.}\\
\end{tabular}%
}
\end{table}

\subsubsection{Market Changes Under Full Availability}

This paper focuses on demand estimation using the foldable menu model, wherein consumers encounter diverse product assortments. As a natural extension, we conduct counterfactual experiments to explore how market dynamics change when all consumers have equal access to every cigarette tier. Utilizing estimated model parameters and simulated consumer incomes, we calculate anticipated sales for each cigarette tier and wholesale profits in this modified market environment. These computed values are then compared with their observed counterparts to quantify the resultant changes, as outlined in Table \ref{tab:1.9}.

\begin{table}[ht]\centering
\caption{Market Changes (\%) If All Tiers Are Available}
\resizebox{.5\columnwidth}{!}{%
\label{tab:1.9}
\begin{tabular}{cc} \hline \hline
Variables                     & Percent Change  \\ \hline
Tier I sales            & -26.90           \\
Tier II sales          & -21.37 \\
Tier III sales          & -13.49            \\
Tier IV sales          & 49.14 \\
Tier V sales           & 685.66 \\
Total Cigarette sales   & 37.81 \\
Wholesale profit              & -13.71            \\ \hline \hline
\end{tabular}
}
\end{table} 

The findings align with our expectations. When all cigarette tiers are made equally accessible to every consumer, we observe a decrease in cigarette sales of Tiers I to III, coupled with an increase in cigarette sales of Tiers IV and V. Consequently, total cigarette sales witness a notable increase of 37.81\%, indicating that rendering all tiers accessible to all consumers may not align with tobacco control objectives. Projected data suggests a remarkable sixfold increase in Tier V cigarette sales. This substantial growth can be attributed to the rise in the availability of Tier V cigarettes, estimated to be 24.36\% on average (ranging from as high as 50.43\% in 2011 to as low as 2.46\% in 2016) and projected to reach 100\% assuming full availability. The increased accessibility of low-priced cigarettes also appears to initiate smoking among non-smokers, evident in the overall surge in cigarette sales.

Our analysis reveals that cigarette wholesale profits could potentially decrease by around 13.71\% if CNTC deviated from the foldable menu approach. This observation offers valuable insights into CNTC's strategic decisions to limit the availability of low-priced cigarettes, reflecting a deliberate element of their monopoly strategy during the period.

\section{Extensions}

In this section, we relax the assumption of a monopolistic setting where the firm has complete knowledge of all consumer heterogeneities. We extend the foldable menu result to incorporate competition from other firms and random coefficients. Our analysis demonstrates that the result remains robust under these extended conditions.

\subsection{Foldable Menu with Competition}

We first demonstrate that Theorem \ref{theo1} holds even in the presence of competition from other firms. Specifically, regardless of the assortments offered by competitors, a firm can still optimize its assortment selection exclusively within the foldable menu. The proofs, detailed in Appendix \ref{Lemmas with Competition}, closely parallel those for the monopoly case. Next, we establish the existence of an equilibrium in assortment decisions. Additionally, we identify the Pareto-dominant equilibrium by beginning with a scenario in which each firm offers its single most profitable product. Firms then iteratively adjust their assortments in response to one another until the Pareto-dominant equilibrium is reached.

Consider an environment with $N$ firms engaged in assortment competition. Each firm $n$ owns a set of products $\mathcal{J}_n$. Since the best response is always a foldable menu, in any Nash equilibrium, firm $n$ selects its optimal assortment from the set $\{\vec{1}_n, \ldots, \vec{J}_n\}$, where $\vec{j}_n$ represents the top $j_n$ most profitable products for firm $n$. Let $\mathcal{L} = \{\vec{1}_1, \ldots, \vec{J}_1\} \times \ldots \times \{\vec{1}_N, \ldots, \vec{J}_N\}$ denote the set of all assortments consisting of foldable menus. We define a partial order on \(\mathcal{L}\) as:
\[
\ell \leq \ell' \quad \iff \quad \ell_n \subseteq \ell_n' \,\, \text{for all } n \in \{1, \ldots, N\}.
\]
We can show that $(\mathcal{L}, \geq)$ forms a complete lattice (see Appendix \ref{Complete Lattice}). Furthermore, within the space of assortments, an iteration operator $T$ can be defined as a best response mapping
\[
T_n(\ell) = \max_{\vec{k}_n} \quad
E\Pi_n(\vec{k}_n, \vec{j}_{-n})
\]
where $\ell = (\vec{j}_n,\vec{j}_{-n})$. We can show that the best response is non-decreasing; that is, if $\ell \leq \ell^\prime$, then $T(\ell) \leq T(\ell^\prime)$. 


\begin{lemma}
   $T$ is monotonic.
\end{lemma}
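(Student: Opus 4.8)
The plan is to show that if firm $n$'s rivals expand their assortments from $\vec{j}_{-n}$ to $\vec{j}'_{-n} \supseteq \vec{j}_{-n}$, then firm $n$'s best response weakly expands as well. Write $\vec{j}_n^* = T_n(\vec{j}_n,\vec{j}_{-n})$ for the old best response and $\vec{j}_n^{**} = T_n(\vec{j}_n,\vec{j}'_{-n})$ for the new one; since best responses lie in the foldable menu (the competition analogue of Theorem \ref{theo1}), both are determined by a cutoff-type index, namely the largest $k$ such that adding product $k$ to $\overrightarrow{k-1}$ is still profitable given the rivals' menu. So it suffices to show this index is non-decreasing in the rivals' assortments. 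The natural tool is a single-crossing / supermodularity argument: I would verify that the marginal gain to firm $n$ from enlarging its own menu by one product is non-decreasing when rivals enlarge theirs, i.e. that $E\Pi_n$ has increasing differences in (own menu, rivals' menu) on the lattice $\mathcal{L}$, and then invoke Topkis's monotonicity theorem to conclude $T_n$ is non-decreasing; doing this coordinate by coordinate gives monotonicity of the full map $T$.

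Concretely, I would fix firm $n$ and compare the incremental profit $E\Pi_{i,\mathcal{J}\cup\{m\}} - E\Pi_{i,\mathcal{J}}$ — now evaluated with the denominator inflated by the rivals' offerings — under $\vec{j}_{-n}$ versus the larger $\vec{j}'_{-n}$. When rivals add products, every one of firm $n$'s own choice probabilities shrinks (standard logit substitution), and so does the outside-option-plus-rivals "background" term grow; the key computation is that the sign and magnitude of the bracketed quantity in Lemma \ref{lemma} (rewritten, as in the proof of the cutoff lemma, as $\pi_m - \sum_{j'\in\mathcal{J}}(\pi_{j'}-\pi_m)\exp(\delta_{ij'})$ divided by a positive normalizer) becomes \emph{more} favorable to inclusion as the normalizer grows, because the competing term $\sum_{j'\in\mathcal{J}}(\pi_{j'}-\pi_m)\exp(\delta_{ij'})$ is scaled down relative to $\pi_m$. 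This pushes the optimal index weakly upward. I would state this as: for each consumer $i$ and each candidate extension, the threshold condition $\pi_{m+1} \ge E\Pi_{i,\vec{m}}$ is easier to satisfy when the logit denominator is larger, hence the per-consumer optimal menu length is non-decreasing in rivals' assortments; aggregating over consumers (via the cutoff partition on $\alpha_i$) preserves the monotonicity.

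The main obstacle I anticipate is handling the aggregation over heterogeneous consumers cleanly: a single consumer's optimal menu length jumps in integer steps at cutoffs $c^i_{j,j+1}$, and these cutoffs themselves move when rivals change their assortments, so I must check that the \emph{firm-level} best response — which is really a vector of cutoffs, or equivalently a menu-length function of $\alpha_i$ — moves monotonically in the lattice order, not just pointwise in some averaged sense. The right framing is that $T_n(\ell)$ should be read as the profile of cutoffs, and I would verify each cutoff $c_{j,j+1}$ shifts in the correct direction (so that the induced assortment-length function increases pointwise in $\alpha$), using the same strict-monotonicity structure as in the proof of Lemma \ref{lemma} (the one showing $c^i_{j-1,j}$ is increasing in $j$). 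A secondary technical point is ties: if margins are not all distinct the best response can be set-valued, in which case I would phrase monotonicity in Veinott's strong set order and note that Topkis still applies; under the paper's maintained assumption $\pi_1 > \cdots > \pi_J > 0$ the best response is single-valued and this issue disappears.
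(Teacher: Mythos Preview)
Your threshold argument---that the inclusion condition $\pi_{m+1}\ge E\Pi_{i,\vec m}$ becomes easier to satisfy when rivals expand, because $E\Pi_{i,\vec m}$ strictly decreases as the logit denominator grows---is correct and, combined with the competition version of Lemma~2, already delivers monotonicity of $T_n$. That is a clean alternative to the paper's route, which instead exploits the multiplicative structure of logit profits: it writes $E\Pi_n(\vec k_n,\vec j_{-n}')$ as $E\Pi_n(\vec k_n,\vec j_{-n})$ times a ratio of normalizers, shows this ratio is larger for the bigger own menu $\vec k_n$ than for any $\vec k_n'\le \vec k_n$, and concludes that any smaller menu dominated under $\vec j_{-n}$ remains dominated under $\vec j_{-n}'$. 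Both arguments ultimately establish a \emph{single-crossing} property, not increasing differences.

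There is, however, a genuine gap in your primary plan. You propose to verify that $E\Pi_n$ has increasing differences in (own menu, rivals' menu) and then invoke Topkis. Increasing differences fails here. Take two products with $\pi_1=2$, $\pi_2=1.5$, $\exp(\delta_1)=\exp(\delta_2)=1$: the incremental gain from adding product~2 to $\{1\}$ equals $(\pi_2-E\Pi_{\{1\}})\cdot\exp(\delta_2)/(\text{normalizer})\approx 0.167$ when rivals offer nothing, but drops to $\approx 0.014$ when the rivals' aggregate attractiveness is $100$. The first factor rises in the rivals' menu while the second falls, and the product is not monotone---only its \emph{sign} is preserved. So you should argue directly from the threshold (or invoke Milgrom--Shannon single crossing), not Topkis via supermodularity; your sentence about ``sign and magnitude'' becoming more favorable is half right.

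Finally, your anticipated ``main obstacle''---aggregation over heterogeneous consumers and movement of the cutoffs $c^i_{j,j+1}$---is not relevant to this lemma. In Section~6.1 the firm chooses an assortment for a single consumer type (note the $\delta_j$, not $\delta_{ij}$, in the profit expression defining $T_n$), so $T_n(\ell)$ returns one foldable menu, not a profile of cutoffs. The aggregation issue arises only under random coefficients in Section~6.2, where the paper \emph{assumes} monotonicity of $T$ rather than proving it.
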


\begin{proof}
    We consider the best response of firm $n$ that maximizes its profit 
    \[
    T_n(\ell) = 
    \arg \max_{\vec{k}_n} \quad
    \sum_{j\in \vec{k}_n} \pi_{j} \frac{\exp(\delta_{j})}{1+ \sum_{j\in \vec{k}_n} \exp(\delta_{j}) + \sum_{j \in \vec{j}_{-n}} \exp(\delta_{j}) }
    \]
    $\ell \leq \ell^\prime$ implies that $\sum_{j \in \vec{j}_{-n}} \exp(\delta_{j}) \leq \sum_{j \in \vec{j}_{-n}^\prime} \exp(\delta_{j})$.
    Let $T_n(\ell) = \vec{k}_n$ be the best response to $\vec{j}_{-n}$, we know that:
    \[
    E\Pi_n(\vec{k}_n, \vec{j}_{-n}) \geq E\Pi_n(\vec{k}_{n}^\prime, \vec{j}_{-n})
    \]
    for any $\vec{k}_{n}^\prime \leq \vec{k}_{n}$. To prove the non-decreasing property of the best response, we need to show:
    \[
    E\Pi_n(\vec{k}_n, \vec{j}_{-n}^\prime) \geq E\Pi_n(\vec{k}_{n}^\prime, \vec{j}_{-n}^\prime),
    \]
    for any $\vec{j}_{-n}^\prime \geq \vec{j}_{-n}$. We can rewrite $E\Pi_n(\vec{k}_n,\vec{j}_{-n}^\prime)$ and $E\Pi_n(\vec{k}_{n}^\prime, \vec{j}_{-n}^\prime)$ as
    \begin{align*}
    E\Pi_n(\vec{k}_n, \vec{j}_{-n}^\prime) &= E\Pi_n(\vec{k}_n, \vec{j}_{-n}) \times \frac{1+ \sum_{j\in \vec{k}_n} \exp(\delta_{j}) + \sum_{j \in \vec{j}_{-n}} \exp(\delta_{j}) }{1+ \sum_{j\in \vec{k}_n} \exp(\delta_{j}) + \sum_{j \in \vec{j}_{-n}^\prime} \exp(\delta_{j}) }, \\
    E\Pi_n(\vec{k}_{n}^\prime, \vec{j}_{-n}^\prime) &= E\Pi_n(\vec{k}_{n}^\prime, \vec{j}_{-n}) \times \frac{1+ \sum_{j\in \vec{k}_{n}^\prime} \exp(\delta_{j}) + \sum_{j \in \vec{j}_{-n}} \exp(\delta_{j}) }{1+ \sum_{j\in \vec{k}_{n}^\prime} \exp(\delta_{j}) + \sum_{j \in \vec{j}_{-n}^\prime} \exp(\delta_{j}) }.
    \end{align*}
    It is clear that:
    \begin{align*}
    & \frac{1+ \sum_{j\in \vec{k}_n} \exp(\delta_{j}) + \sum_{j \in \vec{j}_{-n}} \exp(\delta_{j}) }{1+ \sum_{j\in \vec{k}_n} \exp(\delta_{j}) + \sum_{j \in \vec{j}_{-n}^\prime} \exp(\delta_{j}) }  = 1 -  \frac{\sum_{j \in \vec{j}_{-n}^\prime \backslash \vec{j}_{-n}} \exp(\delta_{j}) }{1+ \sum_{j\in \vec{k}_n} \exp(\delta_{j}) + \sum_{j \in \vec{j}_{-n}^\prime} \exp(\delta_{j}) } \\
    \geq & 1 - \frac{\sum_{j \in \vec{j}_{-n}^\prime \backslash \vec{j}_{-n}} \exp(\delta_{j}) }{1+ \sum_{j\in \vec{k}_{n}^\prime} \exp(\delta_{j}) + \sum_{j \in \vec{j}_{-n}^\prime} \exp(\delta_{j}) } =  \frac{1+ \sum_{j\in \vec{k}_{n}^\prime} \exp(\delta_{j}) + \sum_{j \in \vec{j}_{-n}} \exp(\delta_{j}) }{1+ \sum_{j\in \vec{k}_{n}^\prime} \exp(\delta_{j}) + \sum_{j \in \vec{j}_{-n}^\prime} \exp(\delta_{j}) },
    \end{align*}
    for any $\vec{k}_{n}^\prime \leq \vec{k}_{n}$. Which implies:
     \[
    E\Pi_n(\vec{k}_n, \vec{j}_{-n}^\prime) \geq E\Pi_n(\vec{k}_{n}^\prime, \vec{j}_{-n}^\prime),
    \]
    for any $\vec{j}_{-n}^\prime \geq \vec{j}_{-n}$. Thus, the best response $T$ is non-decreasing.  
\end{proof}

The monotonicity of \( T \) implies that assortments are strategic complements. When competitors add products, some consumers switch to those competing products, which increases their sales. This, in turn, raises the profit-maximizing assortment size for the focal firm.

\begin{theo}
\label{LogitOligopoly}
Consider assortment competition under logit demand, at least one equilibrium exists; in any equilibrium, firms adopt foldable menus; the set of equilibria forms a complete lattice.
\end{theo}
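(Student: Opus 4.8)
The plan is to read Theorem~\ref{LogitOligopoly} as an application of Tarski's fixed-point theorem to the best-response operator $T$, with all three substantive inputs already available: (i) the competition analogue of Theorem~\ref{theo1} (Appendix~\ref{Lemmas with Competition}), which says that a firm's best response to \emph{any} profile $\vec{j}_{-n}$ of rival assortments is itself a foldable menu; (ii) the fact that $(\mathcal{L},\geq)$ is a complete lattice (Appendix~\ref{Complete Lattice}); and (iii) the monotonicity of $T$ established in the lemma just above. The argument then has three short steps: localize all equilibria inside $\mathcal{L}$, identify Nash equilibria with fixed points of $T$, and invoke Tarski.

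First I would show that every Nash equilibrium $\ell=(\ell_1,\dots,\ell_N)$ consists of foldable menus: each $\ell_n$ maximizes firm $n$'s expected profit against $\ell_{-n}$ over all subsets of $\mathcal{J}_n$, and by the competition version of Theorem~\ref{theo1} every such maximizer has the form $\vec{j}_n$, so $\ell\in\mathcal{L}$---which already gives the ``firms adopt foldable menus'' claim. Next, for $\ell\in\mathcal{L}$ the same lemma implies that firm $n$'s globally optimal response coincides with its optimal response among foldable menus, namely $T_n(\ell)$; hence $\ell$ is a Nash equilibrium if and only if $\ell=T(\ell)$, so the set of equilibria is exactly the fixed-point set of $T$ on $\mathcal{L}$. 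Finally, $T$ is a monotone self-map on the complete lattice $(\mathcal{L},\geq)$, so Tarski's fixed-point theorem shows that its fixed-point set is a nonempty complete lattice under the induced order; together with the previous two observations this delivers all three assertions.

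The one point requiring care is that the best response need not be single-valued: if two foldable menus tie in profit for some firm, $T$ is a correspondence and plain Tarski does not apply directly. Under distinct price-cost margins---the case emphasized in Section~\ref{Sampling} and maintained throughout the empirical application---the relevant profit comparisons are strict (exactly as in the monopoly cutoff argument), so $T$ is a genuine function and Tarski applies verbatim. In full generality one instead invokes the lattice fixed-point theorem for monotone correspondences (Zhou; Topkis): each $T_n(\ell)$, being the maximizer set of a function over the chain $\{\vec{1}_n,\dots,\vec{J}_n\}$, is a subcomplete sublattice, the correspondence $T$ is monotone by the preceding lemma, and hence has a nonempty complete lattice of fixed points. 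As a constructive by-product, since $\mathcal{L}$ is finite, iterating $T$ from the profile in which each firm offers only its single most profitable product, $(\vec{1}_1,\dots,\vec{1}_N)$, generates a monotone sequence that stabilizes after finitely many rounds at the least fixed point---the Pareto-dominant equilibrium referenced in the discussion preceding the theorem.
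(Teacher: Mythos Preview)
Your proposal is correct and essentially mirrors the paper's own proof, which invokes Topkis's fixed-point theorem on the monotone best-response map over the complete lattice $\mathcal{L}$, together with the competition version of Theorem~\ref{theo1} to conclude that equilibria must be foldable menus. Your treatment is in fact more careful than the paper's on one point: you explicitly flag the possibility that $T$ is a correspondence rather than a function and appeal to the Zhou/Topkis extension in that case, whereas the paper simply cites Topkis without comment.
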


\begin{proof}
Given that \( T \) is non-decreasing, Topkis's fixed point theorem \citep{topkis1998supermodularity} guarantees the existence of at least one equilibrium, and the set of equilibria forms a complete lattice. Since firms only choose optimal assortments from the foldable menu, in any equilibrium, firms adopt foldable menus.
\end{proof}

Furthermore, it is straightforward to show that a firm's optimal profit decreases as its competitors expand their product offerings. Consequently, the "best" equilibrium is characterized by the fewest products among all possible equilibria. This property, combined with the non-decreasing nature of \(T\), yields the following result: starting from an initial state where each firm \(n\) offers only its most profitable product (i.e., \(\ell_n = \vec{1}_n\) for all \(n \in \{1, \ldots, N\}\)), firms iteratively adjust their assortments in response to one another's decisions. This iterative process ultimately converges to the Pareto-dominant equilibrium, which minimizes the number of products offered by each firm among all potential equilibria.

\subsection{Foldable Menu with Random Coefficients}
In this part, we relax the assumption that all consumer heterogeneities are observable to the firm by introducing random coefficients. Specifically, we assume there is a group of consumers buying from each point of sale rather than using a representative consumer. 

\begin{lemma}
\label{lemma 6}
In an assortment competition model under random coefficients logit demand, each firm includes at least its most profitable product in its assortment in any equilibrium.
\end{lemma}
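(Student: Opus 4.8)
The plan is to prove the contrapositive at the level of best responses: if firm $n$'s assortment $\mathcal{J}_n$ omits its most profitable product, then adding that product strictly raises firm $n$'s expected profit against \emph{any} fixed profile $\mathcal{J}_{-n}$ of rivals, so such a $\mathcal{J}_n$ cannot be part of any equilibrium. The argument is the aggregate counterpart of Lemma 1: I would establish the profit gain one consumer type at a time, so that it is preserved after integrating over the unobserved random coefficients (and, if the firm tailors assortments across its own points of sale, applies at each of them).

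First I would fix firm $n$ and write $1_n$ for its most profitable product, so that $\pi_{1_n}\ge \pi_{j}$ for every product $j$ in firm $n$'s catalog; recall margins are nonnegative, and if $\pi_{1_n}=0$ the claim is vacuous, so assume $\pi_{1_n}>0$. Suppose $1_n\notin\mathcal{J}_n$. For a consumer type $i$, let $B_i = 1 + \sum_{k\in\mathcal{J}_{-n}}\exp(\delta_{ik})$ collect the outside option and all rival products (this does not depend on firm $n$'s choice and satisfies $B_i\ge 1$), let $D_i = B_i + \sum_{j\in\mathcal{J}_n}\exp(\delta_{ij})$, and let $e_i = \exp(\delta_{i,1_n})$. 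The profit firm $n$ collects from type $i$ under $\mathcal{J}_n$ is $\Pi_{in}(\mathcal{J}_n) = \big(\sum_{j\in\mathcal{J}_n}\pi_{j}\exp(\delta_{ij})\big)/D_i$, and a one-line computation gives
\begin{equation*}
\Pi_{in}\big(\mathcal{J}_n\cup\{1_n\}\big) - \Pi_{in}(\mathcal{J}_n) = \frac{e_i\,\big(\pi_{1_n} - \Pi_{in}(\mathcal{J}_n)\big)}{D_i + e_i}.
\end{equation*}

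The key step is the bound $\Pi_{in}(\mathcal{J}_n) < \pi_{1_n}$. Indeed, $\Pi_{in}(\mathcal{J}_n)$ is a weighted average of the margins $\{\pi_j\}_{j\in\mathcal{J}_n}$ with weights $\exp(\delta_{ij})/D_i \ge 0$ that sum to $1 - B_i/D_i < 1$; since each such margin is at most $\pi_{1_n}$ and margins are nonnegative, $\Pi_{in}(\mathcal{J}_n)\le (1-B_i/D_i)\,\pi_{1_n} < \pi_{1_n}$, where strictness uses $B_i\ge 1$ and $\pi_{1_n}>0$. Hence the displayed difference is strictly positive for every type $i$. Summing (or integrating) over the consumer types served by firm $n$ — which leaves every $B_i$ untouched because $\mathcal{J}_{-n}$ is held fixed — yields $E\Pi_n(\mathcal{J}_n\cup\{1_n\},\mathcal{J}_{-n}) > E\Pi_n(\mathcal{J}_n,\mathcal{J}_{-n})$, contradicting the optimality of $\mathcal{J}_n$. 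Therefore in any equilibrium each firm's assortment contains its most profitable product.

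I do not expect a genuine obstacle here; the one thing that needs care is that the improvement must be shown type-by-type rather than only in the aggregate, so that it is robust to whatever (unobserved) mixing distribution over random coefficients the firm faces. This is exactly what lets the simple Lemma 1 argument survive without invoking the full foldable-menu characterization, which need not extend verbatim once within-location heterogeneity is unobservable. A secondary point worth stating explicitly is that the strict gain relies on the presence of an outside option (so $B_i\ge 1$) together with nonnegativity of margins; both hold throughout the paper's setup.
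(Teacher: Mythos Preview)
Your proposal is correct and follows essentially the same approach as the paper: compute the type-$i$ profit difference as $(\pi_{1_n}-\Pi_{in}(\mathcal{J}_n))$ times a positive logit weight, bound $\Pi_{in}(\mathcal{J}_n)\le\pi_{1_n}$ because it is a sub-convex average of margins no larger than $\pi_{1_n}$, and then integrate over the random-coefficient distribution. Your version is slightly more careful in securing a \emph{strict} inequality via $B_i\ge 1$ and $\pi_{1_n}>0$, whereas the paper's proof records only the weak inequality $E\Pi_{\mathcal{J}\cup\{1\}}\ge E\Pi_{\mathcal{J}}$; this extra care is exactly what is needed to rule out an equilibrium assortment that omits $1_n$.
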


\begin{proof}
    Refer to the proof for Lemma 1 in Appendix \ref{Lemmas with Competition}, in an assortment competition model under random coefficients logit demand, we have 
    \begin{align*}
    E \Pi_{\mathcal{J} \cup \{m\}} - E \Pi_{\mathcal{J}} = \int (\pi_m - E \Pi_{i,\mathcal{J}}) \frac{\exp(\delta_{im})}{1+\sum_{j \in \mathcal{J} \cup \{m\}  \cup \mathcal{J}_c} \exp(\delta_{ij})} dP_0(\delta),
    \end{align*}
    where $P_0(\delta)$ denotes the population density of $\delta_i$. Since $\pi_1 \geq E \Pi_{i,\mathcal{J}}$ for any consumer $i$ and assortment $\mathcal{J}$, it follows that $E \Pi_{\mathcal{J} \cup \{1\}} \geq E \Pi_{\mathcal{J}}$ for any $\mathcal{J}$. Therefore, in an assortment competition model under random coefficients logit demand, each firm includes at least its most profitable product in its assortment in any equilibrium.
\end{proof}

\begin{theo}
\label{RandomOligopoly}
In an assortment competition model under logit demand with random coefficients, if firms restrict their strategies to foldable menus and $T$ is monotonic, at least one equilibrium exists and the set of equilibria forms a complete lattice.
\end{theo}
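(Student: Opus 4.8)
The plan is to reduce the statement to an application of Topkis's fixed-point theorem, mirroring the structure of the proof of Theorem \ref{LogitOligopoly}. First I would recall the setup: under the maintained restriction to foldable menus, each firm $n$'s strategy space is the finite chain $\{\vec{1}_n,\ldots,\vec{J}_n\}$ ordered by inclusion, so the joint strategy space $\mathcal{L}$ is the product of finite chains. A finite product of finite chains is a complete lattice under the coordinatewise order $\ell\leq\ell' \iff \ell_n\subseteq\ell'_n$ for all $n$ — this is exactly the argument already invoked in Appendix \ref{Complete Lattice} for the monopoly/competition case, and nothing about random coefficients changes it, since the lattice structure is purely combinatorial and does not reference the demand system.

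Next I would invoke the hypothesis that $T$ is monotonic (which is explicitly assumed in the statement, so I do not need to reprove it for the random-coefficients case — indeed the commented-out material in the excerpt shows this is exactly the step the authors could not establish in general). With $T$ a non-decreasing self-map of the complete lattice $\mathcal{L}$, Topkis's fixed-point theorem \citep{topkis1998supermodularity} applies directly: the set of fixed points of $T$ is nonempty, and it is itself a complete lattice under the induced order. Since a Nash equilibrium in foldable-menu strategies is precisely a fixed point of the best-response map $T$, this yields both conclusions — existence of at least one equilibrium and the complete-lattice structure of the equilibrium set.

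The only substantive point to handle carefully is that $T$ is genuinely single-valued (or, if not, that one selects a monotone selection), since Topkis's theorem as usually stated for correspondences requires the ascending best-response correspondence and gives a lattice of fixed points of the selection; here one should note that with random coefficients and continuously distributed heterogeneity the expected-profit objective is generically maximized at a unique foldable menu for each firm, so $T$ is well-defined as a function, and in the non-generic tie case one breaks ties toward the smaller assortment (consistent with the Pareto-dominant selection discussed after Theorem \ref{LogitOligopoly}). The genuinely hard part — monotonicity of $T$ under random coefficients — is assumed away in the hypothesis, so the remaining proof is essentially a verification that the ingredients of Topkis's theorem are in place; I would keep it to a few lines, emphasizing that the argument is identical in form to the proof of Theorem \ref{LogitOligopoly}, the only difference being that monotonicity is now a hypothesis rather than a lemma.
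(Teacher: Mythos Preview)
Your proposal is correct and matches the paper's approach: the paper does not write out a separate proof for Theorem~\ref{RandomOligopoly}, but the intended argument is identical in form to the proof of Theorem~\ref{LogitOligopoly}---namely, the complete-lattice structure of $\mathcal{L}$ (Appendix~\ref{Complete Lattice}) plus the assumed monotonicity of $T$ feed directly into Topkis's fixed-point theorem. Your additional remark on tie-breaking for single-valuedness of $T$ is a reasonable technical safeguard that the paper leaves implicit.
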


The empirical literature often employs a separable random coefficient model defined as $\delta_{ij} = \delta_j + \eta_{ij}$, where $\eta_{i\cdot}$ is a mean-zero random vector that is i.i.d. across consumers, and firms observe only $\delta_j$. Under this framework, firm $n$ achieves the following expected profit from a foldable assortment $(\vec{k}_n, \vec{j}_{-n})$:
\[
E\Pi_n(\vec{k}_n,\vec{j}_{-n}) = \sum_{j\in \vec{k}_n} \pi_{j} \exp(\delta_{j}) \int \frac{\exp(\eta_{ij})}{1+ \sum_{j\in \vec{k}_n} \exp(\delta_{j}+\eta_{ij}) + \sum_{j \in \vec{j}_{-n}} \exp(\delta_{j}+\eta_{ij}) } d\eta_{i\cdot}
\]
In this setting, a sufficient condition for the monotonicity of $T$ is:
    \[
    \frac{ \int \frac{\exp(\eta_{ij})}{1+ \sum_{j\in \vec{k}_n} \exp(\delta_{j}+\eta_{ij}) + \sum_{j \in \vec{j}_{-n}^\prime} \exp(\delta_{j}+\eta_{ij}) } d\eta_{i\cdot}}
    { \int \frac{\exp(\eta_{ij})}{1+ \sum_{j\in \vec{k}_n} \exp(\delta_{j}+\eta_{ij}) + \sum_{j \in \vec{j}_{-n}} \exp(\delta_{j}+\eta_{ij}) } d\eta_{i\cdot}} >
    \frac{ \int \frac{\exp(\eta_{ij})}{1+ \sum_{j\in \vec{k}_n^\prime} \exp(\delta_{j}+\eta_{ij}) + \sum_{j \in \vec{j}_{-n}^\prime} \exp(\delta_{j}+\eta_{ij}) } d\eta_{i\cdot}}
    { \int \frac{\exp(\eta_{ij})}{1+ \sum_{j\in \vec{k}_n^\prime} \exp(\delta_{j}+\eta_{ij}) + \sum_{j \in \vec{j}_{-n}} \exp(\delta_{j}+\eta_{ij}) } d\eta_{i\cdot}},
    \] 
for any $\vec{k}_n^\prime \leq \vec{k}_n$ and $\vec{j}_{-n}^\prime \geq \vec{j}_{-n}$.

Foldable menus are particularly appealing due to their simplicity in implementation. Additionally, the iterative approach described in the previous subsection can be employed to identify the ``best'' Nash equilibrium. The optimality of restricting to foldable menus can be readily verified, and any potential loss from this restriction can be calculated. Specifically, given the competing assortments in a foldable menu equilibrium, firm $n$ need only compare its foldable menu to its remaining $2^{J_n} - 1$ feasible assortments. This comparison is computationally efficient, as firms generally own a small number of products in most practical applications. Appendix \ref{SimuRandCoeff} confirms that, under our estimated model, foldable menus closely approximate the optimal menus across a wide range of consumer heterogeneity.



\section{Conclusion}

This paper presents a novel approach to demand estimation that accounts for unobserved choice set heterogeneity while relying solely on aggregate sales data. In the absence of price discrimination, a monopoly firm serving each market can maximize expected profit by selecting optimal assortments from a "foldable menu," where the number of assortments corresponds to the number of available products. We recover unobserved choice sets using the firm's profit maximization conditions and propose an estimation procedure for the model.

We apply our method to real data from China’s tobacco industry to demonstrate its effectiveness. The results highlight the critical importance of accounting for choice set heterogeneity, as neglecting it can lead to underestimated price elasticities. This underestimation stems from misattributing limited product substitution to inelastic demand. Consequently, tax policies based on a misspecified model may be misleading and produce unintended consequences. Simulated tax increases reveal that failing to account for endogenous assortments results in underestimated declines in higher-tier product sales, incorrect directional predictions for lower-tier sales, and an overestimation of tax revenue by more than 50\%. Additionally, making all cigarette tiers available leads to an increase in total cigarette sales (+37.81\%) but reduces wholesale profits (-13.71\%).

Our method is flexible enough to account for richer consumer heterogeneity. In our application, we specifically address heterogeneity in price sensitivity by calculating a set of price sensitivity cutoffs for each market during the estimation of our empirical model. When consumer heterogeneity extends across multiple dimensions, these cutoffs become contingent on consumer-specific quality measures, which introduces computational challenges. However, aside from this complexity, the estimation process remains largely unchanged.

Finally, we extend our methodology to settings that include competition and random coefficients. We demonstrate that the foldable menu result remains robust in these scenarios, proving the existence of at least one equilibrium in assortment competition, which can be identified using an iterative procedure. These results underscore the flexibility and wide applicability of our approach across diverse contexts.

\newpage

\baselineskip14pt

\begin{appendices}

\setstretch{1.5}

\section{Proof of Lemmas}

\subsection{Lemma 1}
\label{Lemma 1}

\begin{proof}
\emph{(Lemma 1)}
Under the logit model, we demonstrate that, for any consumer $i$, the inequality $\pi_m \geq E \Pi_{i,\mathcal{J}}$ is equivalent to $E \Pi_{i,\mathcal{J} \cup {m}} \geq E \Pi_{i,\mathcal{J}}$. 
\begin{align}
E \Pi_{i,\mathcal{J} \cup \{m\}} = &
\sum_{j \in \mathcal{J} \cup \{m\}} \pi_j \frac{\exp(\delta_{ij})}{1+\sum_{j \in \mathcal{J} \cup \{m\}} \exp(\delta_{ij})} \nonumber \\
= & \sum_{j \in \mathcal{J}} \pi_j \frac{\exp(\delta_{ij})}{1+\sum_{j \in \mathcal{J} \cup \{m\}} \exp(\delta_{ij})} + \pi_m \frac{\exp(\delta_{im})}{1+\sum_{j \in \mathcal{J} \cup \{m\}} \exp(\delta_{ij})} \nonumber \\
= & E \Pi_{i,\mathcal{J}} \frac{1+\sum_{j \in \mathcal{J}} \exp(\delta_{ij})}{1+\sum_{j \in \mathcal{J} \cup \{m\}} \exp(\delta_{ij})} + \pi_m \frac{\exp(\delta_{im})}{1+\sum_{j \in \mathcal{J} \cup \{m\}} \exp(\delta_{ij})}.
\end{align}
The last equality follows from the definition of $E \Pi_{i,\mathcal{J}}$, where
\[
 E \Pi_{i,\mathcal{J}}  = \sum_{j \in \mathcal{J}} \pi_j \frac{\exp(\delta_{ij})}{1+\sum_{j \in \mathcal{J}} \exp(\delta_{ij})}.
\]
Then, we have
\begin{align}
E \Pi_{i,\mathcal{J} \cup \{m\}} - E \Pi_{i,\mathcal{J}} = & E \Pi_{i,\mathcal{J}} \frac{1+\sum_{j \in \mathcal{J}} \exp(\delta_{ij})}{1+\sum_{j \in \mathcal{J} \cup \{m\}} \exp(\delta_{ij})} + \pi_m \frac{\exp(\delta_{im})}{1+\sum_{j \in \mathcal{J} \cup \{m\}} \exp(\delta_{ij})} - E \Pi_{i,\mathcal{J}} \nonumber \\
= & (\pi_m - E \Pi_{i,\mathcal{J}}) \frac{\exp(\delta_{im})}{1+\sum_{j \in \mathcal{J} \cup \{m\}} \exp(\delta_{ij})},
\end{align}
so that $\pi_m \geq E \Pi_{i,\mathcal{J}}$ is equivalent to $E \Pi_{i,\mathcal{J} \cup {m}} \geq E \Pi_{i,\mathcal{J}}$. Therefore, it is more profitable to add a new option with a higher profit-margin than the current average profit. Given that we arrange the products in descending order of unit margin, $\pi_1 \geq E \Pi_{i,\mathcal{J}}$ is trivially satisfied, because $\pi_1 \geq \pi_j$ for all $j \in \mathcal{J}$, and $E \Pi_{i,\mathcal{J}}$ represents a weighted average of $\pi_j$ in $\mathcal{J}$. 
\end{proof}

\subsection{Lemma 2}
\label{Lemma 2}

\begin{proof}
\emph{(Lemma 2)}
First, we show that if $E \Pi_{i,\mathcal{J}\cup\{m\}} \leq E \Pi_{i,\mathcal{J}}$, then $E \Pi_{i, \mathcal{J}\cup\{m,m^{\prime}\}}  \leq E \Pi_{i,\mathcal{J}\cup\{m\}}$ for all $m^{\prime}>m$. That is, if adding $m$ is worse, adding $m^{\prime}$ is worse on top of adding $m$. $E \Pi_{i,\mathcal{J}\cup\{m\}} \leq E \Pi_{i,\mathcal{J}}$ implies that $\pi_{m} \leq  E \Pi_{i,\mathcal{J}}$. We want to show that $\pi_{m^{\prime}} \leq E \Pi_{i,\mathcal{J}\cup\{m\}}$.
\begin{align}
E \Pi_{i,\mathcal{J}\cup\{m\}}  = & \sum_{j\in\mathcal{J}\cup\{m\}}\pi_{j}\frac{\exp(\delta_{ij})}{1+\sum_{j\in\mathcal{J}\cup\{m\}}\exp(\delta_{ij})}	\nonumber \\
= & \sum_{j\in\mathcal{J}}\pi_{j}\frac{\exp(\delta_{ij})}{1+\sum_{j\in\mathcal{J}\cup\{m\}}\exp(\delta_{ij})}+\pi_{m}\frac{\exp(\delta_{im})}{1+\sum_{j\in\mathcal{J}\cup\{m\}}\exp(\delta_{ij})} \nonumber \\
= & E \Pi_{i,\mathcal{J}} \frac{1+\sum_{j \in \mathcal{J}} \exp(\delta_{ij})}{1+\sum_{j \in \mathcal{J} \cup \{m\}} \exp(\delta_{ij})} +\pi_{m}\frac{\exp(\delta_{im})}{1+\sum_{j\in\mathcal{J}\cup\{m\}}\exp(\delta_{ij})} \nonumber \\
\geq & \pi_m \frac{1+\sum_{j\in\mathcal{J}}\exp(\delta_{ij})}{1+\sum_{j\in\mathcal{J}\cup\{m\}}\exp(\delta_{ij})}+\pi_{m}\frac{\exp(\delta_{im})}{1+\sum_{j\in\mathcal{J}\cup\{m\}}\exp(\delta_{ij})}  \nonumber \\
= & \pi_{m}\geq\pi_{m^{\prime}}.
\end{align}
The third equality follows from the definition of $E \Pi_{i,\mathcal{J}}$, and the first inequality follows from $\pi_m \leq E \Pi_{i,\mathcal{J}}$.

Second, we show that if $E \Pi_{i,\mathcal{J}\cup\{m,m^{\prime}\}} \geq E \Pi_{i,\mathcal{J}\cup\{m\}}$, then $E \Pi_{i, \mathcal{J}\cup\{m\}}  \geq E \Pi_{i,\mathcal{J}}$ for all $m < m^{\prime}$. That is, if removing $m^{\prime}$ is worse, removing $m$ is worse on top of removing $m^{\prime}$. $E \Pi_{i,\mathcal{J}\cup\{m,m^{\prime}\}} \geq E \Pi_{i,\mathcal{J}\cup\{m\}}$ implies that $\pi_{m^{\prime}} \geq  E \Pi_{i,\mathcal{J}\cup\{m\}}$. We want to show that $\pi_{m} \geq E \Pi_{i,\mathcal{J}}$. We have
\begin{align}
E \Pi_{i,\mathcal{J}\cup\{m\}}  = & \sum_{j\in\mathcal{J}\cup\{m\}}\pi_{j}\frac{\exp(\delta_{ij})}{1+\sum_{j\in\mathcal{J}\cup\{m\}}\exp(\delta_{ij})}	\nonumber \\
= & \sum_{j\in\mathcal{J}}\pi_{j}\frac{\exp(\delta_{ij})}{1+\sum_{j\in\mathcal{J}\cup\{m\}}\exp(\delta_{ij})}+\pi_{m}\frac{\exp(\delta_{im})}{1+\sum_{j\in\mathcal{J}\cup\{m\}}\exp(\delta_{ij})} \nonumber \\
= & E \Pi_{i,\mathcal{J}} \frac{1+\sum_{j \in \mathcal{J}} \exp(\delta_{ij})}{1+\sum_{j \in \mathcal{J} \cup \{m\}} \exp(\delta_{ij})} +\pi_{m}\frac{\exp(\delta_{im})}{1+\sum_{j\in\mathcal{J}\cup\{m\}}\exp(\delta_{ij})} \nonumber \\
\leq & \pi_{m^{\prime}} \leq \pi_{m}.
\end{align}
The third equality follows from the definition of $E \Pi_{i,\mathcal{J}}$. After moving
\[
\pi_{m}\frac{\exp(\delta_{im})}{1+\sum_{j\in\mathcal{J}\cup\{m\}}\exp(\delta_{ij})}
\]
to the RHS of the last inequality, we obtain
\[
E \Pi_{i,\mathcal{J}} \frac{1+\sum_{j \in \mathcal{J}} \exp(\delta_{ij})}{1+\sum_{j \in \mathcal{J} \cup \{m\}} \exp(\delta_{ij})} \leq \pi_m \frac{1+\sum_{j \in \mathcal{J}} \exp(\delta_{ij})}{1+\sum_{j \in \mathcal{J} \cup \{m\}} \exp(\delta_{ij})},
\]
which implies $\pi_{m} \geq E \Pi_{i,\mathcal{J}}$.
\end{proof}

\subsection{Lemma 3}
\label{Lemma 3}

\begin{proof}
\emph{(Lemma 3)}
Given Lemma 2, when $\vec{m}$ is optimal among $\{\vec{1},\vec{2},\cdots,\vec{J}\}$, we have $\pi_m \geq E \Pi_{i,\vec{j}}$ for any $j < m$, and $E \Pi_{i,\vec{m}} > \pi_j$ for any $j > m$. Now, we show that $\vec{m}$ also dominates all non-consecutive assortments, such as $\{1,3\}$, $\{1,2,4\}$, etc.

Start with the case $m \geq 3$. First, we compare $\vec{m}$ and its own subsets. Let $\mathbf{k}$ be the set of missing internal products of $\vec{m}$, i.e., $\forall k \in \mathbf{k}, 1<k<m$. Let $\vec{m} \backslash \mathbf{k}$ be the non-consecutive assortment, we have
\begin{align}
& E \Pi_{i,\vec{m}} - E \Pi_{i,\vec{m} \backslash \mathbf{k}} \nonumber \\
= &  \frac{\sum_{j \in \vec{m}} \pi_j \exp(\delta_{ij})}{1 + \sum_{j \in \vec{m}} \exp(\delta_{ij})} - \frac{\sum_{j \in \vec{m} \backslash \mathbf{k}}\pi_j \exp(\delta_{ij})}{1 + \sum_{j \in \vec{m} \backslash \mathbf{k}} \exp(\delta_{ij})} \nonumber \\
= & \frac{(\sum_{j \in \vec{m}} \pi_j \exp(\delta_{ij}))(1 + \sum_{j \in \vec{m} \backslash \mathbf{k}} \exp(\delta_{ij})) - (\sum_{j \in \vec{m} \backslash \mathbf{k}}\pi_j \exp(\delta_{ij}))(1 + \sum_{j \in \vec{m}} \exp(\delta_{ij}))} {(1 + \sum_{j \in \vec{m}} \exp(\delta_{ij})) (1 + \sum_{j \in \vec{m} \backslash \mathbf{k}} \exp(\delta_{ij}))}.
\end{align}
We can simplify the numerator as
\begin{align}
& \sum_{j \in \vec{m}} \pi_j \exp(\delta_{ij}) + [\sum_{j \in \vec{m}} \pi_j \exp(\delta_{ij})][\sum_{j \in \vec{m}} \exp(\delta_{ij}) - \sum_{j \in  \mathbf{k}} \exp(\delta_{ij})] \nonumber \\
- & \sum_{j \in \vec{m}\backslash \mathbf{k}} \pi_j \exp(\delta_{ij}) - [\sum_{j \in \vec{m}} \pi_j \exp(\delta_{ij}) - \sum_{j \in  \mathbf{k}} \pi_j \exp(\delta_{ij})][\sum_{j \in \vec{m}} \exp(\delta_{ij})] \nonumber \\
= & \sum_{j \in \mathbf{k}} \pi_j \exp(\delta_{ij}) - [\sum_{j \in \vec{m}} \pi_j \exp(\delta_{ij})][\sum_{j \in  \mathbf{k}} \exp(\delta_{ij})] + [\sum_{j \in  \mathbf{k}} \pi_j \exp(\delta_{ij})][\sum_{j \in \vec{m}} \exp(\delta_{ij})].
\end{align}
For each $k \in  \mathbf{k}$, we have
\begin{align}
\label{eq:16}
& \pi_k \exp(\delta_{ik}) - [\sum_{j \in \vec{m}} \pi_j \exp(\delta_{ij})]\exp(\delta_{ik}) +  \pi_k \exp(\delta_{ik})[\sum_{j \in \vec{m}} \exp(\delta_{ij})] \nonumber \\
= & [\pi_k (1+\sum_{j \in \vec{m}} \exp(\delta_{ij})) - \sum_{j \in \vec{m}} \pi_j \exp(\delta_{ij})]\exp(\delta_{ik})\nonumber \\
= & [\pi_k (1+\sum_{j \in \vec{k}} \exp(\delta_{ij})) - \sum_{j \in \vec{k}} \pi_j \exp(\delta_{ij}) + \sum_{j=k+1}^m(\pi_k - \pi_j) \exp(\delta_{ij})]\exp(\delta_{ik})\nonumber \\
\geq & [\pi_k (1+\sum_{j \in \vec{k}} \exp(\delta_{ij})) - \sum_{j \in \vec{k}} \pi_j \exp(\delta_{ij})]\exp(\delta_{ik})\nonumber \\
\geq & [\pi_m (1+\sum_{j \in \vec{k}} \exp(\delta_{ij})) - \sum_{j \in \vec{k}} \pi_j \exp(\delta_{ij})]\exp(\delta_{ik})\nonumber \\
= & (\pi_m - E \Pi_{i,\vec{k}})(1+\sum_{j \in \vec{k}} \exp(\delta_{ij}))\exp(\delta_{ik})\geq 0.
\end{align}
The second equality follows from  isolating the non-negative term $\sum_{j=k+1}^m(\pi_k - \pi_j) \exp(\delta_{ij})$,  the second inequality follows from $\pi_m \leq \pi_k$ for any $k < m$, the last equality follows from the definition of $E \Pi_{i,\vec{k}}$, and the last inequality follows from $E \Pi_{i,\vec{k}} \leq \pi_m$ for any $k < m$. So we have $E \Pi_{i,\vec{m}} \geq E \Pi_{i,\vec{m} \backslash \mathbf{k}}$.

Second, for the case $n > m$, let $\mathbf{k}$ be the set of missing internal products of $\vec{n}$, i.e., $\forall k \in \mathbf{k}, 1<k<n$. Let $\vec{n} \backslash \mathbf{k}$ be the non-consecutive assortment, we have
\begin{align}
& E \Pi_{i,\vec{m}} - E \Pi_{i,\vec{n} \backslash \mathbf{k}} \nonumber \\
= &  \frac{\sum_{j \in \vec{m}} \pi_j \exp(\delta_{ij})}{1 + \sum_{j \in \vec{m}} \exp(\delta_{ij})} - \frac{\sum_{j \in \vec{n} \backslash \mathbf{k}}\pi_j \exp(\delta_{ij})}{1 + \sum_{j \in \vec{n} \backslash \mathbf{k}} \exp(\delta_{ij})} \nonumber \\
= & \frac{(\sum_{j \in \vec{m}} \pi_j \exp(\delta_{ij}))(1 + \sum_{j \in \vec{n} \backslash \mathbf{k}} \exp(\delta_{ij})) - (\sum_{j \in \vec{n} \backslash \mathbf{k}}\pi_j \exp(\delta_{ij}))(1 + \sum_{j \in \vec{m}} \exp(\delta_{ij}))} {(1 + \sum_{j \in \vec{m}} \exp(\delta_{ij})) (1 + \sum_{j \in \vec{n} \backslash \mathbf{k}} \exp(\delta_{ij}))}.
\end{align}
We can simplify the numerator as follows:
\begin{align}
\label{eq:18}
& \sum_{j \in \vec{m}} \pi_j \exp(\delta_{ij}) + [\sum_{j \in \vec{m}} \pi_j \exp(\delta_{ij})][\sum_{j \in \vec{n}} \exp(\delta_{ij}) - \sum_{j \in  \mathbf{k}} \exp(\delta_{ij})] \nonumber \\
- & \sum_{j \in \vec{n}\backslash \mathbf{k}} \pi_j \exp(\delta_{ij}) - [\sum_{j \in \vec{n}} \pi_j \exp(\delta_{ij}) - \sum_{j \in  \mathbf{k}} \pi_j \exp(\delta_{ij})][\sum_{j \in \vec{m}} \exp(\delta_{ij})] \nonumber \\
= & \sum_{j \in \mathbf{k}} \pi_j \exp(\delta_{ij}) - [\sum_{j \in \vec{m}} \pi_j \exp(\delta_{ij})][\sum_{j \in  \mathbf{k}} \exp(\delta_{ij})] + [\sum_{j \in  \mathbf{k}} \pi_j \exp(\delta_{ij})][\sum_{j \in \vec{m}} \exp(\delta_{ij})] \nonumber \\
- & \sum_{r = m+1}^n\pi_r \exp(\delta_{ir}) + [\sum_{j \in \vec{m}} \pi_j \exp(\delta_{ij})][\sum_{j \in \vec{n}} \exp(\delta_{ij})] - [\sum_{j \in \vec{n}} \pi_j \exp(\delta_{ij})][\sum_{j \in \vec{m}} \exp(\delta_{ij})].
\end{align}
We can further simplify the second line of Equation \eqref{eq:18} as follows:
\begin{align}
& - \sum_{r = m+1}^n\pi_r \exp(\delta_{ir}) + [\sum_{j \in \vec{m}} \pi_j \exp(\delta_{ij})][\sum_{j \in \vec{n}} \exp(\delta_{ij})] - [\sum_{j \in \vec{n}} \pi_j \exp(\delta_{ij})][\sum_{j \in \vec{m}} \exp(\delta_{ij})] \nonumber \\
= & - \sum_{r = m+1}^n\pi_r \exp(\delta_{ir}) + [\sum_{j \in \vec{m}} \pi_j \exp(\delta_{ij})][\sum_{j \in \vec{m}} \exp(\delta_{ij}) + \sum_{r = m+1}^n \exp(\delta_{ir})] \nonumber \\
& - [\sum_{j \in \vec{m}} \pi_j \exp(\delta_{ij}) + \sum_{r = m+1}^n\pi_r \exp(\delta_{ir})][\sum_{j \in \vec{m}} \exp(\delta_{ij})] \nonumber \\
= & - \sum_{r = m+1}^n\pi_r \exp(\delta_{ir}) + [\sum_{j \in \vec{m}} \pi_j \exp(\delta_{ij})][\sum_{r = m+1}^n \exp(\delta_{ir})] - [\sum_{r = m+1}^n\pi_r \exp(\delta_{ir})][\sum_{j \in \vec{m}} \exp(\delta_{ij})].
\end{align}
Note for any $r \in [m+1,n]$, we have
\begin{align}
& - \pi_r \exp(\delta_{ir}) + [\sum_{j \in \vec{m}} \pi_j \exp(\delta_{ij})] \exp(\delta_{ir}) - \pi_r \exp(\delta_{ir})[\sum_{j \in \vec{m}} \exp(\delta_{ij})] \nonumber \\
= & [\sum_{j \in \vec{m}} \pi_j \exp(\delta_{ij}) - \pi_r(1+\sum_{j \in \vec{m}} \exp(\delta_{ij}))]\exp(\delta_{ir})\nonumber \\
= & (E\Pi_{i,\vec{m}} - \pi_r)(1+\sum_{j \in \vec{m}} \exp(\delta_{ij}))\exp(\delta_{ir}) > 0.
\end{align}
The last equality follows from the definition of 
$E \Pi_{i,\vec{m}}$, and the last inequality follows from $E \Pi_{i,\vec{m}} > \pi_{r}$ for any $r > m$. 

For any $k > m$, the part $\pi_k \exp(\delta_{ik}) - [\sum_{j \in \vec{m}} \pi_j \exp(\delta_{ij})] \exp(\delta_{ik}) + \pi_k \exp(\delta_{ik})[\sum_{j \in \vec{m}} \exp(\delta_{ij})]$ will cancel out with one of $-\pi_r \exp(\delta_{ir}) + [\sum_{j \in \vec{m}} \pi_j \exp(\delta_{ij})] \exp(\delta_{ir}) - \pi_r \exp(\delta_{ir})[\sum_{j \in \vec{m}} \exp(\delta_{ij})]$. For any $k \leq m$, from Equation \eqref{eq:16}, we know
$\pi_k \exp(\delta_{ik}) - [\sum_{j \in \vec{m}} \pi_j \exp(\delta_{ij})] \exp(\delta_{ik}) + \pi_k \exp(\delta_{ik})[\sum_{j \in \vec{m}} \exp(\delta_{ij})] \geq 0.$ So we have 
$E \Pi_{i,\vec{m}} \geq E \Pi_{i,\vec{n} \backslash \mathbf{k}}$.

Last, for the case $n < m$, let $\mathbf{k}$ be the set of missing internal products of $\vec{n}$, i.e., $\forall k \in \mathbf{k}, 1<k<n$. Let $\vec{n} \backslash \mathbf{k}$ be the non-consecutive assortment, we have
\begin{align}
& E \Pi_{i,\vec{m}} - E \Pi_{i,\vec{n} \backslash \mathbf{k}} \nonumber \\
= &  \frac{\sum_{j \in \vec{m}} \pi_j \exp(\delta_{ij})}{1 + \sum_{j \in \vec{m}} \exp(\delta_{ij})} - \frac{\sum_{j \in \vec{n} \backslash \mathbf{k}}\pi_j \exp(\delta_{ij})}{1 + \sum_{j \in \vec{n} \backslash \mathbf{k}} \exp(\delta_{ij})} \nonumber \\
= & \frac{(\sum_{j \in \vec{m}} \pi_j \exp(\delta_{ij}))(1 + \sum_{j \in \vec{n} \backslash \mathbf{k}} \exp(\delta_{ij})) - (\sum_{j \in \vec{n} \backslash \mathbf{k}}\pi_j \exp(\delta_{ij}))(1 + \sum_{j \in \vec{m}} \exp(\delta_{ij}))} {(1 + \sum_{j \in \vec{m}} \exp(\delta_{ij})) (1 + \sum_{j \in \vec{n} \backslash \mathbf{k}} \exp(\delta_{ij}))}.
\end{align}
We can simplify the numerator as follows:
\begin{align}
\label{eq:22}
& \sum_{j \in \vec{m}} \pi_j \exp(\delta_{ij}) + [\sum_{j \in \vec{m}} \pi_j \exp(\delta_{ij})][\sum_{j \in \vec{n}} \exp(\delta_{ij}) - \sum_{j \in  \mathbf{k}} \exp(\delta_{ij})] \nonumber \\
- & \sum_{j \in \vec{n}\backslash \mathbf{k}} \pi_j \exp(\delta_{ij}) - [\sum_{j \in \vec{n}} \pi_j \exp(\delta_{ij}) - \sum_{j \in  \mathbf{k}} \pi_j \exp(\delta_{ij})][\sum_{j \in \vec{m}} \exp(\delta_{ij})] \nonumber \\
= & \sum_{j \in \mathbf{k}} \pi_j \exp(\delta_{ij}) - [\sum_{j \in \vec{m}} \pi_j \exp(\delta_{ij})][\sum_{j \in  \mathbf{k}} \exp(\delta_{ij})] + [\sum_{j \in  \mathbf{k}} \pi_j \exp(\delta_{ij})][\sum_{j \in \vec{m}} \exp(\delta_{ij})] \nonumber \\
+ & \sum_{r = n+1}^m\pi_r \exp(\delta_{ir}) + [\sum_{j \in \vec{m}} \pi_j \exp(\delta_{ij})][\sum_{j \in \vec{n}} \exp(\delta_{ij})] - [\sum_{j \in \vec{n}} \pi_j \exp(\delta_{ij})][\sum_{j \in \vec{m}} \exp(\delta_{ij})]
\end{align}
We can further simplify the second line of Equation \eqref{eq:22} as follows:
\begin{align}
& \sum_{r = n+1}^m\pi_r \exp(\delta_{ir}) + [\sum_{j \in \vec{m}} \pi_j \exp(\delta_{ij})][\sum_{j \in \vec{n}} \exp(\delta_{ij})] - [\sum_{j \in \vec{n}} \pi_j \exp(\delta_{ij})][\sum_{j \in \vec{m}} \exp(\delta_{ij})] \nonumber \\
= & \sum_{r = n+1}^m\pi_r \exp(\delta_{ir}) + [\sum_{j \in \vec{n}} \pi_j \exp(\delta_{ij}) + \sum_{r = n+1}^m \pi_r \exp(\delta_{ir})][\sum_{j \in \vec{n}} \exp(\delta_{ij})] \nonumber \\
- & [\sum_{j \in \vec{n}} \pi_j \exp(\delta_{ij})][\sum_{j \in \vec{n}} \exp(\delta_{ij}) +  \sum_{r = n+1}^m \exp(\delta_{ir})] \nonumber \\
= & \sum_{r = n+1}^m\pi_r \exp(\delta_{ir}) + [\sum_{r = n+1}^m  \pi_r \exp(\delta_{ir})][\sum_{j \in \vec{n}} \exp(\delta_{ij})] - [\sum_{j \in \vec{n}} \pi_j \exp(\delta_{ij})][\sum_{r = n+1}^m \exp(\delta_{ir})].
\end{align}
Note for any $r \in [n+1,m]$, we have
\begin{align}
& \pi_r \exp(\delta_{ir}) +  \pi_r \exp(\delta_{ir})[\sum_{j \in \vec{n}} \exp(\delta_{ij})] - [\sum_{j \in \vec{n}} \pi_j \exp(\delta_{ij})] \exp(\delta_{ir}) \nonumber \\
= & [\pi_r(1+\sum_{j \in \vec{n}} \exp(\delta_{ij})) - \sum_{j \in \vec{n}} \pi_j \exp(\delta_{ij})]\exp(\delta_{ir})\nonumber \\
= & (\pi_r - E\Pi_{i,\vec{n}})(1+\sum_{j \in \vec{n}} \exp(\delta_{ij}))\exp(\delta_{ir})\nonumber \\
\geq & (\pi_m - E\Pi_{i,\vec{n}})(1+\sum_{j \in \vec{n}} \exp(\delta_{ij}))\exp(\delta_{ir}) \geq 0.
\end{align}
The last equality follows from the definition of 
$E \Pi_{i,\vec{n}}$, the first inequality follows from $\pi_r \geq \pi_m$ for any $r \leq m$, and the last inequality follows from $E \Pi_{i,\vec{n}} \leq \pi_m$ for any $n < m$. 

For any $k < m$, from Equation \eqref{eq:16}, we know
$\pi_k \exp(\delta_{ik}) - [\sum_{j \in \vec{m}} \pi_j \exp(\delta_{ij})] \exp(\delta_{ik}) + \pi_k \exp(\delta_{ik})[\sum_{j \in \vec{m}} \exp(\delta_{ij})] \geq 0.$ So, we have $E \Pi_{i,\vec{m}} \geq E \Pi_{i,\vec{n} \backslash \mathbf{k}}$.
\end{proof} 

\section{Summary Statistics of Cigarette Sales}
\label{Summary Statistics of Cigarette Sales}
\begin{table}[ht]\centering
\caption{Summary Statistics of Cigarette Sales}
\resizebox{.8\columnwidth}{!}{%
\label{tab:1.10}
\begin{tabular}{c|cccccc}  \hline \hline
\diagbox[width=5em]{Tier}{Year}                    & 2011   & 2012   & 2013   & 2014   & 2015   & 2016   \\ \hline
\multirow{2}{*}{I} & 99.37  & 120.26 & 138.55 & 161.8 & 168.21 & 156.07 \\
                   &  (84.1)  & (97.87)  & (111.01) & (127.89) & (136.13) & (132.12) \\ \hline
\multirow{2}{*}{II} &  48.02  & 60.19  & 69.88  & 83.53  & 91.88  & 95.71  \\ 
                   &  (43.14)  & (50.98)  & (57.84)  & (66.89)  & (67.72)  & (70.56)  \\ \hline
\multirow{2}{*}{III} &  297.44 & 356.2 & 369.41 & 362.23 & 347.95 & 326.51 \\
                   & (216.02) & (239.92) & (248.82) & (236.06) & (228.35) & (219.95) \\ \hline
\multirow{2}{*}{IV} & 219.48 & 169.19 & 155.18 & 149 & 135.76 & 125.93 \\
                   &  (138.67) & (100.73) & (92.32)  & (91.59)  & (87.19 ) & (84.19)  \\ \hline
\multirow{2}{*}{V} &  102.04  & 76.81  & 58.64  & 51.25  & 45.82  & 40.9  \\
                   &  (74.06)  & (53.88)  & (40.68)  & (38.97)  & (36.78)  & (33.56) \\ \hline
N & 31 & 31 & 31 & 31 & 31 & 31 \\ \hline \hline
\multicolumn{7}{p{0.8\textwidth}}{\footnotesize Notes: This table is based on the information provided by the China Tobacco Yearbooks. Data are in units of 100 million cigarettes. Values are the mean across 31 provinces. Standard deviations are in parentheses.}\\
\end{tabular}
}
\end{table}

\section{Cigarette Tax Increases}
\label{Cigarette Tax Increases}
Two exogenous tax increases in the industry occurred in May 2009 and May 2015. Table \ref{tab:1.11} offers a summary of the details of these tax increases. The allocation price mentioned in the table is the price negotiated between STMA and the tax authority in China, specifically the State Administration of Taxation. This allocation price serves as the tax base for cigarette ad valorem excise tax, and STMA utilizes it as a criterion to categorize cigarettes into five tiers \citep{gao2012can}.

The tax adjustment in 2009 primarily entailed: (i) excise tax rates for Grade A (Tier I and II) cigarettes increasing from 45\% to 56\%, and for Grade B (Tier III, IV, and V) cigarettes increasing from 30\% to 36\%; (ii) an additional 5\% ad valorem excise tax applied at the wholesale level; (iii) adjusted standards for Grade A and B cigarettes, where products with allocation prices greater than or equal to \yen 7 per pack (of 20 cigarettes) were considered Grade A, while those with allocation prices less than \yen 7 were considered Grade B. In contrast, the price threshold was \yen 5 before the tax adjustment. Accordingly, the cigarette classification standards for Tiers II and III also changed. The Chinese government raised cigarette excise tax again at the wholesale level in 2015, increasing the ad valorem excise tax rate from 5\% to 11\% and adding a \yen 0.10/pack specific tax for all tiers \citep{zheng2018tobacco}.

\begin{table}[ht]\centering
\caption{China's Cigarette Excise Tax Adjustments}
\label{tab:1.11}
\resizebox{\columnwidth}{!}{%
\begin{tabular}{llll} \\ \hline \hline
                           & Before May 2009                     & After May 2009     & After May 2015                  \\ \hline
\textbf{Producer level} & & \\
Specific excise tax  & \yen 0.06  /pack                             & \yen 0.06 /pack    & \yen 0.06  /pack                          \\
Ad valorem excise tax rate   &                                     &                                     \\
Class A cigarettes           & Allocation price $\geq$ \yen 5  /pack & Allocation price $\geq$ \yen 7  /pack & Allocation price $\geq$ \yen 7  /pack \\
                             & 45\%                                & 56\%          & 56\%                        \\
Class B cigarettes           &  Allocation price \textless{} \yen 5  /pack     & Allocation price \textless{} \yen 7  /pack    & Allocation price \textless{} \yen 7  /pack  \\
                             & 30\%                                & 36\%          & 36\%                      \\
\textbf{Wholesale level} & & \\
Specific excise tax  & \yen  0  /pack                             & \yen 0  /pack    & \yen 0.10  /pack                          \\
Ad valorem excise tax rate          & 0\%                                 & 5\%   & 11\%  \\
\hline\hline  
\multicolumn{4}{p{1.3\textwidth}}{\footnotesize Notes: Grade A cigarettes encompass Tiers I and II. Grade B cigarettes encompass Tiers III, IV and V. The standard changed in May 2009.}\\                            
\end{tabular}
}
\end{table}

\section{Wholesale Margin Calculations}
\label{Wholesale Profits}

Equation \eqref{eq:pricing} (first introduced in \cite{gao2012can}) demonstrates China's cigarette pricing mechanism where $A$ represents allocation price, $a$ indicates the allocation-wholesale profit margin, $b$ indicates the wholesale-retail profit margin, and $Rtvat$ indicates the VAT (value-added tax) rate. $P_r$ is the retail price at which retailers sell cigarettes to consumers, and it is the final price of tobacco products:
\begin{equation}
\label{eq:pricing}
P_r = A \times (1+a) \times (1+b) \times (1+Rtvat).
\end{equation}
As addressed, retail price is equal to the allocation price plus the allocation-wholesale margin, wholesale-retail margin, and VAT tax. $A$ includes excise tax but excludes VAT, so no excise tax appears in equation \eqref{eq:pricing}. VAT is collected at all circulation segments, including produce, wholesale, and retail, and uses the added value at each level as the tax base. Therefore, placing the factor $(1 + Rtvat)$ at the end of the equation includes all collected VAT \citep{zheng2018tobacco}. Table \ref{tab:1.12} summarizes the allocation-wholesale margin adjustment in 2009, and we can see that high-priced cigarettes have margin rates no lower than low-priced cigarettes.

\begin{table}[htbp]\centering
\caption{Cigarette Allocation Prices and Profit Margins} 
\label{tab:1.12}
\resizebox{\columnwidth}{!}{%
\begin{tabular}{ccccc} \\ \hline \hline
Tier & Allocation Price & \multicolumn{2}{c}{Allocation-Wholesale Margin (\%)} & Wholesale-Retail Margin (\%) \\ \cline{3-4}
                       &                                   & Before May 2009         & After May 2009         &  \\ \hline
I    & {[}10, $\infty$)   & 47  & 31.5                  & 15                 \\
II    & {[}7,10)  & 43   & 25             & 15                 \\
III    & {[}5,7)  &  43  & 25                   &  15                  \\  
III    & {[}3,5)  & 38   & 25             & 10                 \\
IV    & {[}1.65, 3)  & 28   & 20                & 10                 \\
V    & (0, 1.65)     & 18   & 15             & 10    \\
\hline\hline
\multicolumn{5}{p{1\textwidth}}{\footnotesize Notes: Before May 2009, the price range {[}3,5) belonged to Tier III, while the price range {[}5,7) belonged to Tier II. After May 2009, Tier III allocation price falls into the {[}3,7) range. Margin rates are collected from \cite{zheng2018tobacco}.}\\                            
\end{tabular}
}
\end{table}

The calculation procedure for after-tax wholesale price-cost margins involves the following two steps. 
\begin{enumerate}
\item Calculate the allocation price using
\begin{align}
A = \frac{P_w}{(1+a) \times (1+Rtvat)},
\end{align}
where $P_w$ represents the wholesale price in Table \ref{tab:1.2}, $a$ indicates the allocation-wholesale profit margin in Table \ref{tab:1.12}, and $Rtvat$ indicates the 17\% VAT rate.

\item Calculate after-tax wholesale margin using
\begin{align}
\pi_w = A \times a - A \times t_a - t_s,
\end{align}
where $\pi_w$ represents the after-tax wholesale margin, $t_a$ indicates the wholesale ad valorem excise tax rate (i.e. 5\%, as the tax increase in 2015 is passed onto consumers), and $t_s$ indicates the wholesale-specific excise tax listed in Table \ref{tab:1.11} (i.e. \yen 0 in 2011-2014 and \yen 0.1 in 2015-2016).
\end{enumerate}

\section{Decomposition of Price Elasticities}
\label{Decomposition of Elasticities}
Figure \ref{fig:compelas1} summarizes the own-price and cross-price demand elasticities across three models: the standard logit model, the foldable menu model with fixed assortments (where assortments remain unchanged before and after price changes), and the foldable menu model with adjusted assortments (where assortments endogenously adjust according to price changes). The differences between the left and middle panels are due to the variation in parameter estimates between the foldable menu model and the standard logit model, while the differences between the middle and right panels are attributed to the endogenous adjustment of assortments in response to price changes.

\begin{figure}[ht]
    \centering
    \includegraphics[scale=0.5]{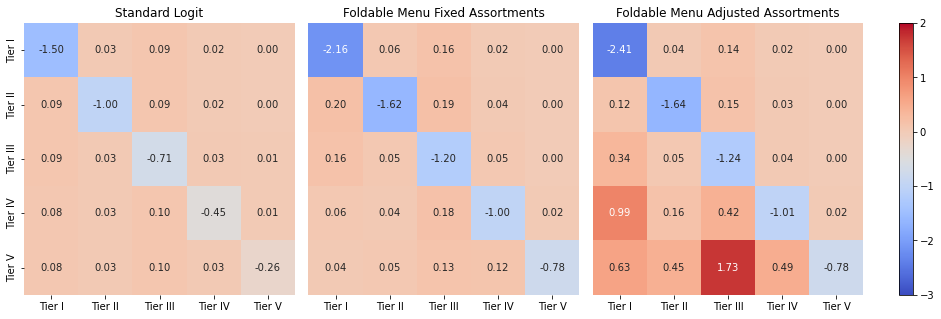}    
    \caption{Decomposition of Price Elasticities. Notes: The columns for Tiers I to V represent cross (and own) elasticities when the price of a specific tier increases by 1\%, while holding the prices of other tiers constant. The left panel presents the values estimated using the standard logit model, the middle panel displays the values estimated using the foldable menu model with fixed assortments, and the right panel shows the values estimated using the foldable menu model with adjusted assortment.}
    \label{fig:compelas1}
\end{figure}

Compared to the foldable menu model with fixed assortments, the standard logit model exhibits lower own- and cross-price elasticities because it underestimates the price sensitivity parameter. However, a price increase in one tier has similar effects on the sales of other tiers in both the left and middle panels. By comparing the middle and right panels, it becomes clear that the firm’s endogenous assortment adjustments are the primary reason why, in the foldable menu model, a price increase in higher-tier cigarettes results in a significantly larger positive effect on the sales of lower-priced cigarettes.

\begin{figure}[ht]
    \centering
    \includegraphics[scale=0.66]{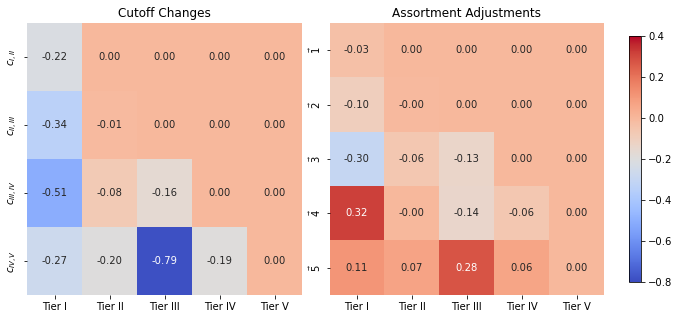}    
    \caption{Effects of Price Increases on Cutoffs and Assortments. Notes: The left panel shows the changes in cutoffs, while the right panel illustrates the corresponding assortment adjustments, both resulting from a 1\% price increase in each cigarette tier. The values for cutoff changes represent the actual changes multiplied by 100, while the values for assortment adjustments reflect the percentage point changes in the proportion of consumers facing each assortment. All values are the mean across all provinces and years.}
    \label{fig:cutoff_assort}
\end{figure}

Figure \ref{fig:cutoff_assort} summarizes the effects of price increases on cutoffs and assortment adjustments. As expected, when the price of Tier $j$ ($j < V$) increases, the cutoffs $c_{j,j+1}, \dots, c_{IV,V}$ decrease, leading to assortment adjustments. Notably, a Tier I price increase has a large positive effect on the ratio of consumers facing $\vec{4}$, and a Tier III price increase has a significant positive effect on the ratio of consumers facing $\vec{5}$. These effects contribute to the larger cross-price elasticities observed for Tier I price increases on Tier IV, and for Tier III price increases on Tier V in the right panel of Figure \ref{fig:compelas1}. Additionally, note that a price increase in Tier $V$ does not affect any cutoffs, so no assortment adjustment occurs, and the estimated elasticities in the last column remain unchanged in the middle and right panels of Figure \ref{fig:compelas1}.

\section{Proof of Lemmas with Competition}
\label{Lemmas with Competition}

\subsection{Lemma 1}

\begin{proof}
\emph{(Lemma 1)}
Under the logit model, we demonstrate that, for any consumer $i$, the inequality $\pi_m \geq E \Pi_{i,\mathcal{J}}$ is equivalent to $E \Pi_{i,\mathcal{J} \cup {m}} \geq E \Pi_{i,\mathcal{J}}$. Let $\mathcal{J}_c$ denote the set of products offered by all competitors.
\begin{align}
E \Pi_{i,\mathcal{J} \cup \{m\}} = &
\sum_{j \in \mathcal{J} \cup \{m\}} \pi_j \frac{\exp(\delta_{ij})}{1+\sum_{j \in \mathcal{J} \cup \{m\} \cup \mathcal{J}_c} \exp(\delta_{ij})} \nonumber \\
= & \sum_{j \in \mathcal{J}} \pi_j \frac{\exp(\delta_{ij})}{1+\sum_{j \in \mathcal{J} \cup \{m\}  \cup \mathcal{J}_c} \exp(\delta_{ij})} + \pi_m \frac{\exp(\delta_{im})}{1+\sum_{j \in \mathcal{J} \cup \{m\}  \cup \mathcal{J}_c} \exp(\delta_{ij})} \nonumber \\
= & E \Pi_{i,\mathcal{J}} \frac{1+\sum_{j \in \mathcal{J}  \cup \mathcal{J}_c} \exp(\delta_{ij})}{1+\sum_{j \in \mathcal{J} \cup \{m\}  \cup \mathcal{J}_c} \exp(\delta_{ij})} + \pi_m \frac{\exp(\delta_{im})}{1+\sum_{j \in \mathcal{J} \cup \{m\}  \cup \mathcal{J}_c} \exp(\delta_{ij})}.
\end{align}
The last equality follows from the definition of $E \Pi_{i,\mathcal{J}}$, where
\[
 E \Pi_{i,\mathcal{J}}  = \sum_{j \in \mathcal{J} } \pi_j \frac{\exp(\delta_{ij})}{1+\sum_{j \in \mathcal{J}  \cup \mathcal{J}_c} \exp(\delta_{ij})}.
\]
Then, we have
\begin{align}
E \Pi_{i,\mathcal{J} \cup \{m\}} - E \Pi_{i,\mathcal{J}} = & E \Pi_{i,\mathcal{J}} \frac{1+\sum_{j \in \mathcal{J}  \cup \mathcal{J}_c} \exp(\delta_{ij})}{1+\sum_{j \in \mathcal{J} \cup \{m\}  \cup \mathcal{J}_c} \exp(\delta_{ij})} + \pi_m \frac{\exp(\delta_{im})}{1+\sum_{j \in \mathcal{J} \cup \{m\}  \cup \mathcal{J}_c} \exp(\delta_{ij})} - E \Pi_{i,\mathcal{J}} \nonumber \\
= & (\pi_m - E \Pi_{i,\mathcal{J}}) \frac{\exp(\delta_{im})}{1+\sum_{j \in \mathcal{J} \cup \{m\}  \cup \mathcal{J}_c} \exp(\delta_{ij})},
\end{align}
so that $\pi_m \geq E \Pi_{i,\mathcal{J}}$ is equivalent to $E \Pi_{i,\mathcal{J} \cup {m}} \geq E \Pi_{i,\mathcal{J}}$. Therefore, it is more profitable to add a new option with a higher profit-margin than the current average profit. Given that we arrange the products in descending order of unit margin, $\pi_1 \geq E \Pi_{i,\mathcal{J}}$ is trivially satisfied, because $\pi_1 \geq \pi_j$ for all $j \in \mathcal{J}$, and $E \Pi_{i,\mathcal{J}}$ represents a weighted average of $\pi_j$ in $\mathcal{J}$. 
\end{proof}

\subsection{Lemma 2}

\begin{proof}
\emph{(Lemma 2)}
First, we show that if $E \Pi_{i,\mathcal{J}\cup\{m\}} \leq E \Pi_{i,\mathcal{J}}$, then $E \Pi_{i, \mathcal{J}\cup\{m,m^{\prime}\}}  \leq E \Pi_{i,\mathcal{J}\cup\{m\}}$ for all $m^{\prime}>m$. That is, if adding $m$ is worse, adding $m^{\prime}$ is worse on top of adding $m$. $E \Pi_{i,\mathcal{J}\cup\{m\}} \leq E \Pi_{i,\mathcal{J}}$ implies that $\pi_{m} \leq  E \Pi_{i,\mathcal{J}}$. We want to show that $\pi_{m^{\prime}} \leq E \Pi_{i,\mathcal{J}\cup\{m\}}$.
\begin{align}
E \Pi_{i,\mathcal{J}\cup\{m\}}  = & \sum_{j\in\mathcal{J}\cup\{m\}}\pi_{j}\frac{\exp(\delta_{ij})}{1+\sum_{j\in\mathcal{J}\cup\{m\}\cup \mathcal{J}_c}\exp(\delta_{ij})}	\nonumber \\
= & \sum_{j\in\mathcal{J}}\pi_{j}\frac{\exp(\delta_{ij})}{1+\sum_{j\in\mathcal{J}\cup\{m\}\cup \mathcal{J}_c}\exp(\delta_{ij})}+\pi_{m}\frac{\exp(\delta_{im})}{1+\sum_{j\in\mathcal{J}\cup\{m\}\cup \mathcal{J}_c}\exp(\delta_{ij})} \nonumber \\
= & E \Pi_{i,\mathcal{J}} \frac{1+\sum_{j \in \mathcal{J}\cup \mathcal{J}_c} \exp(\delta_{ij})}{1+\sum_{j \in \mathcal{J} \cup \{m\}\cup \mathcal{J}_c} \exp(\delta_{ij})} +\pi_{m}\frac{\exp(\delta_{im})}{1+\sum_{j\in\mathcal{J}\cup\{m\}\cup \mathcal{J}_c}\exp(\delta_{ij})} \nonumber \\
\geq & \pi_m \frac{1+\sum_{j\in\mathcal{J}\cup \mathcal{J}_c}\exp(\delta_{ij})}{1+\sum_{j\in\mathcal{J}\cup\{m\}\cup \mathcal{J}_c}\exp(\delta_{ij})}+\pi_{m}\frac{\exp(\delta_{im})}{1+\sum_{j\in\mathcal{J}\cup\{m\}\cup \mathcal{J}_c}\exp(\delta_{ij})}  \nonumber \\
= & \pi_{m}\geq\pi_{m^{\prime}}.
\end{align}
The third equality follows from the definition of $E \Pi_{i,\mathcal{J}}$, and the first inequality follows from $\pi_m \leq E \Pi_{i,\mathcal{J}}$.

Second, we show that if $E \Pi_{i,\mathcal{J}\cup\{m,m^{\prime}\}} \geq E \Pi_{i,\mathcal{J}\cup\{m\}}$, then $E \Pi_{i, \mathcal{J}\cup\{m\}}  \geq E \Pi_{i,\mathcal{J}}$ for all $m < m^{\prime}$. That is, if removing $m^{\prime}$ is worse, removing $m$ is worse on top of removing $m^{\prime}$. $E \Pi_{i,\mathcal{J}\cup\{m,m^{\prime}\}} \geq E \Pi_{i,\mathcal{J}\cup\{m\}}$ implies that $\pi_{m^{\prime}} \geq  E \Pi_{i,\mathcal{J}\cup\{m\}}$. We want to show that $\pi_{m} \geq E \Pi_{i,\mathcal{J}}$. We have
\begin{align}
E \Pi_{i,\mathcal{J}\cup\{m\}}  = & \sum_{j\in\mathcal{J}\cup\{m\}}\pi_{j}\frac{\exp(\delta_{ij})}{1+\sum_{j\in\mathcal{J}\cup\{m\}\cup \mathcal{J}_c}\exp(\delta_{ij})}	\nonumber \\
= & \sum_{j\in\mathcal{J}}\pi_{j}\frac{\exp(\delta_{ij})}{1+\sum_{j\in\mathcal{J}\cup\{m\}\cup \mathcal{J}_c}\exp(\delta_{ij})}+\pi_{m}\frac{\exp(\delta_{im})}{1+\sum_{j\in\mathcal{J}\cup\{m\}\cup \mathcal{J}_c}\exp(\delta_{ij})} \nonumber \\
= & E \Pi_{i,\mathcal{J}} \frac{1+\sum_{j \in \mathcal{J}\cup \mathcal{J}_c} \exp(\delta_{ij})}{1+\sum_{j \in \mathcal{J} \cup \{m\}\cup \mathcal{J}_c} \exp(\delta_{ij})} +\pi_{m}\frac{\exp(\delta_{im})}{1+\sum_{j\in\mathcal{J}\cup\{m\}\cup \mathcal{J}_c}\exp(\delta_{ij})} \nonumber \\
\leq & \pi_{m^{\prime}} \leq \pi_{m}.
\end{align}
The third equality follows from the definition of $E \Pi_{i,\mathcal{J}}$. After moving 
\[
\pi_{m}\frac{\exp(\delta_{im})}{1+\sum_{j\in\mathcal{J}\cup\{m\}\cup \mathcal{J}_c}\exp(\delta_{ij})}
\]
to the RHS of the last inequality, we obtain
\[
E \Pi_{i,\mathcal{J}} \frac{1+\sum_{j \in \mathcal{J}\cup \mathcal{J}_c} \exp(\delta_{ij})}{1+\sum_{j \in \mathcal{J} \cup \{m\}\cup \mathcal{J}_c} \exp(\delta_{ij})} \leq \pi_m \frac{1+\sum_{j \in \mathcal{J}\cup \mathcal{J}_c} \exp(\delta_{ij})}{1+\sum_{j \in \mathcal{J} \cup \{m\}\cup \mathcal{J}_c} \exp(\delta_{ij})},
\]
which implies $\pi_{m} \geq E \Pi_{i,\mathcal{J}}$.
\end{proof}

\subsection{Lemma 3}

\begin{proof}
\emph{(Lemma 3)}
Given Lemma 2, when $\vec{m}$ is optimal among $\{\vec{1},\vec{2},\cdots,\vec{J}\}$, we have $\pi_m \geq E \Pi_{i,\vec{j}}$ for any $j < m$, and $E \Pi_{i,\vec{m}} > \pi_j$ for any $j > m$. Now, we show that $\vec{m}$ also dominates all non-consecutive assortments, such as $\{1,3\}$, $\{1,2,4\}$, etc.

Start with the case $m \geq 3$. First, we compare $\vec{m}$ and its own subsets. Let $\mathbf{k}$ be the set of missing internal products of $\vec{m}$, i.e., $\forall k \in \mathbf{k}, 1<k<m$. Let $\vec{m} \backslash \mathbf{k}$ be the non-consecutive assortment, we have
\begin{align}
& E \Pi_{i,\vec{m}} - E \Pi_{i,\vec{m} \backslash \mathbf{k}} \nonumber \\
= &  \frac{\sum_{j \in \vec{m}} \pi_j \exp(\delta_{ij})}{1 + \sum_{j \in \vec{m} \cup \mathcal{J}_c} \exp(\delta_{ij})} - \frac{\sum_{j \in \vec{m} \backslash \mathbf{k}}\pi_j \exp(\delta_{ij})}{1 + \sum_{j \in \vec{m} \backslash \mathbf{k}\cup \mathcal{J}_c} \exp(\delta_{ij})} \nonumber \\
= & \frac{(\sum_{j \in \vec{m}} \pi_j \exp(\delta_{ij}))(1 + \sum_{j \in \vec{m} \backslash \mathbf{k} \cup \mathcal{J}_c} \exp(\delta_{ij})) - (\sum_{j \in \vec{m} \backslash \mathbf{k}}\pi_j \exp(\delta_{ij}))(1 + \sum_{j \in \vec{m}\cup \mathcal{J}_c} \exp(\delta_{ij}))} {(1 + \sum_{j \in \vec{m}\cup \mathcal{J}_c} \exp(\delta_{ij})) (1 + \sum_{j \in \vec{m} \backslash \mathbf{k} \cup \mathcal{J}_c} \exp(\delta_{ij}))}.
\end{align}
We can simplify the numerator as
\begin{align}
& \sum_{j \in \vec{m}} \pi_j \exp(\delta_{ij}) + [\sum_{j \in \vec{m}} \pi_j \exp(\delta_{ij})][\sum_{j \in \vec{m} \cup \mathcal{J}_c} \exp(\delta_{ij}) - \sum_{j \in  \mathbf{k}} \exp(\delta_{ij})] \nonumber \\
- & \sum_{j \in \vec{m}\backslash \mathbf{k}} \pi_j \exp(\delta_{ij}) - [\sum_{j \in \vec{m}} \pi_j \exp(\delta_{ij}) - \sum_{j \in  \mathbf{k}} \pi_j \exp(\delta_{ij})][\sum_{j \in \vec{m} \cup \mathcal{J}_c} \exp(\delta_{ij})] \nonumber \\
= & \sum_{j \in \mathbf{k}} \pi_j \exp(\delta_{ij}) - [\sum_{j \in \vec{m}} \pi_j \exp(\delta_{ij})][\sum_{j \in  \mathbf{k}} \exp(\delta_{ij})] + [\sum_{j \in  \mathbf{k}} \pi_j \exp(\delta_{ij})][\sum_{j \in \vec{m} \cup \mathcal{J}_c} \exp(\delta_{ij})].
\end{align}
For each $k \in  \mathbf{k}$, we have
\begin{align}
\label{eq:34}
& \pi_k \exp(\delta_{ik}) - [\sum_{j \in \vec{m}} \pi_j \exp(\delta_{ij})]\exp(\delta_{ik}) +  \pi_k \exp(\delta_{ik})[\sum_{j \in \vec{m} \cup \mathcal{J}_c} \exp(\delta_{ij})] \nonumber \\
= & [\pi_k (1+\sum_{j \in \vec{m} \cup \mathcal{J}_c} \exp(\delta_{ij})) - \sum_{j \in \vec{m}} \pi_j \exp(\delta_{ij})]\exp(\delta_{ik})\nonumber \\
= & [\pi_k (1+\sum_{j \in \vec{k} \cup \mathcal{J}_c} \exp(\delta_{ij})) - \sum_{j \in \vec{k}} \pi_j \exp(\delta_{ij}) + \sum_{j=k+1}^m(\pi_k - \pi_j) \exp(\delta_{ij})]\exp(\delta_{ik})\nonumber \\
\geq & [\pi_k (1+\sum_{j \in \vec{k} \cup \mathcal{J}_c} \exp(\delta_{ij})) - \sum_{j \in \vec{k}} \pi_j \exp(\delta_{ij})]\exp(\delta_{ik})\nonumber \\
\geq & [\pi_m (1+\sum_{j \in \vec{k} \cup \mathcal{J}_c} \exp(\delta_{ij})) - \sum_{j \in \vec{k}} \pi_j \exp(\delta_{ij})]\exp(\delta_{ik})\nonumber \\
= & [(\pi_m - E \Pi_{i,\vec{k}})(1+\sum_{j \in \vec{k} \cup \mathcal{J}_c} \exp(\delta_{ij}))]\exp(\delta_{ik})\geq 0.
\end{align}
The second equality follows from  isolating the non-negative term $\sum_{j=k+1}^m(\pi_k - \pi_j) \exp(\delta_{ij})$,  the second inequality follows from $\pi_m \leq \pi_k$ for any $k < m$, the last equality follows from  the definition of $E \Pi_{i,\vec{k}}$, and the last inequality follows from $E \Pi_{i,\vec{k}} \leq \pi_m$ for any $k < m$. So we have $E \Pi_{i,\vec{m}} \geq E \Pi_{i,\vec{m} \backslash \mathbf{k}}$.

Second, for the case $n > m$, let $\mathbf{k}$ be the set of missing internal products of $\vec{n}$, i.e., $\forall k \in \mathbf{k}, 1<k<n$. Let $\vec{n} \backslash \mathbf{k}$ be the non-consecutive assortment, we have
\begin{align}
& E \Pi_{i,\vec{m}} - E \Pi_{i,\vec{n} \backslash \mathbf{k}} \nonumber \\
= &  \frac{\sum_{j \in \vec{m}} \pi_j \exp(\delta_{ij})}{1 + \sum_{j \in \vec{m} \cup \mathcal{J}_c} \exp(\delta_{ij})} - \frac{\sum_{j \in \vec{n} \backslash \mathbf{k}}\pi_j \exp(\delta_{ij})}{1 + \sum_{j \in \vec{n} \backslash \mathbf{k}\cup \mathcal{J}_c} \exp(\delta_{ij})} \nonumber \\
= & \frac{(\sum_{j \in \vec{m}} \pi_j \exp(\delta_{ij}))(1 + \sum_{j \in \vec{n} \backslash \mathbf{k} \cup \mathcal{J}_c} \exp(\delta_{ij})) - (\sum_{j \in \vec{n} \backslash \mathbf{k}}\pi_j \exp(\delta_{ij}))(1 + \sum_{j \in \vec{m}\cup \mathcal{J}_c} \exp(\delta_{ij}))} {(1 + \sum_{j \in \vec{m}\cup \mathcal{J}_c} \exp(\delta_{ij})) (1 + \sum_{j \in \vec{n} \backslash \mathbf{k} \cup \mathcal{J}_c} \exp(\delta_{ij}))}.
\end{align}
We can simplify the numerator as follows:
\begin{align}
\label{eq:36}
& \sum_{j \in \vec{m}} \pi_j \exp(\delta_{ij}) + [\sum_{j \in \vec{m}} \pi_j \exp(\delta_{ij})][\sum_{j \in \vec{n} \cup \mathcal{J}_c} \exp(\delta_{ij}) - \sum_{j \in  \mathbf{k}} \exp(\delta_{ij})] \nonumber \\
- & \sum_{j \in \vec{n}\backslash \mathbf{k}} \pi_j \exp(\delta_{ij}) - [\sum_{j \in \vec{n}} \pi_j \exp(\delta_{ij}) - \sum_{j \in  \mathbf{k}} \pi_j \exp(\delta_{ij})][\sum_{j \in \vec{m} \cup \mathcal{J}_c} \exp(\delta_{ij})] \nonumber \\
= & \sum_{j \in \mathbf{k}} \pi_j \exp(\delta_{ij}) - [\sum_{j \in \vec{m}} \pi_j \exp(\delta_{ij})][\sum_{j \in  \mathbf{k}} \exp(\delta_{ij})] + [\sum_{j \in  \mathbf{k}} \pi_j \exp(\delta_{ij})][\sum_{j \in \vec{m} \cup \mathcal{J}_c} \exp(\delta_{ij})] \nonumber \\
- & \sum_{r = m+1}^n\pi_r \exp(\delta_{ir}) + [\sum_{j \in \vec{m}} \pi_j \exp(\delta_{ij})][\sum_{j \in \vec{n}\cup \mathcal{J}_c} \exp(\delta_{ij})] - [\sum_{j \in \vec{n}} \pi_j \exp(\delta_{ij})][\sum_{j \in \vec{m}\cup \mathcal{J}_c} \exp(\delta_{ij})].
\end{align}
We can further simplify the second line of Equation \eqref{eq:36} as follows:
\begin{align}
& - \sum_{r = m+1}^n\pi_r \exp(\delta_{ir}) + [\sum_{j \in \vec{m}} \pi_j \exp(\delta_{ij})][\sum_{j \in \vec{n}\cup \mathcal{J}_c} \exp(\delta_{ij})] - [\sum_{j \in \vec{n}} \pi_j \exp(\delta_{ij})][\sum_{j \in \vec{m}\cup \mathcal{J}_c} \exp(\delta_{ij})] \nonumber \\
= & - \sum_{r = m+1}^n\pi_r \exp(\delta_{ir}) + [\sum_{j \in \vec{m}} \pi_j \exp(\delta_{ij})][\sum_{j \in \vec{m}\cup \mathcal{J}_c} \exp(\delta_{ij}) + \sum_{r = m+1}^n \exp(\delta_{ir})] \nonumber \\
& - [\sum_{j \in \vec{m}} \pi_j \exp(\delta_{ij}) + \sum_{r = m+1}^n\pi_r \exp(\delta_{ir})][\sum_{j \in \vec{m}\cup \mathcal{J}_c} \exp(\delta_{ij})] \nonumber \\
= & - \sum_{r = m+1}^n\pi_r \exp(\delta_{ir}) + [\sum_{j \in \vec{m}} \pi_j \exp(\delta_{ij})][\sum_{r = m+1}^n \exp(\delta_{ir})] - [\sum_{r = m+1}^n\pi_r \exp(\delta_{ir})][\sum_{j \in \vec{m}\cup \mathcal{J}_c} \exp(\delta_{ij})].
\end{align}
Note for any $r \in [m+1,n]$, we have
\begin{align}
& - \pi_r \exp(\delta_{ir}) + [\sum_{j \in \vec{m}} \pi_j \exp(\delta_{ij})] \exp(\delta_{ir}) - \pi_r \exp(\delta_{ir})[\sum_{j \in \vec{m}\cup \mathcal{J}_c} \exp(\delta_{ij})] \nonumber \\
= & [\sum_{j \in \vec{m}} \pi_j \exp(\delta_{ij}) - \pi_r(1+\sum_{j \in \vec{m}\cup \mathcal{J}_c} \exp(\delta_{ij}))]\exp(\delta_{ir})\nonumber \\
= & (E\Pi_{i,\vec{m}} - \pi_r)(1+\sum_{j \in \vec{m}\cup \mathcal{J}_c} \exp(\delta_{ij}))\exp(\delta_{ir}) > 0.
\end{align}
The last equality follows from the definition of 
$E \Pi_{i,\vec{m}}$, and the last inequality follows from $E \Pi_{i,\vec{m}} > \pi_{r}$ for any $r > m$. 

For any $k > m$, the part $\pi_k \exp(\delta_{ik}) - [\sum_{j \in \vec{m}} \pi_j \exp(\delta_{ij})] \exp(\delta_{ik}) + \pi_k \exp(\delta_{ik})[\sum_{j \in \vec{m}\cup \mathcal{J}_c} \exp(\delta_{ij})]$ will cancel out with one of $-\pi_r \exp(\delta_{ir}) + [\sum_{j \in \vec{m}} \pi_j \exp(\delta_{ij})] \exp(\delta_{ir}) - \pi_r \exp(\delta_{ir})[\sum_{j \in \vec{m}\cup \mathcal{J}_c} \exp(\delta_{ij})]$. For any $k \leq m$, from Equation \eqref{eq:34}, we know
$\pi_k \exp(\delta_{ik}) - [\sum_{j \in \vec{m}} \pi_j \exp(\delta_{ij})] \exp(\delta_{ik}) + \pi_k \exp(\delta_{ik})[\sum_{j \in \vec{m}\cup \mathcal{J}_c} \exp(\delta_{ij})] \geq 0.$ So we have 
$E \Pi_{i,\vec{m}} \geq E \Pi_{i,\vec{n} \backslash \mathbf{k}}$.

Last, for the case $n < m$, let $\mathbf{k}$ be the set of missing internal products of $\vec{n}$, i.e., $\forall k \in \mathbf{k}, 1<k<n$. Let $\vec{n} \backslash \mathbf{k}$ be the non-consecutive assortment, we have
\begin{align}
& E \Pi_{i,\vec{m}} - E \Pi_{i,\vec{n} \backslash \mathbf{k}} \nonumber \\
= &  \frac{\sum_{j \in \vec{m}} \pi_j \exp(\delta_{ij})}{1 + \sum_{j \in \vec{m} \cup \mathcal{J}_c} \exp(\delta_{ij})} - \frac{\sum_{j \in \vec{n} \backslash \mathbf{k}}\pi_j \exp(\delta_{ij})}{1 + \sum_{j \in \vec{n} \backslash \mathbf{k}\cup \mathcal{J}_c} \exp(\delta_{ij})} \nonumber \\
= & \frac{(\sum_{j \in \vec{m}} \pi_j \exp(\delta_{ij}))(1 + \sum_{j \in \vec{n} \backslash \mathbf{k} \cup \mathcal{J}_c} \exp(\delta_{ij})) - (\sum_{j \in \vec{n} \backslash \mathbf{k}}\pi_j \exp(\delta_{ij}))(1 + \sum_{j \in \vec{m}\cup \mathcal{J}_c} \exp(\delta_{ij}))} {(1 + \sum_{j \in \vec{m}\cup \mathcal{J}_c} \exp(\delta_{ij})) (1 + \sum_{j \in \vec{n} \backslash \mathbf{k} \cup \mathcal{J}_c} \exp(\delta_{ij}))}.
\end{align}
We can simplify the numerator as follows:
\begin{align}
\label{eq:40}
& \sum_{j \in \vec{m}} \pi_j \exp(\delta_{ij}) + [\sum_{j \in \vec{m}} \pi_j \exp(\delta_{ij})][\sum_{j \in \vec{n} \cup \mathcal{J}_c} \exp(\delta_{ij}) - \sum_{j \in  \mathbf{k}} \exp(\delta_{ij})] \nonumber \\
- & \sum_{j \in \vec{n}\backslash \mathbf{k}} \pi_j \exp(\delta_{ij}) - [\sum_{j \in \vec{n}} \pi_j \exp(\delta_{ij}) - \sum_{j \in  \mathbf{k}} \pi_j \exp(\delta_{ij})][\sum_{j \in \vec{m} \cup \mathcal{J}_c} \exp(\delta_{ij})] \nonumber \\
= & \sum_{j \in \mathbf{k}} \pi_j \exp(\delta_{ij}) - [\sum_{j \in \vec{m}} \pi_j \exp(\delta_{ij})][\sum_{j \in  \mathbf{k}} \exp(\delta_{ij})] + [\sum_{j \in  \mathbf{k}} \pi_j \exp(\delta_{ij})][\sum_{j \in \vec{m} \cup \mathcal{J}_c} \exp(\delta_{ij})] \nonumber \\
+ & \sum_{r = n+1}^m\pi_r \exp(\delta_{ir}) + [\sum_{j \in \vec{m}} \pi_j \exp(\delta_{ij})][\sum_{j \in \vec{n}\cup \mathcal{J}_c} \exp(\delta_{ij})] - [\sum_{j \in \vec{n}} \pi_j \exp(\delta_{ij})][\sum_{j \in \vec{m}\cup \mathcal{J}_c} \exp(\delta_{ij})]
\end{align}
We can further simplify the second line of Equation \eqref{eq:40} as follows:
\begin{align}
& \sum_{r = n+1}^m\pi_r \exp(\delta_{ir}) + [\sum_{j \in \vec{m}} \pi_j \exp(\delta_{ij})][\sum_{j \in \vec{n}\cup \mathcal{J}_c} \exp(\delta_{ij})] - [\sum_{j \in \vec{n}} \pi_j \exp(\delta_{ij})][\sum_{j \in \vec{m}\cup \mathcal{J}_c} \exp(\delta_{ij})] \nonumber \\
= & \sum_{r = n+1}^m\pi_r \exp(\delta_{ir}) + [\sum_{j \in \vec{n}} \pi_j \exp(\delta_{ij}) + \sum_{r = n+1}^m \pi_r \exp(\delta_{ir})][\sum_{j \in \vec{n}\cup \mathcal{J}_c} \exp(\delta_{ij})] \nonumber \\
- & [\sum_{j \in \vec{n}} \pi_j \exp(\delta_{ij})][\sum_{j \in \vec{n}\cup \mathcal{J}_c} \exp(\delta_{ij}) +  \sum_{r = n+1}^m \exp(\delta_{ir})] \nonumber \\
= & \sum_{r = n+1}^m\pi_r \exp(\delta_{ir}) + [\sum_{r = n+1}^m  \pi_r \exp(\delta_{ir})][\sum_{j \in \vec{n}\cup \mathcal{J}_c} \exp(\delta_{ij})] - [\sum_{j \in \vec{n}} \pi_j \exp(\delta_{ij})][\sum_{r = n+1}^m \exp(\delta_{ir})].
\end{align}
Note for any $r \in [n+1,m]$, we have
\begin{align}
& \pi_r \exp(\delta_{ir}) +  \pi_r \exp(\delta_{ir})[\sum_{j \in \vec{n}\cup \mathcal{J}_c} \exp(\delta_{ij})] - [\sum_{j \in \vec{n}} \pi_j \exp(\delta_{ij})] \exp(\delta_{ir}) \nonumber \\
= & [\pi_r(1+\sum_{j \in \vec{n}\cup \mathcal{J}_c} \exp(\delta_{ij})) - \sum_{j \in \vec{n}} \pi_j \exp(\delta_{ij})]\exp(\delta_{ir})\nonumber \\
= & (\pi_r - E\Pi_{i,\vec{n}})(1+\sum_{j \in \vec{n}\cup \mathcal{J}_c} \exp(\delta_{ij}))\exp(\delta_{ir})\nonumber \\
\geq & (\pi_m - E\Pi_{i,\vec{n}})(1+\sum_{j \in \vec{n}\cup \mathcal{J}_c} \exp(\delta_{ij}))\exp(\delta_{ir}) \geq 0.
\end{align}
The last equality follows from the definition of 
$E \Pi_{i,\vec{n}}$, the first inequality follows from $\pi_r \geq \pi_m$ for any $r \leq m$, and the last inequality follows from $E \Pi_{i,\vec{n}} \leq \pi_m$ for any $n < m$. 

For any $k < m$, from Equation \eqref{eq:34}, we know
$\pi_k \exp(\delta_{ik}) - [\sum_{j \in \vec{m}} \pi_j \exp(\delta_{ij})] \exp(\delta_{ik}) + \pi_k \exp(\delta_{ik})[\sum_{j \in \vec{m}\cup \mathcal{J}_c} \exp(\delta_{ij})] \geq 0.$ So, we have $E \Pi_{i,\vec{m}} \geq E \Pi_{i,\vec{n} \backslash \mathbf{k}}$.
\end{proof} 

\section{Complete Lattice}
\label{Complete Lattice}
Given the foldable menu result, we consider a restricted set of assortments for each firm \( n \) given by: $\{\vec{1}_n, \ldots, \vec{J}_n\}$, where $\vec{j}_n$ represents the top $j_n$ most profitable products for firm $n$. The full set \(\mathcal{L}\) is then the Cartesian product:
\[
\mathcal{L} = \{\vec{1}_1, \ldots, \vec{J}_1\} \times \ldots \times \{\vec{1}_N, \ldots, \vec{J}_N\}.
\]
An element \(\ell \in \mathcal{L}\) is represented as: $\ell = (\ell_1, \ell_2, \ldots, \ell_N)$, where \(\ell_n \in \{\vec{1}_n, \ldots, \vec{J}_n\}\). We define a partial order on \(\mathcal{L}\) as:
\[
\ell \leq \ell' \quad \iff \quad \ell_n \subseteq \ell_n' \,\, \text{for all } n \in \{1, \ldots, N\}.
\]

\textbf{Supremum and Infimum}

The \textbf{supremum} (\textit{join}) of two elements \(\ell, \ell' \in \mathcal{L}\) is the component-wise union:
\[
\sup(\ell, \ell') = (\ell_1 \cup \ell_1', \ell_2 \cup \ell_2', \ldots, \ell_N \cup \ell_N').
\]
Since the sets \(\ell_n\) are nested, the union is always one of the elements in the set.

The \textbf{infimum} (\textit{meet}) of two elements \(\ell, \ell' \in \mathcal{L}\) is the component-wise intersection:
\[
\inf(\ell, \ell') = (\ell_1 \cap \ell_1', \ell_2 \cap \ell_2', \ldots, \ell_N \cap \ell_N').
\]
Similarly, the intersection is always one of the elements in the set.

Since every subset of \(\mathcal{L}\) has both a supremum and an infimum under the partial order defined by subset inclusion, \(\mathcal{L}\) forms a \textbf{complete lattice}.

\section{Simulations: Foldable Menu in Random Coefficient Models} \label{SimuRandCoeff}

Using the application for China's tobacco industry, we simulate 20 points of sale for each province-year combination based on income distributions. We then assign the corresponding estimated parameters \(\gamma\) and \(\theta\) to each point of sale. This results in a total of \(31 \times 6 \times 20 = 3720\) points of sale per simulation.

We use the variable \(n\) to denote the number of random draws for each point of sale and \(a\) to measure the magnitude of random taste variations relative to the constant taste shared by all consumers. For example, at a point of sale with mean income \(Inc_i\), the mean price sensitivity is given by \(\alpha_i = \theta / Inc_i\). The standard deviation of random price sensitivity is calculated as \(\sigma_{\alpha_i} = a \times \theta / Inc_i\), and consumer \(d\)'s random price sensitivity is expressed as \(\sigma_{\alpha_i} v_d\), where \(v_d \sim N(0,1)\). Similarly, the standard deviation of random utility is derived as \(\sigma_{\gamma_{jlt}} = a \times \gamma_{jlt}\).

We use \(n \in \{1,000, 2,000\}\), values that fall within the range commonly applied in studies comparing BLP estimation procedures. For instance, \cite{dube2012improving} employs \(n = 1,000\) or \(n = 3,000\), while \cite{pal2023comparing} uses \(n = 1,331\). For the parameter \(a\), we consider values \(a \in \{0.5, 1, 2, 5\}\), which span the range of estimates reported in prior studies. For example, in \cite{berry1995automobile}, the estimated parameters are \(\beta = \{-7.061, 2.883, 1.521, -0.122, 3.460\}\) and \(\sigma = \{3.612, 4.628, 1.818, 1.050, 2.056\}\) for the variables Constant, HP/Weight, Air, MP\$, and Size, respectively. Notably, the \(\beta\) coefficient for MP\$ is statistically insignificant. For the remaining four variables, the ratio \(|\sigma / \beta|\) ranges from 0.51 to 1.61, with a mean of 0.98. Similarly, in \cite{nevo2001measuring}, the ratio \(|\sigma / \beta|\) ranges from 0.09 to 1.99, with a mean of 0.52.

We present the performance of foldable menu assortments in Figure \ref{fig:perforfm}. For the chosen values of \(n\) and \(a\), the foldable menu's highest profit closely approximates the highest profit achievable from all assortments, as evidenced by the very small percentage loss in optimal profit.

\begin{figure}[ht]
   \centering
    \includegraphics[scale=0.3]{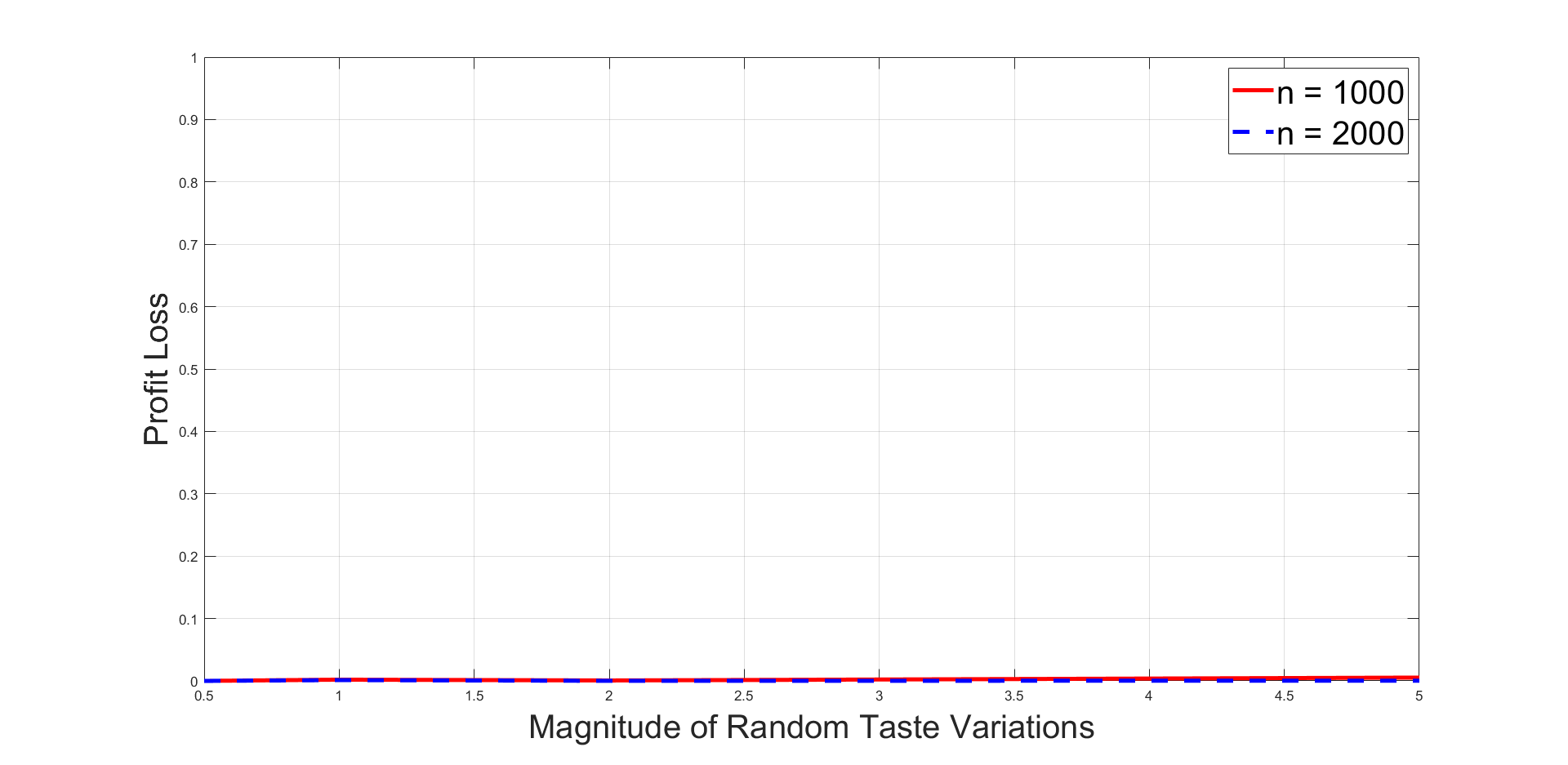} 
\caption{Performance of Foldable Menu Assortments. Notes: The \( x \)-values represent the magnitudes of random taste variations, taking values from \( \{0.5, 1, 2, 5\} \). The \( y \)-values represent the percentage loss in optimal profit, calculated as 1 minus the mean ratio of the foldable menu's highest profit to the highest profit achievable from all assortments.}
\label{fig:perforfm}
\end{figure}

\end{appendices}

\newpage

\bibliographystyle{chicago}
\bibliography{refmain}

\end{document}